\theoremstyle{plain}
\newtheorem{theorem}{Theorem}
\newtheorem{lemma}{Lemma}
\newtheorem{proposition}{Proposition}
\theoremstyle{definition}
\newtheorem{definition}{Definition}
\newtheorem{example}{Example}
\newtheorem{remark}{Remark}
\tikzstyle{vertex}=[circle,fill=black!10,minimum size=20pt,inner sep=0pt]
\tikzstyle{edge} = [draw,thick]
\tikzstyle{weight} = [font=\small]
\tikzstyle{redundant edge} = [draw,line width=5pt,-,red,opacity=.7]
\pgfplotsset{width=.3\linewidth,compat=1.9}
\DeclarePairedDelimiterX\Set[1]\{\}{%
  
  #1 }
\newcommand{\length}{\mathsf{length}}
\newcommand{\distance}{\mathsf{dist}}
\newcommand{\la}{\langle}
\newcommand{\ra}{\rangle}
\newcommand{\norm}[1]{\lVert#1\rVert}
\newcommand{\algorithmfootnote}[2][\footnotesize]{%
  \let\old@algocf@finish\@algocf@finish
  \def\@algocf@finish{\old@algocf@finish
    \leavevmode\rlap{\begin{minipage}{\linewidth}
    #1#2
    \end{minipage}}%
  }%
}
\colorlet{SL}{red!20!yellow!60!white}
\colorlet{SLline}{SL!80!black}
\begin{document}


\title{Semi-dynamic shortest-path tree algorithms for directed graphs with arbitrary weights 
}

\author{
Sanjiang Li$^{1}$ and Yongming Li$^{2}$\\
\small $^1$\,Centre for Quantum Software and Information,  \\
\small  University of Technology Sydney, Sydney, Australia\\
\small $^2$\,School of Computer Science,
Shaanxi Normal University, Xi'an, China}

\maketitle

\begin{abstract}
\noindent Given  a directed graph $G$ with arbitrary real-valued weights, the single source shortest-path problem (SSSP) asks for, given a source $s$ in $G$, finding a shortest path from $s$ to each vertex $v$ in $G$. A classical SSSP algorithm detects a negative cycle of $G$ or constructs a shortest-path tree (SPT) rooted at $s$ in $O(mn)$ time, where $m,n$ are the numbers of edges  and vertices in $G$ respectively. In many practical applications, new constraints come from time to time and we need to update the SPT frequently.  Given an SPT $T$ of $G$, suppose the weight on certain edge is modified. We show by rigorous proof that the well-known {\sf Ball-String} algorithm for positive weighted graphs can be adapted to solve the dynamic SPT problem for directed graphs with arbitrary weights. 
Let $n_0$ be the number of vertices that are affected (i.e., vertices that  have different distances from $s$ or different parents in the input and output SPTs) and $m_0$  the number of edges incident to an  affected vertex. The adapted algorithms terminate in $O(m_0+n_0 \log n_0)$ time, either detecting a negative cycle (only in the decremental case) or constructing a new SPT $T'$ for the updated graph. We show by an example that the output SPT $T'$ may have more than necessary edge changes to $T$. To remedy this, we give a general method for transforming $T'$ into an SPT with minimal edge changes in time $O(n_0)$ provided that $G$ has no cycles with zero length.
\end{abstract}



\section{Introduction}
The problem of finding shortest paths is a fundamental problem in computer science with numerous applications, e.g., in Internet routing protocols \cite{NarvaezST01}, network optimisation \cite{fredman1987fibonacci}, and temporal reasoning \cite{DechterMP91,Kong+18}. Let $G=(V,E)$ be a directed graph in which each edge $(u,v)\in E$ has an arbitrary real-valued (possibly negative) weight $\omega(u,v)$. A path $\pi$ in $G$ is a sequence of vertices $\la v_1, v_2, ..., v_k\ra$ such that,  $(v_i,v_{i+1})$ is an edge in $E$ for each $1\leq i < k$. The \emph{length} of $\pi$, denoted by $\length_G(\pi)$, is the sum of  the weights of all these edges. We call path $\pi$ a \emph{shortest path} from $x=v_1$ to $y=v_k$ if there is no path $\pi'$ from $x$ to $y$ which has a shorter length than $\pi$. In this case, we also call $\length_{G}(\pi)$ the \emph{distance} from $x$ to $y$. The \emph{single source shortest path} problem (SSSP) is the problem of, given a source $s$ in $G$, finding a shortest path from $s$ to $v$ for every vertex $v$. Analogously, we have the \emph{single destination shortest path} problem, which can be solved as an SSSP if we reverse the direction of each edge in $G$. The \emph{all pairs shortest path} problem (APSP) is the problem of finding  a shortest path from $u$ to $v$ for any pair $(u,v)$ of vertices in $G$, which can be solved by running SSSP for each vertex $s$ in $G$.

In this paper, we are mainly concerned with SSSP, which has been studied for more than 60 years but remains an active topic in computer science. The set of shortest paths for a single source $s$ can be compactly represented in a \emph{shortest-path tree} (SPT) $T$ rooted at $s$, which is a spanning tree of $G=(V,E)$ such that every tree path is a shortest path. Note that if negative weights exist in $G$ it is likely that  there is no shortest path for some vertex pairs. Thus, an SSSP algorithm for this general case either detects a negative cycle of $G$ or constructs an SPT rooted at $s$. Many algorithms have been devised in the literature. Two most well-known are the one devised by Dijkstra \cite{Dijkstra59}, which has worst-case time complexity  $O(m + n\log n)$ with Fibonacci heap \cite{fredman1987fibonacci}, and the Bellman-Ford algorithm \cite{Bellman58,Moore59}, which has worst-case time complexity $O(nm)$, where $m=|E|$ and $n=|V|$ are, respectively, the number of edges  and vertices in $G$. While Dijkstra's algorithm only applies to directed graphs with non-negative weights, Edmonds and Karp \cite{EdmondsK72} introduced a technique that can transform \emph{some} directed graphs with arbitrary weights to a non-negative weighted directed graph with the same shortest paths. Indeed, suppose we have a function $f: V \to \mathbb{R}$ such that $\omega_f(u,v) \equiv f(u)+\omega(u,v) - f(v) \geq 0$ for any edge $(u,v)\in E$ and write $G_f$ for the directed graph with the weight $\omega(u,v)$ replaced by $\omega_f(u,v)$ for each edge $(u,v)\in E$. Then $G$ and $G_f$ have the same set of shortest paths.

In many practical applications, new constraints come from time to time and we need to update the SPT frequently.  More precisely, an existing  edge may be deleted, or its weight may be updated (i.e., increased or decreased), or a new edge with real weight may be inserted. In what follows, we only consider weight updates, as edge deletion and insertion may be regarded as special cases of edge weight updates if we allow infinite edge weight. Assume that we already have constructed, using some SSSP algorithm, an SPT $T$ for $G$, and suppose the weight of an edge is updated.
To generate a new SPT of the updated graph, without doubt, we could run the SSSP algorithm again. This is, however, usually inefficient and unnecessary,  especially when we want to keep the topology of the existing SPT as much as possible in applications like Internet routing. We then ask: if we can generate a new SPT without running any static SSSP algorithm from scratch and preserve as much as possible the existing tree structure?

This problem, known as (one-shot) dynamic SSSP,  has been studied as early as 1975 \cite{SpiraP75} and attracted attention of researchers from various research areas, including networks, theoretical computer science, and artificial intelligence, see e.g.  \cite{Aioanei12,BernsteinC16,Buriol08,ChanY09, EvenS81,FranciosaFG97,FrigioniMN94,FrigioniMN96,FrigioniMN98,FrigioniMN00, Madkour+17,McQuillanRR95, NarvaezST00, NarvaezST01, PlankenWK12,RodittyZ11,  Xiao+07}. Many algorithms have been proposed to solve this problem.  In this paper, we call such an algorithm \emph{fully dynamic} if it can deal with both edge weight increase and decrease; and call it \emph{semi-dynamic} if it can deal with either edge weight increase or decrease but not both. A semi-dynamic SPT algorithm is called \emph{incremental}  (\emph{decremental}, resp.) if it can only deal with edge weight increase (decrease, resp.). 
Most existing (semi-)dynamic SPT algorithms are dynamic variants of the static Dijkstra's algorithm, and thus cannot be directly applied to solve the dynamic SPT problem for directed graphs with negative weights. 
Narv{\'{a}}ez et al. \cite{NarvaezST01} propose an elegant dynamic SPT algorithm, called {\sf Ball-String} henceforth, which always selects and extends the edge that leads to the minimum increase (or the maximum decrease) in path length. Compared with existing algorithms, it is more efficient and only makes much fewer edge changes to the existing SPT structure as it always tends to consolidate vertices in the whole branch instead of only one vertex.

The original {\sf Ball-String} algorithm \cite{NarvaezST01} only considers directed graphs with positive weights. In the literature, Ramalingam and Reps \cite{Ramalingam96,RamaR96} are the first authors who have considered the dynamic SSSP problem for directed graphs with arbitrary weights. Their idea is to first design Dijkstra-like semi-dynamic algorithms for directed graphs with positive weights and then solve the dynamic SSSP problem for directed graphs with arbitrary weights by using the technique introduced by Edmonds and Karp. Indeed, the original distance function $\distance_G: V\to \mathbb{R}$ is used to act as the translation function $f$, where $\distance_G(v)$ is the distance from $s$ to $v$ in $G$.  For directed graphs with nonzero-length cycles, their algorithms terminate in $O(\norm{\delta}+ |\delta| \log |\delta|)$ time per update, where $|\delta|$ is the number of vertices affected by the edge weight change $\delta$, and $\norm{\delta}$ is the number of affected vertices plus the number of edges incident to an affected vertex. When zero-length cycles present, they show that there are no semi-dynamic algorithms that are \emph{bounded} in terms of the output changes $|\delta|$ and $\norm{\delta}$.

To deal with  directed graphs with zero-length cycles, Frigioni et al. \cite{Frigioni1998fully,FrigioniMN03} propose new semi-dynamic SSSP algorithms with worst-time complexity $O(m\log n)$ per update, which, like \cite{NarvaezST01}, also use the minimum increase or maximum decrease as the search criterion to determine in which direction the change should be propagated. Demetrescu et al. \cite{demetrescu2000maintaining} present the first experimental study of the fully dynamic SSSP problem in directed graphs with arbitrary weights. It was shown there that all the considered dynamic algorithms (including that of Ramalingam and Reps) are faster by several orders of magnitude than recomputing from scratch with the best static algorithm. Similar experimental results were also reported in \cite{Buriol08}, where Buriol et al. introduced a technique to reduce the size of the priority queue used in a dynamical SSSP algorithm. The idea was actually also used in {\sf Ball-String} \cite{NarvaezST01}, which enqueues a vertex only if a better alternative path through it is found.   Inspired by the {\sf Ball-String} algorithm, Rai and Agarwal \cite{RaiA11}
present semi-dynamic algorithms for maintaining SPT in a directed graph with arbitrary weights. However, neither proof nor theoretical analysis were provided for the correctness and the computational complexity of their algorithms, which we believe are less efficient as they put every affected vertex in the priority queue and consolidate only one vertex in each iteration.

In this paper, we consider dynamic directed graphs with possibly negative weights and present two efficient semi-dynamic algorithms based on the {\sf Ball-String} algorithm. Our adapted algorithms either detect a negative cycle (only in the decremental case)  or generate a new SPT $T'$ in $O(m_0+n_0\log n_0)$ time, where $n_0$ is the number of affected vertices that have different distances from the source vertex or different parents in $T'$ and $T$ and $m_0$ is the number of edges that are incident to an affected vertex. Compared with running the best static SSSP algorithm (e.g., the  $O(mn)$ Bellman-Ford algorithm) from scratch or running the $O(m\log n)$ algorithms of Frigioni et al. \cite{Frigioni1998fully,FrigioniMN03}, this is more efficient. Moreover, unlike the algorithms of Ramalingam and Reps \cite{Ramalingam96,RamaR96}, our algorithms can deal with directed graphs with zero-length cycles and are more \emph{efficient} because they extract fewer vertices from the priority queue and always consolidate the whole branch when a vertex is extracted. We further note that, in   \cite{RamaR96} and \cite{FrigioniMN03}, the aim is to maintain \emph{all} shortest paths from the source vertex $s$, instead of a shortest-path tree from $s$, which contains only one shortest path from $s$ to any vertex. 

It is also worth noting that the correctness and the minimality of the output SPT of the adapted {\sf Ball-String} algorithms cannot be guaranteed by simply generalising the proof given for directed graphs with only positive weights in \cite{NarvaezST01} (cf. Remark~\ref{remark}  in page~\pageref{remark} of this paper for more details). 
When negative weights present, the SPT $T'$ output by the adapted {\sf Ball-String} algorithms  may have more than necessary edge changes.  Another contribution of this paper is a general method for transforming $T'$ into an SPT with minimal edge changes in time $O(n_0)$  provided that $G$ has no zero-length cycles.

The remainder of this paper is organised as follows: after some preliminaries and a brief introduction of the original {\sf Ball-String} algorithm in Section 2, we present the incremental and decremental SPT algorithms for directed graphs with arbitrary weights in, respectively, Sections 3 and  4, where rigorous proofs of the correctness and finer theoretical analysis of the output complexity of these algorithms are also given. The last section concludes the work.
  
\section{Preliminaries and related work}

Let $G=(V,E)$ be a weighted directed graph.  For any edge $e=(u,v)$ in $E$, we call $u,v$, respectively, the \emph{tail} and the \emph{head} of $e$. The weight of an edge $e=(u,v)$ in $E$ is written as $\omega({u,v})$. In applications, these weights may denote distances between two cities or costs between two routers and thus are usually non-negative. The simple temporal problem  \cite{DechterMP91} can also be represented as a weighted directed graph, where each vertex $x$ represents a time point $t_x$, and a weight $\omega(u,v)$ on the edge from vertex $u$ to $v$  specifies that the time difference of $u$ and $v$ is upper bounded by $\omega(u,v)$, i.e., $t_v-t_u \leq \omega(u,v)$. Clearly, weights in this case may be negative. 

In what follows, we assume that $s$ is a fixed vertex in $G$ and every vertex $v$ in $G$ is \emph{reachable} from $s$, i.e., there is a path from $s$ to $v$. We say a path $\pi=\la v_1, v_2, ..., v_k\ra$ in $G$ is a \emph{cycle} if $v_1=v_k$. A path is called \emph{simple} if it contains no cycle. A cycle in $G$ is \emph{negative} if it has negative length. Similarly, a 0-cycle is a cycle with zero length.

For any two vertices $x$ and $y$, if there is a path from $x$ to $y$ in $G$, then there is a shortest path from $x$  to $y$ if and only if no path from $x$  to $y$  contains a negative cycle, and, if there is any shortest path from $s$  to $t$, there is one that is simple (cf. \cite{Tarjan83}).

\begin{figure}[htbp]
\centering
\begin{tabular}{ccc}
\includegraphics[width=.28\textwidth]{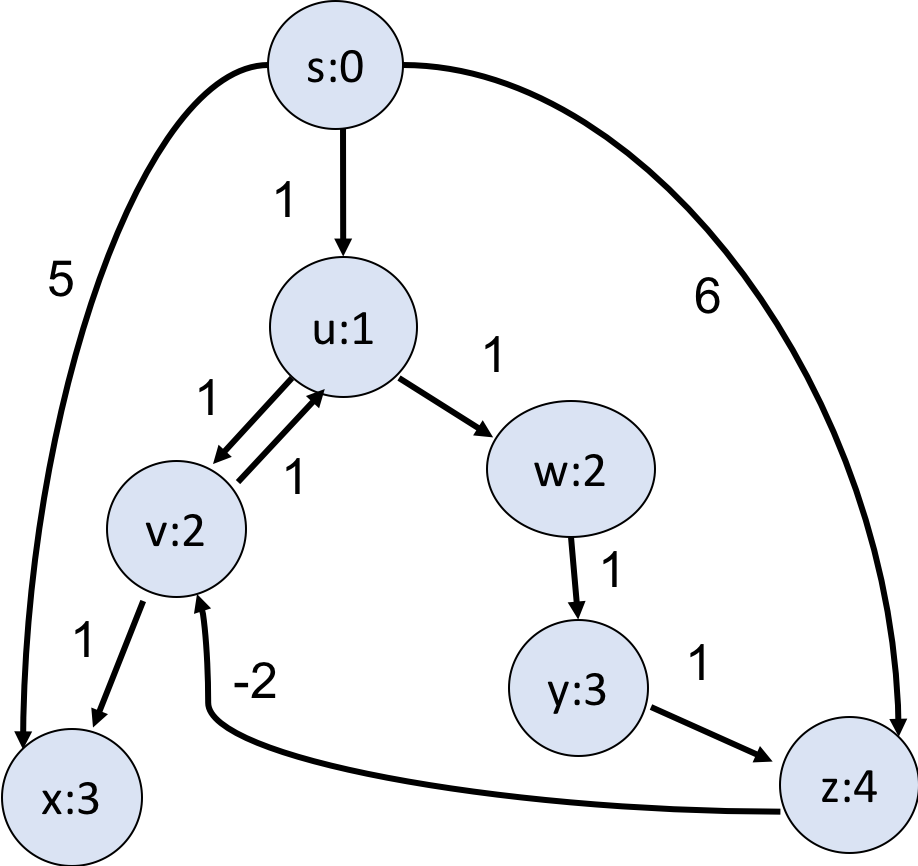}
&& 
\includegraphics[width=.28\textwidth]{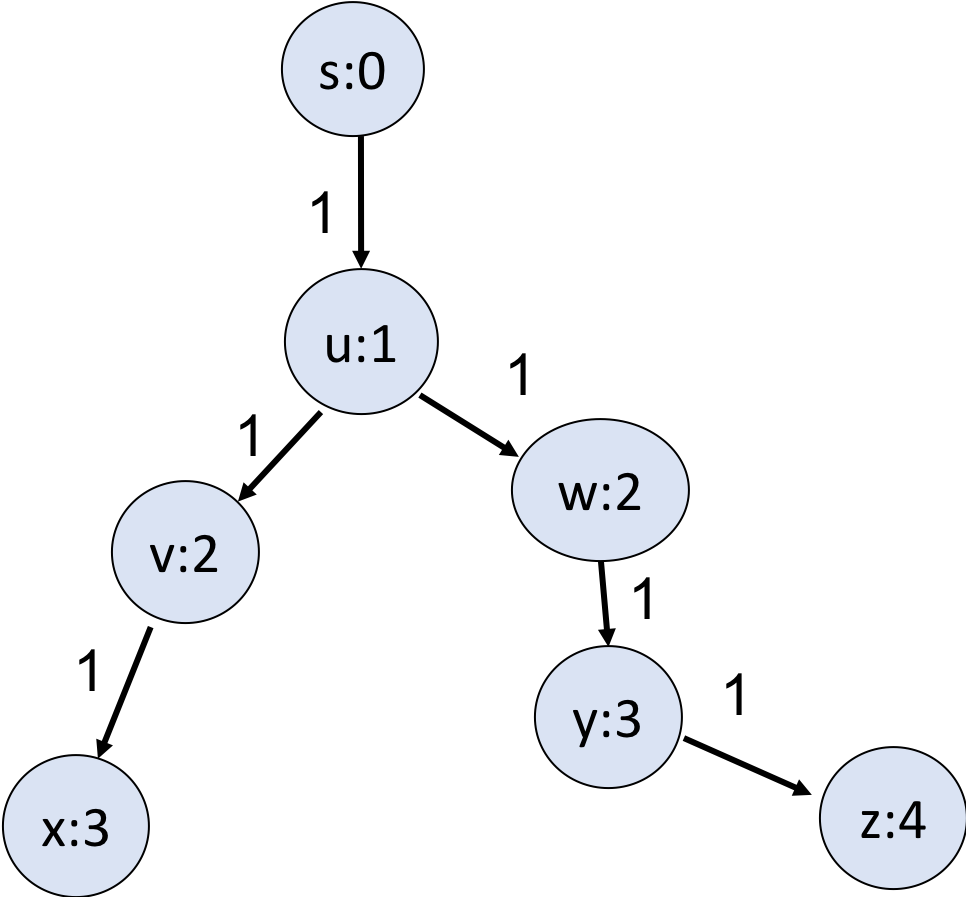}
\end{tabular}
\caption{A directed graph with negative weights (left) and its SPT (right), where we write, for example, $u:1$ to denote that the distance of vertex $u$ is 1.}  
\label{fig:digraph}
\end{figure}
A shortest-path tree (SPT) of $G$ is a spanning tree $T$ rooted at $s$  such that each path of $T$ from $s$ is a shortest path in $G$ (see Figure~\ref{fig:digraph} (right) for an illustration). SPT is a compact way to encode shortest path information for a single source $s$, and $G$ contains shortest paths from $s$ to all other vertices if and only if $G$ contains an SPT rooted at $s$ \cite{Tarjan83}. Clearly, an STP does not encode every shortest path from $s$. For example, in Figure~\ref{fig:digraph}, $\pi_1=\la s,u,v\ra$ and $\pi_2=\la s,u,w,y,z,v\ra$ are two shortest paths from $s$ to $v$ and only $\pi_1$ is encoded in the SPT. 

\subsection{Our task}

Let $G=(V, E)$ be a directed graph whose edges have real-valued (possibly negative) weights $\omega(u,v)$ for each $(u,v)\in E$. Suppose $T$  is an SPT of $G$ rooted at $s \in V$  and $e_0=(x_0, y_0) \in E$ is an edge in $G$ whose weight $\omega(x_0,y_0)$ has been changed to $\omega'(x_0,y_0)$.  Let $G'$ be the directed graph obtained by replacing  $\omega(x_0,y_0)$ on $(x_0,y_0)$ with $\omega'(x_0,y_0)$. 

Our task is to construct a  shortest-path tree $T'$ of $G'$ without running the SSSP algorithm from scratch. Note that there are two cases: one is weight increment, the other is weight decrease.  In the latter case, it is possible that the updated weighted graph has a negative cycle. For example, if the weight of edge $(v,u)$ in Fig.~\ref{fig:digraph} is decreased from 1 to -2, then the cycle $\pi=(u,w,y,z,v,u)$ has revised length -1. In what follows, we say a directed graph is \emph{inconsistent} if it has a negative cycle.

\subsection{Notations}
For $u,v\in V$, we write $\distance_G{(u,v)}$ for the distance from $u$ to $v$ in $G$, and write $\distance_{G'}{(u,v)}$ for the distance from $u$ to $v$ in $G'$ if it exists. Since $s$ is the fixed source, we also write $\distance_G(v)$ and $\distance_{G'}(v)$ for $\distance_G{(s,v)}$ and $\distance_{G'}{(s,v)}$, respectively,  if $s$ is known. For a path $\pi= \la u_0, u_1, ..., u_i, ..., u_j, ..., u_k \ra$, we write $\pi[u_i, u_j]$ for the subpath $\la u_i, ..., u_j \ra$. The length of $\pi$ in $G$, written as $\length_G(\pi)$, is defined as the sum $\omega_{u_0,u_1} + \omega_{u_1,u_2} + ... + \omega_{u_{k-1},u_k}$. Analogously, we define $\length_{G'}(\pi)$, the length of $\pi$ in $G'$. Given two paths $\pi_1= \la u_0, u_1, ..., u_k \ra$ and $\pi_2= \la u'_0, u'_1, ..., u'_{k'} \ra$, if $u_k= u'_0$, then we write $\pi_1+\pi_2$ for their concatenation $\la u_0, u_1, ..., u_k=u'_0, ..., u'_{k'} \ra$. In particular, if $\pi_2=\la u'_0,u'_1\ra$ is an edge, the we write $\pi_1+(u'_0,u'_1)$ for the concatenation.   

The following result is simple but useful.
\begin{lemma}\label{lem:len_paths}
Let $\pi$ be a path in $G$. Then $\length_{G'}(\pi) = \length_G(\pi)$ if $e_0=(x_0,y_0)$ is not in $\pi$ and $\length_{G'}(\pi) = \length_G(\pi) + \omega'(x_0,y_0) - \omega(x_0,y_0)$ if $e_0$ is in $\pi$.
\end{lemma}

Suppose $T$ is the fixed SPT of $G$ before updating. For any vertex $v\in V$, we write $d_T(v)$ and $\pi_{s,v}$   for, respectively, the distance and the path from $s$ to $v$ in $T$ and write $\downarrow\!{v}$ ($\uparrow\!{v}$, resp.) for the set of descendants (ancestors, resp.) of $v$ in $T$. In this paper we require that $v$ is contained in both $\downarrow\!{v}$ and $\uparrow\!{v}$.

\subsection{The {\sf Ball-String} algorithm }

The {\sf Ball-String} algorithm was first described in \cite{NarvaezST01} for dynamic graphs with only positive weights. When negative weights are allowed and the weight of an edge is decreased, the algorithm cannot be directly applied as there may exist negative cycles in the updated graph. We thus need to introduce procedures for checking negative cycles. Moreover, as we will show in section 3, the algorithm does not always output an SPT with minimal edge changes even in the incremental case. 

Suppose  $e_0 = (x_0, y_0)$ is an edge in $G$ whose weight $\omega(x_0,y_0)$ has been changed to $\omega'(x_0,y_0)$. Given an SPT $T$ of $G$, the {\sf Ball-String} algorithm \cite{NarvaezST01} constructs an SPT $T'$ of $G'$ step by step from $T$: in each step it changes the parent of only one vertex. Recall that Dijkstra's algorithm maintains a priority queue (i.e., a heap) $Q$ of vertices, each vertex in $Q$ associated with a tentative distance,  and, in each step, it extracts from $Q$ the vertex with the smallest tentative distance. Similarly, the {\sf Ball-String} algorithm also maintains a priority queue $Q$ of vertices, but here each vertex $y$ is associated with a pair $(x, \Delta)$, where $x$ is the tentative parent of $y$ and $\Delta$ is the difference of the lengths of the tentative path $\widehat{\pi}_{s,x}+(x,y)$  and the baseline path $\pi_{s,y}$. Note here we write $\widehat{\pi}_{s,x}$ and $\pi_{s,y}$ for, respectively,  the path from $s$ to $x$ in the current spanning tree and the path from $s$ to $y$ in $T$. In each iteration, it extracts from $Q$ the vertex $y$ which has the smallest increase (or the largest decrease), updates the parent of $y$ as $x$, consolidates the descendants of $y$  (including $y$ itself) in the current spanning tree, and adds  in $Q$ new vertices which have tentative parents that have just been consolidated.


We first formalise the notion of the $\Delta$ value, which was implicitly described in \cite{NarvaezST01}.

\begin{definition}
\label{dfn:dominate}
Given a weighted directed graph $G=(V,E)$, let $T$ be an SPT of $G$ at root $s\in V$. Suppose $\omega(x_0,y_0)$ has been changed to $\omega'(x_0,y_0)$ for some $e_0=(x_0,y_0)$. Let $\widehat{T}$ be the current tree constructed step by step from  $T$ by changing the parent of only one vertex in each step. For any edge $e=(x,y)$ in $E$, we define 
\begin{align} \label{eq:Delta(e)}
\Delta_{\widehat{T}}(e) = d_{\widehat{T}}(x) + \omega^*(e) - \distance_G(y),
\end{align} 
where $\omega^*(e_0) = \omega'(e_0)$ and $\omega^*(e) = \omega(e)$ if $e\not=e_0$, $d_{\widehat{T}}(x)$  is the distance from root $s$ to $x$ in the current tree $\widehat{T}$, and $\distance_G(y)$ is the distance from $s$ to $y$ in $G$.
\end{definition}
Suppose $(x,y)\in E$ and $x$ is a consolidated vertex but $y$ is not. Here a vertex $x$ is consolidated if a shortest path from $s$ to $x$, as well the real distance of $x$, in $G'$ has been found in previous iterations and that will not change in the following iterations.  Let $\widehat{\pi}_{s,x}$ be the path from $s$ to $x$ in the current spanning tree $\widehat{T}$ and $\pi_{s,y}$ the path from $s$ to $y$ in $T$. We select $\pi_{s,y}$ as the \emph{baseline path} and compare if $\widehat{\pi}_{s,x}+(x,y)$ is better than $\pi_{s,y}$ in $G'$. By Lemma~\ref{lem:len_paths},
in the incremental case, $\pi_{s,y}$ has length $\distance_G(y) + \omega'(x_0,y_0) - \omega(x_0,y_0)$ in $G'$, provided that $e_0$ is an edge in $T$ and $y$ is a descendant of $y_0$; in the decremental case, it has length $\distance_G(y)$ in $G'$, provided that $y$ is not a descendant of $y_0$.
If the path $\widehat{\pi}_{s,x}+(x,y)$ is better than $\pi_{s,y}$ in $G'$, then we put $\la y, (x,\Delta_{\widehat{T}}(e)\ra$ in $Q$ if $y$ is not there or update the $\Delta$ value of $y$ in $Q$ by $\Delta_{\widehat{T}}(e)$ if it is smaller.  In each iteration, the algorithm selects the vertex $y$ with the smallest $\Delta$ value to extract. When there are two vertices with the same smallest $\Delta$ value, it selects the one which has a smaller current distance from $s$. If more than one vertex has the same smallest $\Delta$ value and the smallest distance from $s$, then it selects the one which is closer to the root. If there are still ties, pick any vertex to extract. 

Two different vertices in the queue $Q$ may be added or modified in different iterations. Suppose $y$ is modified in iteration $i$ but not later. Let $T_i$ and $T_j$ ($j>i$) be the spanning tree constructed in iterations $i$ and $j$ respectively. As will become clear in sections 3 and 4, the paths from $s$ to $y$ in $T_i$ and $T_j$ are the same. The following lemma guarantees that the $\Delta$ value of $y$ does not change even if we reevaluate it using $T_{j}$ instead of $T_i$.  

\begin{lemma} \label{lem:2trees}
Suppose $T'$, $T''$ are two spanning trees of $G'$ and $(x,y)$ is an edge in $G'$. If the paths from $s$ to $x$ in $T'$ and $T''$ are identical, then $d_{T'}(x) = d_{T''}(x)$ and $\Delta_{T'}(e) = \Delta_{T''}(e)$, where $d_{T'}(x)$ and $d_{T''}(x)$ are, respectively, the distances from $s$ to $x$ in $T'$ and $T''$.
\end{lemma}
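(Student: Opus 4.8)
The plan is to simply unwind the definitions, since the only quantity appearing in the relevant formulas that depends on the spanning tree is the tree-distance $d_{\widehat{T}}(x)$, and this quantity is itself just the $G'$-length of the (unique) tree path from $s$ to $x$.

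First I would recall that, for any spanning tree $\widehat{T}$ of $G'$, there is a unique path in $\widehat{T}$ from $s$ to any vertex, and $d_{\widehat{T}}(x)$ is by definition the $G'$-length of that path. Hence, writing $P'$ and $P''$ for the paths from $s$ to $x$ in $T'$ and $T''$ respectively, we have $d_{T'}(x)=\length_{G'}(P')$ and $d_{T''}(x)=\length_{G'}(P'')$. Since $\length_{G'}$ of a path depends only on the sequence of edges composing it, and by hypothesis $P'$ and $P''$ coincide as sequences of vertices (hence of edges), it follows immediately that $d_{T'}(x)=d_{T''}(x)$.

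For the second claim, I would substitute this equality into the defining formula \eqref{eq:Delta(e)}, namely $\Delta_{\widehat{T}}(e)=d_{\widehat{T}}(x)+\omega^*(e)-\distance_G(y)$. The two remaining terms $\omega^*(e)$ and $\distance_G(y)$ depend only on the edge $e=(x,y)$, the modified weight $\omega'(x_0,y_0)$, and the original graph $G$, none of which involves $\widehat{T}$. Taking $\widehat{T}=T'$ and $\widehat{T}=T''$ in turn and using $d_{T'}(x)=d_{T''}(x)$ then yields $\Delta_{T'}(e)=\Delta_{T''}(e)$, as required.

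There is essentially no technical obstacle here: the lemma is a bookkeeping statement whose only content is that the $\Delta$-value of an edge leaving an already-settled vertex $x$ is a function of the $s$-to-$x$ tree path alone, and is therefore stable under any later tree modifications that leave that path untouched. The one point that deserves a word of care is the companion fact, to be established in Sections 3 and 4, that the algorithm really does leave that path untouched once $y$ has been modified for the last time; but that is a property of the algorithm proved there, not part of the present lemma.
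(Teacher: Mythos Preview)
Your proof is correct and is exactly the natural argument: the paper in fact states Lemma~\ref{lem:2trees} without proof, treating it as an immediate consequence of the definitions, and your unwinding of $d_{\widehat{T}}(x)$ and Eq.~\eqref{eq:Delta(e)} supplies precisely that missing routine verification. Your closing remark correctly separates what the lemma asserts from the algorithmic invariant (established later) that makes the lemma useful.
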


Suppose $y$ is associated with the pair $(x,\Delta)$ when it is enqueued in (extracted from) $Q$. For convenience,  we say interchangeably that the edge $(x,y)$ is enqueued (extracted).

\section{The incremental case}
Given a weighted directed graph $G=(V,E)$, let $T$ be an SPT of $G$ at root $s\in V$. Suppose $e_0=(x_0,y_0)$ is an edge in $G$ and its weight has been increased from $\omega(e_0)$ to $\omega'(e_0)$. Write $\theta = \omega'(e_0) - \omega(e_0)$. Then $\theta>0$.  Let $G'=(V,E')$ be the graph obtained by replacing the weight of $e_0$ with $\omega'(e_0)$.  

\subsection{Description of the algorithm}
We present Algorithm~\ref{alg:+} for constructing a new SPT of $G'$, which is in essence the {\sf Ball-String} algorithm of \cite{NarvaezST01} for the incremental case.

We first observe the following simple result for the incremental case.
\begin{lemma}\label{lem3}
Suppose $e_0=(x_0,y_0)$ is an edge in $G$ and its weight has been increased from $\omega(e_0)$ to $\omega'(e_0)$.
For any path $\pi$  in $G$, we have $\length_{G}(\pi) \leq \length_{G'}(\pi) \leq \length_{G}(\pi) + \theta$, where $\theta=\omega'(e_0) - \omega(e_0)$. In particular, $G'$ has no negative cycle. 
If $e_0$ is not in $T$, then $T$ remains an SPT of $G'$. If $e_0$ is in $T$, then, for each vertex $v$ that is not an descendant of $y_0$, the path $\pi_{s,v}$ from $s$ to $v$ in $T$ remains a shortest path. 
\end{lemma}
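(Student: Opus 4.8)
The plan is to reduce the entire statement to Lemma~\ref{lem:len_paths} together with the elementary fact that raising a single edge weight can only lengthen a path. First I would establish the sandwich $\length_G(\pi)\le\length_{G'}(\pi)\le\length_G(\pi)+\theta$. Fix a path $\pi$ in $G$. If $e_0$ does not occur on $\pi$ then Lemma~\ref{lem:len_paths} gives $\length_{G'}(\pi)=\length_G(\pi)$; if it occurs (exactly once, when $\pi$ is simple) then Lemma~\ref{lem:len_paths} gives $\length_{G'}(\pi)=\length_G(\pi)+\theta$. Since $\theta>0$, both bounds follow for every simple path, while the lower bound $\length_G(\pi)\le\length_{G'}(\pi)$ holds for an arbitrary path (each additional traversal of $e_0$ only contributes $\theta>0$). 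For ``$G'$ has no negative cycle'', I would first note that $G$ itself has none: every vertex of $G$ is reachable from $s$, so a negative cycle of $G$ would yield a path from $s$ to one of its vertices that contains that cycle, contradicting --- via the standard fact recalled in the preliminaries --- the existence of a shortest path from $s$ to that vertex, hence of the SPT $T$. Consequently every cycle $C$ of $G'$ satisfies $\length_{G'}(C)\ge\length_G(C)\ge 0$, so $G'$ is consistent.

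Next I would treat the case $e_0\notin T$. Every tree path $\pi_{s,v}$ uses tree edges only, so $e_0\notin\pi_{s,v}$ and therefore $\length_{G'}(\pi_{s,v})=\length_G(\pi_{s,v})=\distance_G(v)$. For any path $\pi$ from $s$ to $v$ in $G'$, which may be taken simple, the sandwich gives $\length_{G'}(\pi)\ge\length_G(\pi)\ge\distance_G(v)=\length_{G'}(\pi_{s,v})$, so $\pi_{s,v}$ is still a shortest path in $G'$. Letting $v$ range over $V$ shows that $T$ remains an SPT of $G'$.

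For the final case $e_0=(x_0,y_0)\in T$, the crucial structural observation is that, since $x_0$ is the $T$-parent of $y_0$, a tree path starting at $s$ meets $y_0$ only if it terminates inside the subtree rooted at $y_0$; equivalently, $v\notin\,\downarrow\!y_0$ forces $y_0\notin\pi_{s,v}$ and hence $e_0\notin\pi_{s,v}$. Given this, the computation of the previous paragraph applies verbatim to such a $v$: $\length_{G'}(\pi_{s,v})=\distance_G(v)$ and $\length_{G'}(\pi)\ge\distance_G(v)$ for every path $\pi$ from $s$ to $v$ in $G'$, so $\pi_{s,v}$ stays shortest.

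I do not expect a genuine obstacle, since the whole lemma is essentially a corollary of Lemma~\ref{lem:len_paths}. The two points that need a little care are (i) the edge-multiplicity issue in the \emph{upper} bound, which I would dispose of by restricting attention to simple paths (harmless, as shortest paths and the cycles relevant to consistency can always be taken simple); and (ii) making explicit why ``$v$ is not a descendant of $y_0$'' forces $\pi_{s,v}$ to avoid $e_0$ --- the only ingredient is the uniqueness of the entry edge $(x_0,y_0)$ into the subtree below $y_0$ in the tree $T$.
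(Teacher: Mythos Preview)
Your proposal is correct. The paper does not supply a proof for this lemma at all --- it is stated as a ``simple result'' and used immediately --- so there is no approach to compare against; your reduction to Lemma~\ref{lem:len_paths} is the natural argument, and your handling of the edge-multiplicity subtlety in the upper bound (restricting to simple paths, which suffices for all later uses) is exactly the right caveat.
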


By Lemma~\ref{lem3}, we only need to consider the case when $e_0$ is in $T$. In this case,  for any vertex $x$ outside $\downarrow\!{y_0}$, the path $\pi_{s,v}$ from $s$ to $v$ in $T$ remains a shortest path. Therefore, only vertices in $\downarrow\!{y_0}$ need to be examined.

In the initialisation iteration (lines~\ref{line:3+}-\ref{line:6+}), we write $X_0= V\setminus \downarrow\!{y_0}$, $Y_0=\ \downarrow\!{y_0}$ and set $Q=\varnothing$. We then check (lines~\ref{line:7+}-\ref{line:10+}) if the tree needs modification by examining all edges from $X_0$ to $Y_0$: if any such edge $(x,y)\not=e_0$ leads to a path $\pi \equiv \widehat{\pi}_{s,x}+(x,y)$ that is better than the baseline path $\pi_{s,y}$, then we put $\la y, (x, \Delta_{T}(x,y))\ra$ in $Q$, where $\widehat{\pi}_{s,x}$ is the path from $s$ to $x$ in the current spanning tree (viz., $T$), and 
$$\Delta_{T}(x,y) = \distance_T(x) + \omega(x,y) - \distance_G(y) =  \length_{G'}(\pi) - \length_{G}(\pi_{s,y}).$$ 
Because $\length_{G'}(\pi)=\length_{G}(\pi)\!=\!\distance_G(x)+\omega(x,y)$ and $\length_{G'}(\pi_{s,y})=\length_{G}(\pi_{s,y})$ $+\theta=\distance_G(y)+\theta$ (by Lemma~\ref{lem:len_paths}), we have $$\length_{G'}(\pi) < \length_{G'}(\pi_{s,y})\ \mbox{iff}\ \distance_G(x)+\omega(x,y)<\distance_G(y)+\theta\ \mbox{iff}\ \Delta_{T}(x,y)< \theta.$$ If there is no edge $(x,y)$ from $X_0$ to $Y_0$ satisfying $\Delta_{T}(x,y)<\theta$, then $Q$ is empty and we declare that $T$ remains an SPT of $G'$ and return $T$ as the output. 

Suppose $Q$ is nonempty. In the while loop (lines~\ref{line:11+}-\ref{line:24+}), the algorithm modifies $T_0=T$ step by step and, in each iteration, select  one vertex $\langle y, (x,\Delta)\rangle$ (or, equivalently, select the edge $(x,y)$) with the minimum $\Delta$  from $Q$ and replace the parent of $y$ as $x$ in the current spanning tree. Suppose edge $e_i=(x_i,y_i)$ is selected from $Q$ in the $i$-th iteration. Let $T_{i-1}$ be the spanning tree  of $G'$ constructed in the $(i-1)$-th iteration. We construct $T_{i}$ by changing the parent of $y_{i}$ as $x_{i}$ from its predecessor $T_{i-1}$. Then, we set $X$ as the descendants of $y_i$ in $T_i$ and consolidate all vertices in $X$ in this iteration. Let $Y$ be the set of vertices that remain unsettled. 
For each edge $e=(x,y)$ in $G$ with $x \in X$ and $y \in Y$, we define 
$$\Delta_{T_{i}}(e) = d_{T_{i}}(x) + \omega(e) - \distance_G(y)$$ 
as in \eqref{eq:Delta(e)}, where $d_{T_{i}}(x)$ represents the length of the path from $s$ to $x$ in the current spanning tree $T_{i}$, which, as we shall prove in Proposition~\ref{prop:Vi}, is a shortest path in $G'$. We then examine if we can extend this shortest path to $y$ through $e$. As before, the value $\Delta_{T_{i}}(e)$ represents the difference between such a new candidate path via $e$ to the original shortest path $\pi_{s,y}$ in $T$. Apparently, the smaller $\Delta_{T_{i}}(e)$ is the better. If $\Delta_{T_{i}}(e)\geq \theta$, then it is impossible to get a path via $e$ that is better than the baseline path $\pi_{s,y}$. In case $\Delta_{T_{i}}(e) < \theta$, there are two subcases: if  $y$ is not in $Q$, then we put $\la y, (x, \Delta_{T_{i}}(e)) \ra$ in $Q$; if $y$ is in $Q$, then we update the $\Delta$ value of $y$ in $Q$ with $\Delta_{T_{i}}(e)$ if the latter is smaller. 

The algorithm continues in this way until $Q$ becomes empty. Note that, immediately after $e_i=(x_i,y_i)$ is extracted, $y_i$ and all its descendants in the current tree are consolidated (line~\ref{line:17+}) and removed (line~\ref{line:18+}) from the queue $Q$ and $Y$ (the set of unsettled vertices)   if they are there. Since lines~\ref{line:19+}-\ref{line:24+} only examine edges from newly consolidated  vertices to unsettled ones, $y_i$, as well as any other consolidated vertex, will not be put in $Q$ again. As a consequence, the algorithm will stop in at most $n-1$ iterations, where $n$ is the number of vertices in $G$.

The most important advantage of the {\sf Ball-String} algorithm lies in that,  in each iteration, it consolidates the \emph{whole branch} of the extracted vertex (in the current spanning tree) instead of only the extracted vertex.

\begin{figure}[htbp]
\centering
\begin{tabular}{ccc}
\includegraphics[width=.3\textwidth]{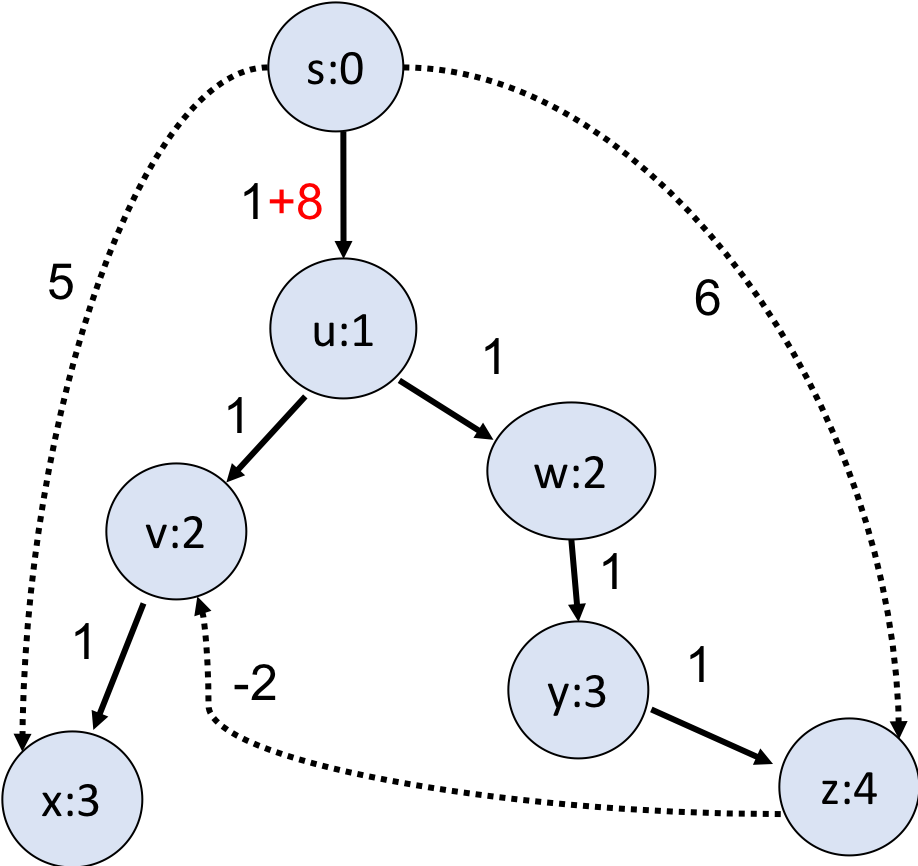}
&
\includegraphics[width=.3\textwidth]{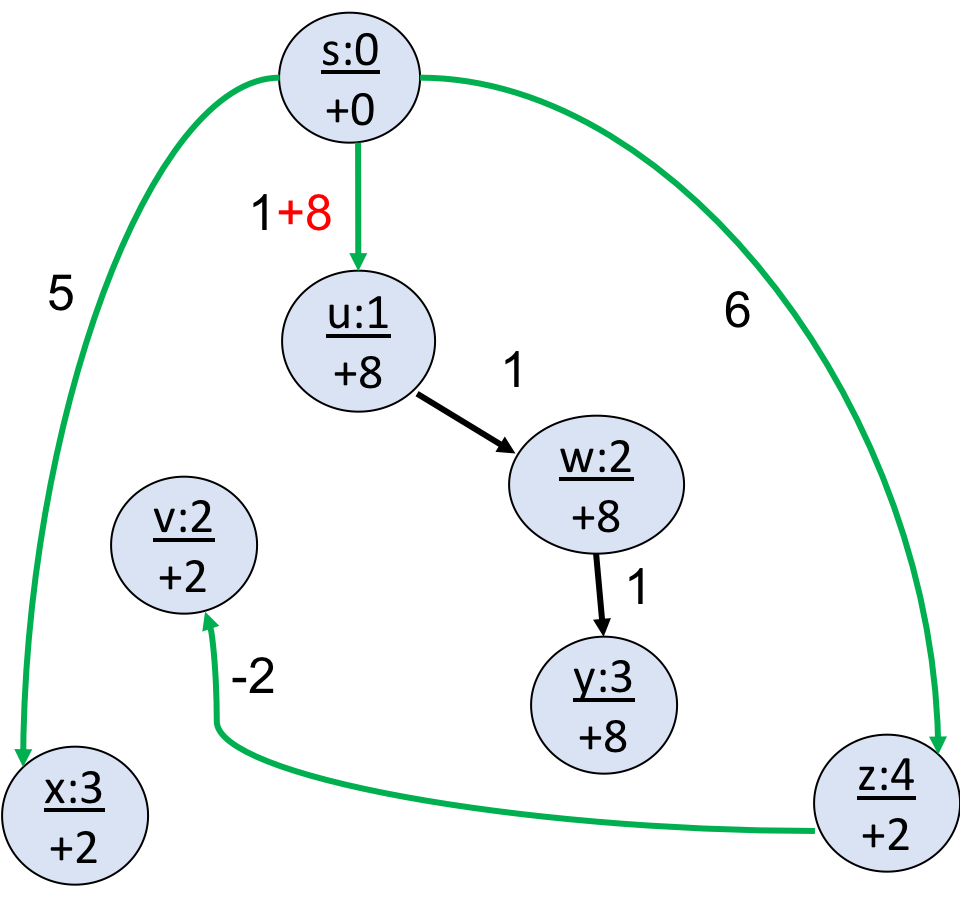}
& 
\includegraphics[width=.3\textwidth]{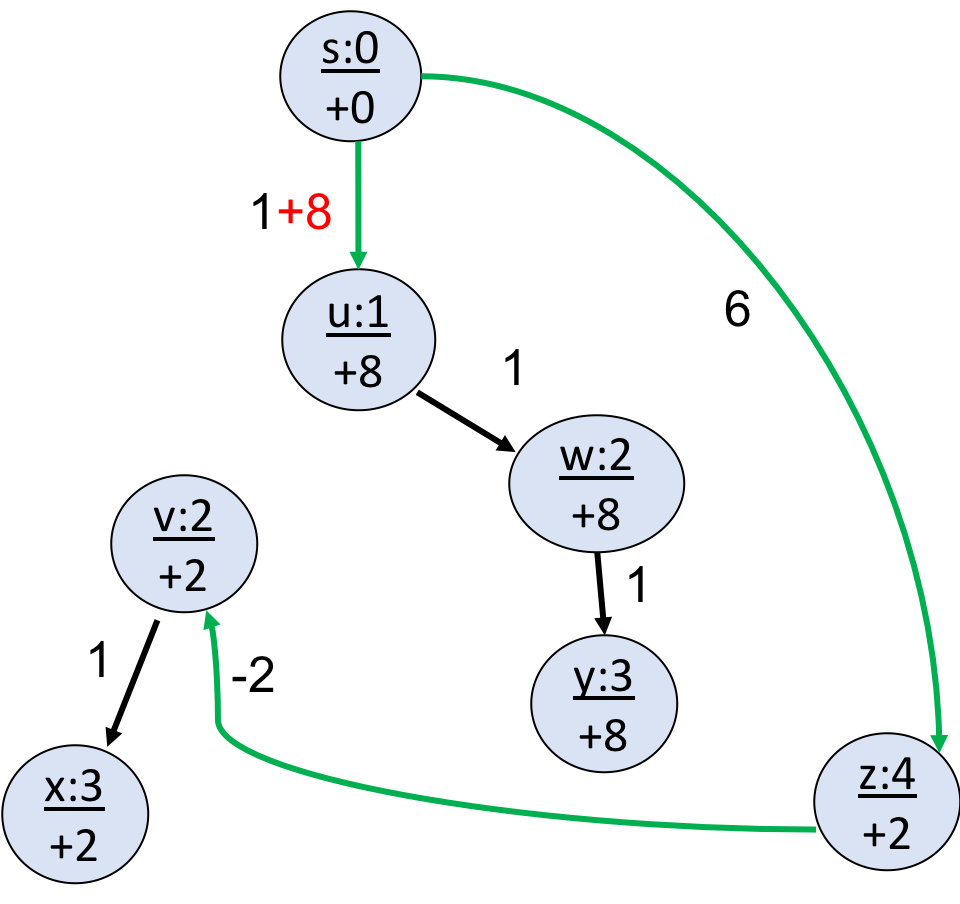}
\end{tabular}
\caption{A directed graph $G$ with weight $\omega(s,u)$ increased from 1 to 8 (left), the SPT of $G'$ obtained by Algorithm~\ref{alg:+} (centre), and the STP of $G'$ with minimal edge changes (right), where the original SPT of $G$ is illustrated in solid lines in the left, and we write, for example, $u:1$ and $\frac{u:1}{+8}$, to denote that the original distance of vertex $u$ is 1 and its updated distance is increased by 8.}  
\label{fig:graphA00}
\end{figure}

\begin{example}\label{example1}
Figure~\ref{fig:graphA00} shows an example, where $V=\{s,u,v,w,x,y,z\}$ and the weight of edge $e_0=(s,u)$ is increased from 1 to 9. At first, we have $X_0=\{s\}$ (the initial set of consolidated vertices), $Y_0=\ \downarrow\!u = \{u,v,w,x,y,z\}$ (the initial set of unsettled vertices). Consider all edges from $X_0$ to $Y_0$ except $e_0$. We have $\Delta_{T}(s,x)=5-3=2$, $\Delta_{T}(s,z)=6-4=2$. Since both are smaller than 8, the increment of edge $e_0$,  we put $\la x, (s,2)\ra$ and $\la z, (s,2)\ra$ in $Q$. Since the current lengths of $x,z$ are 5 and 6, respectively,  in the first iteration, we extract $\la x, (s,2)\ra$ and move the $x$ branch (which only contains $x$ itself) directly under $s$. The resulted spanning tree is $T_1$ and we have $X_1=\{x\}$ (the set of vertices consolidated in the first iteration), $Y_1=\{u,v,w,y,z\}$ (the set of unsettled vertices after the first iteration). Since there are no edges from $X_1$ to $Y_1$, we add no new vertices to $Q$. Then, in the second iteration,  we extract $\la z, (s,2)\ra$, which is the only vertex in $Q$, and move the $z$ branch of $T_1$ (which only contains $z$ itself) directly under $s$. The resulted spanning tree is $T_2$ and we have $X_2=\{z\}$ and $Y_2=\{u,v,w,y\}$. Since $(z,v)$ is the only edge from $X_2$ (the set of vertices consolidated in the second iteration) to $Y_2$  (the set of unsettled vertices after the second iteration), we examine it and find $\Delta_{T_2}(z,v) = (6-2) - 2 = 2<8$. We put $\la v,(z,2)\ra$ in $Q$. In the third iteration, we extract $\la v,(z,2)\ra$, which is the only vertex in the current $Q$,  and move the $v$ branch of $T_2$ (which only contains $v$ itself) under $z$. We have $X_3=\{v\}$  (the set of vertices consolidated in the third iteration) and $Y_3=\{u,w,y\}$ (the set of unsettled vertices after the second iteration). The resulted spanning tree is $T_3$ and we have an empty queue $Q$. Thus, the algorithm returns $T_3$ as output, which is shown in Figure~\ref{fig:graphA00} (centre). It is easy to see that $T_3$ is an SPT of $G'$. This SPT, however, is not the optimal one. In the right of Figure~\ref{fig:graphA00}, we give an SPT of $G'$ with minimal edge changes.
\end{example}

\begin{algorithm}
   \DontPrintSemicolon%
   \SetKwInOut{Input}{Input}%
   \SetKwInOut{Output}{Output}%
   \SetKw{KwSt}{s.t.}%
     \Input{%
       A shortest-path tree $T$ of $G=(V,E)$ and an edge $e_0=(x_0,y_0)$ with a larger weight $\omega'(x_0,y_0) > \omega(x_0,y_0)$%
     }%
     \Output{%
      A shortest-path tree of $G'=(V,E')$, where $E'$ differs from $E$ only in that the weight of $e_0$ is increased from $\omega(x_0,y_0)$ to $\omega'(x_0,y_0)$.%
     }%
     \BlankLine%
     \lIf{$e_0\not\in T$}{\Return $T$ \label{line:1+}}\Else{  
    $\Delta(y_0) \gets \omega'(x_0,y_0) - \omega(x_0,y_0)$ \label{line:3+} \;
 $Y\gets\ \downarrow\! y_0$  \tcp*{$Y$ is the set of unsettled vertices}
 $X \gets V\setminus Y$  \tcp*{$X$ is the set of newly settled vertices} 
 $Q \gets \varnothing$  \label{line:6+} 
\tcp*{Each element in $Q$ is of the form $\la n, (p,\Delta)\ra$, where $p$ denotes a potential parent for vertex $n$, 
and $\Delta$ denotes the potential distance change for $n$.}
\ForEach{\rm  $(x,y) \in E$ with $x \in X$ and $y\in Y$ \label{line:7+}} {%
\textsf{newdist} $\gets \distance_G(x) + \omega(x,y)$\;
		$\Delta \gets \mbox{\textsf{newdist}} - \distance_G(y)$\; 
	\lIf{\rm $\Delta < \Delta(y_0)$}{\textsf{ENQUEUE}$(Q, \la y,(x,\Delta)\ra)$ \label{line:10+}}
	\tcp*{The instruction adds one more element $y$ to $Q$. If vertex $y$ is already in $Q$, then the new attributes will replace the old ones only if the new $\Delta $ is smaller.}
}
\While{$Q\not=\varnothing$ \label{line:11+}}{
      $\la y_Q, (x_Q,\Delta) \ra \gets \textsf{EXTRACTMIN}(Q)$\; 
	\tcp*{When there are various vertices with the same minimum $\Delta$, we select any one with the smallest distance from $s$.} 
         $Pa(y_Q) \gets x_Q$ \label{line:13+}
         \tcp*{Change the parent of $y_Q$ in $T$ to $x_Q$}
                $X\gets\  \downarrow\! y_Q$       \label{line:14+} \quad\quad        \tcp*{$\downarrow\! y_Q$ is the set of descendants of $y_Q$ (including $y_Q$ itself) in the current tree} 
        $Y \gets Y \setminus X$        \tcp*{$Y$ is the set of unsettled vertices}
        \ForEach{$x\in X$}{
         $D(x) \gets \distance_G(x) + \Delta$  \label{line:17+}\;
       \lIf{$x \in Q$}{\rm \textsf{REMOVE}$(x,Q)$ \label{line:18+}} 
        }
	 \ForEach{\rm $(x,y)\in E$ with $x\in X$ and $y \in Y$ \label{line:19+}}{
	 \textsf{newdist} $\gets D(x) + \omega(x,y)  $\;
	 \If{ \rm \textsf{newdist} $< \distance_G(y)+\Delta(y_0)$} {
	 $\Delta \gets  \textsf{newdist} - \distance_G(y)$\;
	 \textsf{ENQUEUE}$(Q, \la y, (x,\Delta) \ra)$ \label{line:24+}}
	}
	 
	 }
\Return $T$
}
     \caption{The incremental algorithm} \label{alg:+}
 \end{algorithm}

\subsection{Correctness of the algorithm} 
Given input $G=(V,E)$ and edge $e_0=(x_0,y_0)$ with $\omega'(e_0)>\omega(e_0)$ as above, suppose $e_1=(x_1,y_1)$, ..., $e_k=(x_{k},y_{k})$ are the sequence of edges extracted by the algorithm before it terminates. We define
\begin{align}\label{eq:incXs}
X_0=V\setminus\!\downarrow\! y_0,\  X_1=\ \downarrow\! y_1 \setminus X_0,  ...,\ X_{k}=\ \downarrow\! y_{k} \setminus \bigcup_{i=0}^{k-1} X_{i},\ X_{k+1}=V\setminus \bigcup_{i=0}^k X_{i}.
\end{align}
Actually,  $X_1=\;\downarrow\! y_1$ as $y_1$ is taken from $\downarrow\! y_0$. By definition, these $X_i$ are pairwise disjoint and form a partition of the vertices of $G$.  For each $x \in V$, we define the incremental value of $x$ as the difference between the distances from $s$ to $x$ in $G'$ and $G$.
\begin{align} 
\label{eq:Delta(x)}
\delta(x) \equiv \distance_{G'}(x) - \distance_G(x).
\end{align} 

The following proposition asserts that vertices in $X_i$ ($1\leq i\leq k$) are consolidated in the $i$-th iteration \emph{collectively} and their distances from $s$ have a common increment value.

\begin{proposition}
\label{prop:Vi}
Suppose $T$ is an SPT of graph $G=(V,E)$ at root $s\in V$ and $e_0=(x_0,y_0)$ an edge in $T$ whose weight has been increased from $\omega(e_0)$ to $\omega'(e_0)$. Write $\theta=\omega'(e_0) - \omega(e_0)$. Let $e_i=(x_i,y_i)$ $(1\leq i\leq k)$ be the edge that is extracted by Algorithm~\ref{alg:+} in the $i$-th iteration, $X_i$ $(0\leq i\leq k+1)$ be defined as in \eqref{eq:incXs}, and $T_i$  the spanning tree constructed in the $i$-th iteration for $1\leq i\leq k$. 

Then, for each $1\leq i\leq k$, $X_i$ is the set of vertices consolidated in the $i$-th iteration and $x_i$ and $y_i$ are consolidated, respectively, before and in the $i$-th iteration, and $T_i$ is obtained by changing the parent of $y_i$ in $T_{i-1}$ to $x_i$, where $T_0=T$.
For any $u \in V$ and any $0\leq i\leq k+1$,  if $u \in X_{i}$ then $\distance_{G'}(u) = d_{T_{i}}(u)$ and $\delta(u) = \delta_{i}$, where $T_{k+1}=T_k$ and 
\begin{align}\label{eq:deltas+}
\hspace*{-3mm} \delta_0=0, \delta_{k+1}=\theta,  \delta_{i} \equiv \Delta_{T_{i-1}}(e_i) =  d_{T_{i-1}}(x_i) + \omega(x_i,y_i) - \distance_G(y_i)<\theta\ \mbox{\rm for}\ 1 \leq i \leq k. 
\end{align}
Moreover, we have $0=\delta_0 \leq \delta_1 \leq ... \leq \delta_i \leq \delta_{i+1} \leq ... \leq \delta_{k} < \delta_{k+1} = \theta$.
\end{proposition}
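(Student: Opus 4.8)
The plan is to prove the proposition by induction on the iteration index $i$, establishing simultaneously three invariants at the end of iteration $i$: (a) the set of vertices consolidated exactly in iteration $i$ is precisely $X_i$; (b) for every consolidated vertex $u$, $d_{T_i}(u) = \distance_{G'}(u)$ and the common increment $\delta(u) = \delta_i$ holds for $u \in X_i$; and (c) the monotonicity $\delta_{i-1} \le \delta_i < \theta$. The base case $i=0$ is Lemma~\ref{lem3}: vertices outside $\downarrow\! y_0$ keep their old shortest paths, so $\delta(u) = 0 = \delta_0$ for $u \in X_0$, and $d_{T_0}(u) = \distance_G(u) = \distance_{G'}(u)$. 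For the inductive step, I would assume the invariants hold through iteration $i-1$ and analyse iteration $i$, in which $e_i = (x_i, y_i)$ is extracted with $\Delta_{T_{i-1}}(e_i)$ minimal among all queue entries.

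The first task in the inductive step is to show $d_{T_{i-1}}(x_i) = \distance_{G'}(x_i)$, i.e., that $x_i$ was genuinely consolidated correctly; this is immediate from the induction hypothesis since $x_i$ lies in some $X_j$ with $j < i$ and, by Lemma~\ref{lem:2trees}, the path from $s$ to $x_i$ is frozen once $x_i$ is consolidated. Next I would show that $\delta_i := \Delta_{T_{i-1}}(e_i)$ is an \emph{upper bound} on $\delta(y_i)$: the path $\widehat\pi_{s,x_i} + (x_i,y_i)$ is a genuine path in $G'$ of length $\distance_{G'}(x_i) + \omega(x_i,y_i) = \distance_G(y_i) + \delta_i$, hence $\distance_{G'}(y_i) \le \distance_G(y_i) + \delta_i$, i.e., $\delta(y_i) \le \delta_i$. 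The converse inequality $\delta(y_i) \ge \delta_i$ is the crux: one must show no shortest path from $s$ to $y_i$ in $G'$ does better than $\delta_i$. Here I would take a (simple) shortest path $P$ from $s$ to $y_i$ in $G'$, walk along it from $s$, and locate the \emph{first} edge $(a,b)$ on $P$ crossing from a vertex $a$ already consolidated by the start of iteration $i$ to a vertex $b$ not yet consolidated (such an edge exists since $s$ is consolidated and $y_i$ is not). Then $(a,b)$ is a queue-eligible edge and, by the minimality of the extracted $\Delta$, $\Delta_{T_{i-1}}(a,b) \ge \delta_i$; combining $\length_{G'}(P[s,a]) \ge \distance_{G'}(a) = d(a)$ (wait, actually $P[s,a]$ is a shortest $s$–$a$ path in $G'$ because $P$ is shortest and $a$ is on it), one gets $\length_{G'}(P) \ge d(a) + \omega(a,b) + \length_{G'}(P[b,y_i]) = \distance_G(b) + \Delta_{T_{i-1}}(a,b) + \length_{G'}(P[b,y_i]) \ge \distance_G(b) + \delta_i + (\distance_{G'}(y_i) - \distance_{G'}(b))$, and then one needs $\distance_G(b) \ge \distance_{G'}(b) - \delta_i$ plus $\distance_{G'}(y_i) = \length_{G'}(P)$ to close the loop — this requires being careful that the unsettled portion $P[b, y_i]$ also behaves, which is where the tie-breaking rule (smallest current distance, then closest to root) and the monotonicity of the $\delta_j$'s get used. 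The $\delta(y_i) \ge \delta_i$ direction is the step I expect to be the main obstacle, precisely because 0-cycles are allowed and one cannot assume strict progress along $P$; the argument must handle the possibility that several of the $\delta_j$ are equal and that the "frontier" edge $(a,b)$ sits inside a zero-length detour.

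Once $\delta(y_i) = \delta_i$ is established, extending it to all of $X_i$ is comparatively routine: for $u \in X_i = \downarrow\! y_i \setminus \bigcup_{j<i} X_j$, the tree path in $T_i$ from $s$ to $u$ factors as $\widehat\pi_{s,x_i} + (x_i,y_i) + \pi_{y_i,u}$, where $\pi_{y_i,u}$ is the old $T$-path from $y_i$ to $u$ (unchanged by the reparenting), so by Lemma~\ref{lem:len_paths} and the fact that $e_0$ does not recur on a simple tree path, $d_{T_i}(u) = \distance_G(y_i) + \delta_i + (\distance_G(u) - \distance_G(y_i)) = \distance_G(u) + \delta_i$; one then argues $\distance_{G'}(u) = \distance_G(u) + \delta_i$ by the same frontier-edge minimality argument applied to $u$ in place of $y_i$ (noting $u$ could itself have been extractable with the same $\Delta$). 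Finally, the monotonicity chain $0 = \delta_0 \le \delta_1 \le \cdots \le \delta_k < \delta_{k+1} = \theta$ follows from the priority-queue discipline: $\delta_{i}$ is the minimum $\Delta$ extracted in iteration $i$, and every entry present in $Q$ at iteration $i$ either was present (with $\Delta \ge \delta_{i-1}$, since it wasn't extracted earlier) or was freshly inserted in iteration $i-1$ with $\Delta$ value $d_{T_{i-1}}(x) + \omega(x,y) - \distance_G(y) \ge \delta_{i-1}$ because $d_{T_{i-1}}(x) \ge \distance_G(x)$ wait—actually because the newly consolidated $x \in X_{i-1}$ has $d_{T_{i-1}}(x) = \distance_G(x) + \delta_{i-1}$; so every $\Delta$ in the queue at iteration $i$ is $\ge \delta_{i-1}$, whence $\delta_i \ge \delta_{i-1}$. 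The strict inequality $\delta_k < \theta$ is just the queue-insertion guard ($\Delta < \Delta(y_0) = \theta$), and $\delta_{k+1} = \theta$ captures the leftover vertices $X_{k+1}$, which are exactly the descendants of $y_0$ never reparented, whose distances in $G'$ increase by exactly $\theta$ via Lemma~\ref{lem:len_paths} applied to their (unchanged) tree paths through $e_0$.
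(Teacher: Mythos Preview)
Your inductive scaffold, the upper-bound direction, the branch-wide extension to all of $X_i$, the monotonicity $\delta_{i-1}\le\delta_i$, and the strict bound $\delta_i<\theta$ are all in line with the paper (your queue-invariant proof of monotonicity is in fact tidier than the paper's). The gap is in the lower bound $\delta(u)\ge\delta_i$ for $u\in X_i$. You take the \emph{first} crossing edge $(a,b)$ on a shortest path $P$ and reach
\[
\length_{G'}(P)=\distance_G(b)+\Delta_{T_{i-1}}(a,b)+\length_{G'}(P[b,u]),
\]
then lower-bound the last term by $\distance_{G'}(u)-\distance_{G'}(b)$. That collapses the desired $\delta(u)\ge\delta_i$ to $\delta(b)\ge\delta_i$, the same statement for another unsettled vertex, so the argument is circular. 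You then conjecture that the tie-breaking rule is what closes the loop; it is not, and the paper's proof of this proposition never invokes it.

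The paper avoids the circularity by taking the \emph{last} consolidated vertex $u_{\ell^*}$ on $P$. Then every vertex on the suffix $P[u_{\ell^*+1},u]$ is unsettled, hence lies in $\downarrow\! y_0$; since $x_0\notin\downarrow\! y_0$, the edge $e_0$ cannot occur on that suffix, so $\length_{G'}(P[u_{\ell^*+1},u])=\length_G(P[u_{\ell^*+1},u])\ge\distance_G(u)-\distance_G(u_{\ell^*+1})$. Combined with $\Delta(u_{\ell^*},u_{\ell^*+1})\ge\delta_i$ (either the edge failed the $<\theta$ guard, or $u_{\ell^*+1}$ sits in $Q$ with value at least the extracted minimum), this yields $\distance_{G'}(u)\ge\distance_G(u)+\delta_i$ with no reference to $\distance_{G'}$ on unsettled vertices. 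Equivalently, in your first-crossing setup you could have bounded $\length_{G'}(P[b,u])\ge\length_G(P[b,u])\ge\distance_G(u)-\distance_G(b)$ directly (weights only went up), which also closes the argument without circularity or tie-breaking. One smaller omission: for $X_{k+1}$ you give only the upper bound $\distance_{G'}(u)\le\distance_G(u)+\theta$ via the old tree path; the matching lower bound requires the same last-crossing argument, using that every edge from $\bigcup_{j\le k}X_j$ into $X_{k+1}$ has $\Delta\ge\theta$ (otherwise it would have been enqueued and $Q$ would not be empty).
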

\begin{proof}
For any $u\in X_0$, by Lemma~\ref{lem3}, the path $\pi_{s,u}$ from $s$ to $u$ in $T$ has length $\distance_G(u)$, which cannot be improved and thus remains a shortest path in $G'$. Thus $\distance_{G'}(u)=d_{T_0}(u)=\distance_G(u)$ for any $u\in X_0$. 

We prove the claim inductively on $1\leq i\leq k$.  Suppose the result holds for any $j \leq i$ for some $0\leq i \leq k-1$. We prove this also holds for $j=i+1$. In the $(i+1)$-th iteration, the edge $e_{i+1}=(x_{i+1},y_{i+1})$ is extracted from $Q$. By the algorithm, this is possible only when $y_{i+1}$ was put in $Q$ in previous iterations (as specified by either line~\ref{line:10+} or line~\ref{line:24+}) and was not removed from $Q$ (as specified in line~\ref{line:18+}) before the $(i+1)$-th iteration. This implies that $x_{i+1}$ was consolidated before the $(i+1)$-th iteration and $y_{i+1}$ remains unsettled when we start the $(i+1)$-th iteration. Lines~\ref{line:14+}-\ref{line:18+} then consolidate all descendants of $y_{i+1}$ in $T_{i}$ (include $y_{i+1}$ itself) and remove all its descendants  from $Q$. By induction hypothesis, each $X_p$ with $p<{i+1}$ is the set of vertices consolidated in the $p$-th iteration. Thus the set of descendants of $y_{i+1}$ in $T_{i}$ is $X_{i+1} =\ \downarrow\! y_{{i+1}} \setminus \bigcup_{p=0}^{i} X_{p}$, which consists of all vertices being consolidated in the $(i+1)$-th iteration.

Suppose $u\in X_{i+1} =\ \downarrow\! y_{i+1} \setminus \bigcup_{j=0}^{i} X_{j}$. We show $\distance_{G'}(u) = d_{T_{i+1}}(u)$ and $\delta(u) = \delta_{i+1} < \theta$. Since $x_{i+1} \in \bigcup_{j=0}^i X_j$, by induction hypothesis, we have
$\distance_{G'}(x_{i+1})= d_{T_{i}} (x_{i+1})$. Therefore,  by 
$$\delta_{i+1} = \Delta_{T_{i}}(e_{i+1}) = d_{T_{i}}(x_{i+1}) + \omega(x_{i+1},y_{i+1}) - \distance_G(y_{i+1}),$$  we have $\delta_{i+1}<\theta$ as otherwise $e_{i+1}$ shall not be put in $Q$. Moreover, we have
$$\distance_{G'}(y_{i+1}) \leq \distance_{G'}(x_{i+1}) + \omega(x_{i+1},y_{i+1}) = d_{T_{i}} (x_{i+1}) + \omega(x_{i+1},y_{i+1}) = \distance_G(y_{i+1}) + \delta_{i+1}.$$
As $u\in\;\downarrow\! y_{i+1}$ in $T$, $y_{i+1}$ is on $\pi_{s,u}$
and $\distance_G(u) = \distance_G(y_{i+1}) + \length_{G}(\pi_{s,u}[y_{i+1},u])$. Thus 
\begin{align*}
 \distance_{G'}(u) &\leq \distance_{G'}(y_{i+1}) + \length_{G}(\pi_{s,u}[y_{i+1},u])\\  
 &\leq \distance_G(y_{i+1}) + \delta_{i+1} + \length_{G}(\pi_{s,u}[y_{i+1},u]) \\
 & = \distance_G(u) + \delta_{i+1}.
 \end{align*}
 
On the other hand, suppose $\pi=\la u_0=s, u_1, ..., u_\ell=u \ra$ is a shortest path in $G'$ from $s$ to $u$. We show $\distance_{G'}(u) = \length_{G'}(\pi) \geq  \distance_G(u) + \delta_{i+1}$. Assume $u_{\ell^*}$ is the last vertex on $\pi$ such that $u_{\ell^*}$ is in $X_{j'+1}$ for some $j'<i$. If $\Delta_{T_{j'+1}}(u_{\ell^*}, u_{{\ell^*}+1})  \geq \theta$, then, by  $\Delta_{T_{i}}(x_{i+1},y_{i+1}) = \delta_{i+1} <\theta$, we have 
\begin{align}\label{eq:ell*}
 d_{T_{j'+1}}( u_{\ell^*}) + \omega(u_{\ell^*},u_{{\ell^*}+1}) - \distance_G(u_{{\ell^*}+1})  & =  \Delta_{T_{j'+1}}(u_{\ell^*}, u_{{\ell^*}+1})  \geq \theta > \delta_{i+1}.
\end{align}
 Suppose $\Delta_{T_{j'+1}}(u_{\ell^*}, u_{{\ell^*}+1})  < \theta$. Then in the end of the $({\ell^*}+1)$-th iteration, $u_{{\ell^*}+1}$ is put in $Q$ if it is not there yet (cf. line~\ref{line:24+}).
Since $u_{{\ell^*}+1} \not\in \bigcup_{j=0}^{i} X_j$, 
it can not be extracted or removed from $Q$ before $y_{i+1}$. 
Thus, immediately before $(x_{i+1},y_{i+1})$ is extracted from $Q$, $u_{\ell^*+1}$ is in $Q$ with some priority value $\Delta\leq \Delta_{T_{j'+1}}(u_{\ell^*}, u_{{\ell^*}+1})$. Because we select $y_{i+1}$ over $u_{\ell^*+1}$ in the $(i+1)$-th iteration, the priority value of $u_{\ell^*+1}$ is not smaller than that of $u_{i+1}$, viz., $\delta_{i+1}$. Consequently, Eq.~\eqref{eq:ell*} also holds in this case.

Because $j'<i$ and, by induction hypothesis, $\distance_{G'}(u_{\ell^*}) = d_{T_{j'+1}}( u_{\ell^*})$,  Eq.~\eqref{eq:ell*} implies
$$\distance_{G'}(u_{\ell^*}) + \omega(u_{\ell^*}, u_{{\ell^*}+1}) \geq \distance_G(u_{{\ell^*}+1}) + \delta_{i+1}.$$ 
Furthermore, we have 
\begin{align*}
\distance_{G'}(u) &= \length_{G'}(\pi)\\
 &= \distance_{G'}(u_{\ell^*}) + \omega(u_{\ell^*}, u_{{\ell^*}+1}) + \length_{G}(\pi[u_{{\ell^*}+1},u_\ell=u]) \\
& \geq \distance_G(u_{{\ell^*}+1})  + \length_{G}(\pi[u_{{\ell^*}+1},u_\ell=u])  + \delta_{i+1} \\
& \geq \distance_G(u) + \delta_{i+1}.
\end{align*}
Therefore, we have $\distance_{G'}(u)=\distance_G(u) + \delta_{i+1}$ whenever $u\in X_{i+1}$.

We then show $\distance_{G'}(u) = d_{T_k}(u)$ for any $u\in T_{k+1}$. As $e_k=(x_k,y_k)$ is the last edge we extract, after removing all descendants of $y_k$ in $T_{k+1}$ from $Q$, the queue $Q$ is empty and no new vertices will be put in $Q$. This implies, for any edge $e=(x,y)$ with $x \in \bigcup_{i=0}^k X_i$, $y\in X_{k+1} \equiv V\setminus\bigcup_{i=0}^k X_i$, we have $\Delta_{T_{k}}(e) \geq \theta$. Therefore, given $u\in X_{k+1}$ and any path $\pi=\la u_0=s, u_1, ..., u_\ell=u \ra$ from $s$ to $u$ in $G'$, assume that  $u_{\ell^*}$ is the last vertex such that $u_{\ell^*}$ is in $X_{j'+1}$ for some $j'<k$. Then $u_{\ell^*+1}$ is in $X_{k+1}$. Because $d_{T_{j'+1}}( u_{\ell^*}) + \omega(u_{\ell^*},u_{{\ell^*}+1}) - \distance_G(u_{{\ell^*}+1}) = \Delta_{T_{j'+1}}(u_{\ell^*}, u_{{\ell^*}+1})  \geq \theta$ and $\distance_{G'}(u_{\ell^*})=d_{T_{j'+1}}(u_{\ell^*})$, we have $\distance_{G'}(u_{\ell^*})+ \omega(u_{\ell^*},u_{{\ell^*}+1}) \geq \distance_G(u_{{\ell^*}+1}) + \theta$.
Thus  
$ \length_{G'}(\pi) \geq \distance_{G'}(u_{\ell^*}) + \omega(u_{\ell^*}, u_{{\ell^*}+1}) + \length_{G}(\pi[u_{{\ell^*}+1},u_\ell=u])   \geq \distance_G(u) + \theta$. Because $\pi$ is an arbitrary path from $s$ to $u$ in $G'$, we have $\distance_{G'}(u) \geq \distance_G(u) + \theta$. On the other hand, let $\pi_{s,u}$ be the path from $s$ to $u$ in $T$. Then $\length_{G'}(\pi_{s,u})=\distance_G(u) + \theta$. This proves  $ \distance_{G'}(u)=\distance_G(u) + \theta$ for any $u\in X_{k+1}$. In particular, $\pi_{s,u}$ remains a shortest path in $G'$. This shows that $\distance_{G'}(u) = d_{T_k}(u)$ for any $u\in T_{k+1}$.  

Lastly, we show $0=\delta_0 \leq \delta_1 \leq ... \leq \delta_i \leq \delta_{i+1} \leq ... \leq \delta_{k+1} = \theta$. That $\delta_0=0$, $\delta_{k+1}=\theta$, and $0\leq \delta_i \leq \theta$ is clear. We need only show that $\delta_i\leq \delta_{i+1}$ for any $1\leq i\leq k-1$.  Since  $\delta_i=\Delta_{T_{i-1}}(e_i)$ is the smallest value when $e_i$ is extracted from $Q$, there is a descendant $u$ of $y_i$ in $T_{i}$ such that $e_{i+1}=(u,y_{i+1})$. Now, since $\distance_G(u) + \omega(u,y_{i+1}) \geq \distance_G(y_{i+1})$, $\distance_{G'}(u) = \distance_G(u) + \delta_i$, and $\delta_{i+1} = \Delta_{T_i}(u,y_{i+1})$, we have
\begin{align*}
\delta_{i+1}\!  =\! \distance_{G'}(u)\! +\! \omega(u,y_{i+1})\! -\! \distance_G(y_{i+1})\! =\! \distance_G(u)\! +\! \delta_i\! +\! \omega(u,y_{i+1})\! -\! \distance_G(y_{i+1})  \!\geq\! \delta_i .
\end{align*} 
This ends the proof.
\end{proof}

The above result asserts that the algorithm outputs an SPT of $G'$ for any input.
\begin{theorem} \label{thm:sound+}
Algorithm~\ref{alg:+} is sound and complete. Precisely, suppose $T$ is an SPT of a directed graph $G$ at root $s$ and $e_0=(x_0,y_0)$ an edge in $G$ whose weight has been increased from $\omega(e_0)$ to $\omega'(e_0)$.  The algorithm stops in at most $n-1$ iterations and outputs a spanning tree $T'$ that is an SPT of $G'$, where $n$ is the number of vertices in $G$. 
\end{theorem}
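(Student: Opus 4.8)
The plan is to extract the theorem almost entirely from Proposition~\ref{prop:Vi}, which already carries the analytic burden; what remains is bookkeeping. First I would dispose of the trivial case $e_0\notin T$: the algorithm returns $T$ at line~\ref{line:1+}, and Lemma~\ref{lem3} says $T$ is still an SPT of $G'$. So assume $e_0\in T$, and let $e_1=(x_1,y_1),\dots,e_k=(x_k,y_k)$ be the sequence of extracted edges with $X_0,\dots,X_{k+1}$ as in \eqref{eq:incXs}.

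For the iteration bound: each pass of the while loop (lines~\ref{line:11+}--\ref{line:24+}) extracts one vertex $y_i$, consolidates $y_i$ together with its current descendants, and deletes every newly consolidated vertex from $Q$ (line~\ref{line:18+}); since lines~\ref{line:19+}--\ref{line:24+} only enqueue still-unsettled vertices, no consolidated vertex re-enters $Q$. By Proposition~\ref{prop:Vi} the sets $X_1,\dots,X_k$ are nonempty, pairwise disjoint, and contained in $V\setminus\{s\}$ (the root lies in $X_0$ and is never enqueued, hence never extracted), so $k\le n-1$ and the algorithm halts.

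Next, that the returned object is a spanning tree: inductively $T_i$ arises from $T_{i-1}$ by the single reparenting $Pa(y_i)\gets x_i$ at line~\ref{line:13+}. By the consolidation invariant of Proposition~\ref{prop:Vi}, $x_i$ is already consolidated before iteration $i$ while every descendant of $y_i$ in $T_{i-1}$ is still unsettled, so $x_i$ is not a descendant of $y_i$; redirecting $y_i$'s parent edge to $(x_i,y_i)$ therefore closes no cycle, leaves each non-root vertex with exactly one parent, and preserves reachability from $s$. Hence every $T_i$, and in particular $T'=T_k$, is a spanning tree rooted at $s$.

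Finally, the SPT property. The sets $X_0,\dots,X_{k+1}$ partition $V$, and for $u\in X_i$ Proposition~\ref{prop:Vi} gives $\distance_{G'}(u)=d_{T_i}(u)$. Since a vertex's tree path from $s$ is changed only in the iteration that consolidates it, the $s$-to-$u$ path is identical in $T_i$ and $T_k$, so Lemma~\ref{lem:2trees} yields $d_{T'}(u)=d_{T_i}(u)=\distance_{G'}(u)$ for every $u\in V$ (here $\distance_{G'}$ is well defined because $G'$ has no negative cycle, by Lemma~\ref{lem3}). Thus the unique $s$-to-$u$ path in $T'$ has length $\distance_{G'}(u)$ and is a shortest path, so $T'$ is an SPT of $G'$; completeness is the special case $k=0$, where $T'=T$. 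The only step I expect to require genuine care is the spanning-tree argument — pinning down from the algorithm's consolidation bookkeeping that $x_i\notin\ \downarrow\! y_i$ in $T_{i-1}$ so that no reparenting closes a cycle; everything else follows immediately from Proposition~\ref{prop:Vi} together with Lemmas~\ref{lem3} and~\ref{lem:2trees}.
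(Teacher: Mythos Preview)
Your proposal is correct and follows essentially the same route as the paper: dispose of the case $e_0\notin T$ via Lemma~\ref{lem3}, then read off everything else from Proposition~\ref{prop:Vi}, using that the $s$-to-$u$ tree path freezes once $u$ is consolidated (the paper cites Lemma~\ref{lem:2trees} for exactly this). You add two points the paper leaves implicit --- the explicit $k\le n-1$ count via disjointness of the $X_i$, and the verification that each reparenting keeps $T_i$ a genuine spanning tree because $x_i$ is already consolidated and hence not in $\downarrow\! y_i$ --- both of which are welcome and straightforward.
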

\begin{proof}
If $e_0$ is not in $T$, then, by Lemma~\ref{lem3}, $T$ remains an SPT of $G'$ and the algorithm correctly outputs $T$ as specified in line~\ref{line:1+}. The case when $e_0$ is in $T$ follows directly from Proposition~\ref{prop:Vi}. We only note here that,  in the $(i+1)$-th iteration ($0\leq i \leq  k-1$), the spanning tree $T_{i+1}$ differs from the previous spanning tree $T_i$ constructed in the $i$-th iteration only in that, in $T_{i+1}$, the subtree with root $y_{i+1}$ in $T_i$  is moved under $x_{i+1}$. For any $0\leq j\leq k+1$ and any vertex $u\in X_j$, the path $\pi$ from $s$ to $v$ in $T_{j}$ is fixed from then on, i.e., $\pi$ remains a path in $T_{k+1}=T_k$ (cf. Lemma~\ref{lem:2trees}). This implies that $\distance_{G'}(u)=d_{T_j}(u)= d_{T_{k}}(u)$ for any $u \in X_j$ and any $0\leq j\leq k+1$. Because $X_0,X_1,...,X_{k+1}$ is a partition of the vertex set of $G$, we know $T_{k}$ (the output of Algorithm~\ref{alg:+}) is an SPT of $G'$.
\end{proof}

We next analyse the computational complexity of the algorithm. Here, similar to \cite{RamaR96}, we measure the computational complexity of a (semi-)dynamic algorithm in terms of the output changes. We say a vertex $v$ is \emph{affected} if either its distances from $s$ or its parents in $T$ and $T'$ are different, and say $v$ is \emph{strongly affected} if its parents in $T$ and $T'$ are different. It is easy to see that a vertex is  strongly affected if and only if it is extracted by the algorithm. Suppose the weight of $(x_0,y_0)$ is increased and $(x_0,y_0)$ is an edge in $T$. Then the set of affected vertices is $\downarrow\! y_0$, which is highly depended on the input SPT $T$. Moreover, if there are $k$ extractions, then the set of affected vertices is  $\downarrow\! y_0 =\bigcup_{i=1}^{k+1} X_i$. 

\begin{theorem} \label{thm:comp+}
The time complexity of the algorithm is $O(m_0+n_0 \log n_0)$ if $Q$ is implemented as a relaxed heap, where  $n_0$ is the number of affected vertices, $m_0$ is the number of edges in $G$ whose heads are affected vertices.
\end{theorem}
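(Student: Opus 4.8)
The plan is to account separately for the work done in the initialization phase (lines 3–10), the work done across all while-loop iterations (lines 11–24), and the priority-queue bookkeeping, and to bound each by $O(m_0 + n_0\log n_0)$, where $n_0 = |\!\downarrow\! y_0|$ is the number of affected vertices and $m_0$ is the number of edges whose heads lie in $\downarrow\! y_0$. The guiding principle is that, by Lemma~\ref{lem3} and Proposition~\ref{prop:Vi}, the algorithm only ever touches vertices in $\downarrow\! y_0$ (as unsettled vertices, as extracted vertices, or as vertices being consolidated) and edges $(x,y)$ with $y\in\downarrow\! y_0$ (the only edges ever examined in lines~\ref{line:7+}–\ref{line:10+} and lines~\ref{line:19+}–\ref{line:24+} have their head in $Y\subseteq\downarrow\! y_0$). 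So the entire computation is ``local'' to the affected region, which is exactly what the output-sensitive bound demands.

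\textbf{Initialization and edge scans.} First I would observe that identifying $Y_0 = \downarrow\! y_0$ and $X_0 = V\setminus Y_0$ costs $O(n_0)$ if $T$ is stored with the usual child/parent pointers and subtree sizes (a traversal of the subtree rooted at $y_0$). The loop in lines~\ref{line:7+}–\ref{line:10+} iterates over edges $(x,y)$ with $y\in Y_0$; there are at most $m_0$ such edges, and each iteration does $O(1)$ arithmetic plus one \textsf{ENQUEUE}. Across the whole run, every edge $(x,y)$ with $y\in\downarrow\! y_0$ is examined at most a constant number of times: once in the initialization scan if $x\in X_0$, and at most once in a later iteration's line~\ref{line:19+} scan, namely in the unique iteration in which its tail $x$ first becomes consolidated (once $x$ is consolidated it is removed from $Y$ and never re-enters, and only edges from newly consolidated $x$ to unsettled $y$ are scanned). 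Hence the total number of edge examinations — and thus of \textsf{ENQUEUE} calls — is $O(m_0)$; the parent changes in line~\ref{line:13+} and the subtree relinkings / consolidations in lines~\ref{line:14+}–\ref{line:18+} cost, summed over all iterations, $O(n_0)$ since each affected vertex is consolidated exactly once (the $X_i$ partition $\downarrow\! y_0$, by Proposition~\ref{prop:Vi}).

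\textbf{Priority-queue accounting.} The queue $Q$ only ever holds vertices of $\downarrow\! y_0$, so its size never exceeds $n_0$. The operations performed on it are: at most $O(m_0)$ \textsf{ENQUEUE}s (each either an insertion or a decrease-key), at most $n_0$ \textsf{REMOVE}s (line~\ref{line:18+}, one charged to each consolidated vertex), and at most $n-1$ — in fact at most $n_0$ — \textsf{EXTRACTMIN}s (one per iteration, since each iteration consolidates at least $y_i\in\downarrow\! y_0$ and these are disjoint). With a relaxed heap, \textsf{ENQUEUE} (insert or decrease-key) and \textsf{REMOVE}/delete run in $O(\log s)$ worst-case where $s\le n_0$ is the current size, and \textsf{EXTRACTMIN} in $O(\log n_0)$; \emph{here} I would instead invoke the standard amortized bound for decrease-key structures, noting that a relaxed heap supports decrease-key in $O(1)$ amortized and delete/extract-min in $O(\log n_0)$. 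That gives total queue cost $O(m_0) + O(n_0\log n_0)$: the $O(m_0)$ absorbs all the $O(1)$-amortized \textsf{ENQUEUE}s, and the $O(n_0\log n_0)$ absorbs the $\le n_0$ extractions and $\le n_0$ removals. Adding the $O(m_0 + n_0)$ from the edge scans and tree surgery, and noting $n_0\le m_0 + n_0\log n_0$ trivially, the grand total is $O(m_0 + n_0\log n_0)$.

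\textbf{Main obstacle.} I expect the delicate point to be the claim that each edge with head in $\downarrow\! y_0$ is examined only $O(1)$ times, i.e., that the line~\ref{line:19+} scans across all iterations cost $O(m_0)$ in total rather than $O(m_0)$ \emph{per iteration}. This rests on the facts that (i) an edge is scanned in iteration $i$ only if its tail $x$ lies in the freshly-consolidated set $X_i$, (ii) the sets $X_1,\dots,X_k$ are pairwise disjoint (Proposition~\ref{prop:Vi}), and (iii) once a vertex is consolidated it leaves $Y$ forever and re-enters neither $Y$ nor $Q$; so each edge is scanned at most once from the initialization and at most once from the unique iteration that consolidates its tail. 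I would also need to be slightly careful that \textsf{EXTRACTMIN} may pull out a vertex that was already (implicitly) superseded, but since line~\ref{line:18+} \textsf{REMOVE}s consolidated vertices from $Q$ and such vertices are never re-enqueued, no ``stale'' entries survive, so the number of extractions is genuinely $\le n_0$. Everything else is routine bookkeeping with the data-structure costs quoted above.
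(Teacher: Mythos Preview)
Your proposal is correct and follows essentially the same approach as the paper: bound the total number of edge examinations by $m_0$ using the disjointness of the $X_i$ (Proposition~\ref{prop:Vi}), bound the queue size by $n_0$, and then charge $O(1)$ per \textsf{ENQUEUE} and $O(\log n_0)$ per \textsf{EXTRACTMIN}/\textsf{REMOVE} via the relaxed heap. Two minor points of sharpening: each edge is in fact examined at most \emph{once} total (its tail lies in exactly one $X_i$, $0\le i\le k$, so it is scanned either in the initialization or in a single while-loop iteration, never both), and relaxed heaps give $O(1)$ \emph{worst-case} insert and decrease-key, so you need not fall back on an amortized argument.
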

\begin{proof}
 In the algorithm, we only examine edges which have tails in $\downarrow\!y_0$. Let $(x,y) $ be such an edge. Then $(x,y)$ is examined in the $i$-th iteration if and only if $x\in X_{i}$ and $y\in Y_{i}$, where $X_i$ is defined as in Eq.~\eqref{eq:incXs}, $Y_0 =\ \downarrow\! y_0$ and $Y_{i} = Y_{i-1} \setminus X_i$ for $1\leq i\leq k$. Because $X_i$'s are pairwise disjoint, each $(x,y)$ is examined at most once (lines 7 and 19). So there are at most $m_0$ edge examinations.

Let $n_s$ be the number of extractions (i.e., the number of strongly affected vertices). The algorithm extracts $n_s$ vertices from $Q$ (i.e., $n_s$ EXTRACTMIN operations) and thus runs in $n_s$ iterations. The maximal number of insertions into $Q$ (i.e., the maximal size of $Q$) is $n_0$ and the number of  ENQUEUE operations (i.e., the number of vertex insertions and key decrements in $Q$) is at most $m_0$. There are at most $n_0-n_s$ edge deletions (i.e., REMOVE operations). 

Since $Q$ is implemented as a relaxed heap \cite{DriscollGST88}, each insertion/key modification runs in constant time while each deletion/remove runs in $O(\log n_0)$ time. Thus the time complexity of the algorithm is $O(m_0+n_0 \log n_0)$.
\end{proof}

One potential problem with the above analysis is that we don't know if the number of edge changes to $T$ is minimal. In the worst case, $n_s$ could be as bad as $n_0$. In the following subsection, we consider how to obtain an SPT with minimal edge changes.

\subsection{Minimal edge changes}
In \cite{NarvaezST01}, it is proved that the SPT $T'$ constructed by Algorithm~\ref{alg:+} has the minimal edge changes to $T$ if $G$ has only positive weights. This is, however, not true when negative weights present. One example is shown in Figure~\ref{fig:graphA00}, where the SPT $T'$ constructed by Algorithm~\ref{alg:+} is shown in the centre. Note that there are four edge changes in $T'$ (we regard $(s,u)$ as a changed edge). However, if we move $x$ directly under $v$, then we obtain another SPT with three edge changes. In this subsection, we show how to obtain an SPT with minimal edge changes from the SPT $T'$ output by Algorithm~\ref{alg:+}.

We introduce the following notion.

\begin{definition}\label{dfn:branches}
Given input $G$, $T$, $e_0=(x_0,y_0)$, $\omega(e_0)<\omega'(e_0)$ of Algorithm~\ref{alg:+}, suppose $e_0$ is an edge in $T$ and $\widehat{T}$ is an SPT of $G'$.  We define a \emph{$\widehat{T}$-branch} as a connected component of the directed graph obtained by removing the edge $e_0$ and  all edges that are not in $T$ from $\widehat{T}$. We call the $\widehat{T}$-branch that contains $s$ the root $\widehat{T}$-branch and call the root of a $\widehat{T}$-branch a \emph{miniroot}. We say two $\widehat{T}$-branches $\mathcal{B}_1$ and $\mathcal{B}_2$ are \emph{linked} if there exists an edge $(u_1,u_2)$ in $T$ such that   $u_1,u_2$ belong to either of, but not both,  $\mathcal{B}_1$ and $\mathcal{B}_2$. 
\end{definition}
Note that if there are $k$ new edges in $\widehat{T}$, then, after they (as well as edge $e_0$ if it is in $\widehat{T}$) are removed from $\widehat{T}$, we have all together $k+1$ (or $k+2$ if $e_0$ is in $\widehat{T}$) $\widehat{T}$-branches. As a connected component, each $\widehat{T}$-branch is a subtree of both $\widehat{T}$ and $T$. This implies in particular that, for any $\widehat{T}$-branch $\mathcal{B}$ and any two vertices $u_1,u_2$ in $\mathcal{B}$, if $u_1$ is an ancestor of $u_2$ in $T$, then $\mathcal{B}$ contains the whole path from $u_1$ to $u_2$ in $T$. Consider the example shown in Figure~\ref{fig:graphA00} again. Let $T'$ be the SPT of $G'$ output by Algorithm~\ref{alg:+} (shown in the centre). It has five $T'$branches, viz. the root branch $\mathcal{B}_0=\{s\}$, $\mathcal{B}_1=\{x\}$, $\mathcal{B}_2=\{z\}$, $\mathcal{B}_3=\{v\}$ and $\mathcal{B}_4=\{u,w,y\}$.  Because $(v,x)$ is an edge in $T$, the two branches $\mathcal{B}_1$ and $\mathcal{B}_3$ are linked. Similarly,  $\mathcal{B}_4$ is linked to all of $\mathcal{B}_0$, $\mathcal{B}_3$, and $\mathcal{B}_2$.

\begin{figure}[htbp]
\centering
\begin{tabular}{ccc}
\includegraphics[width=.28\textwidth]{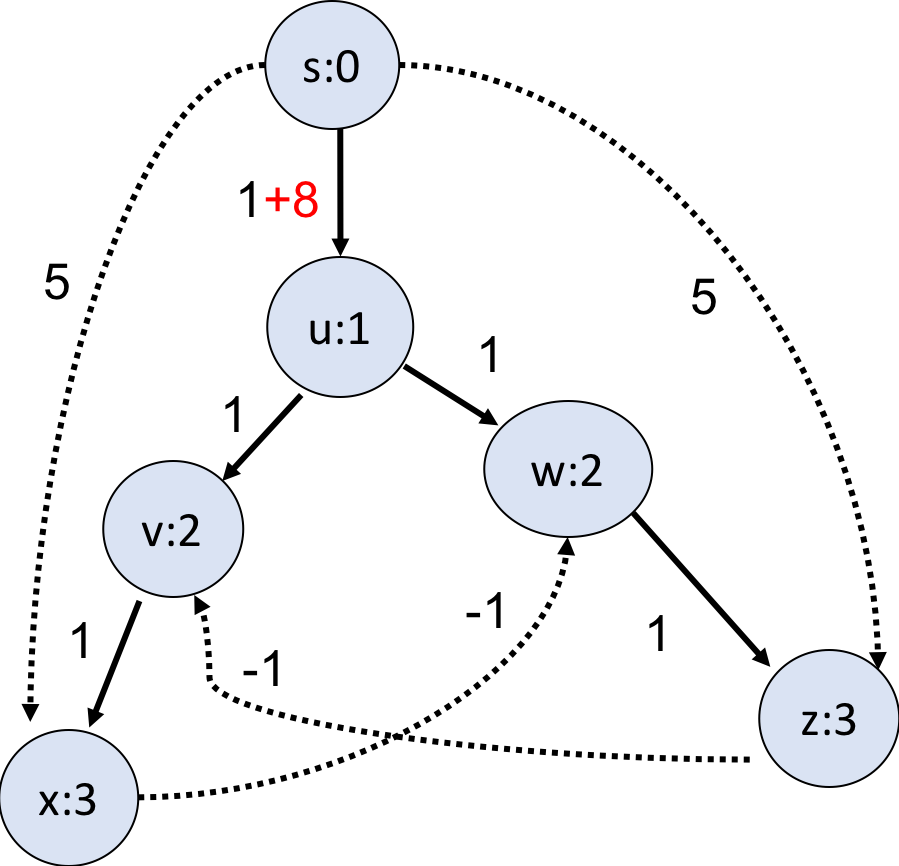}
&
\includegraphics[width=.28\textwidth]{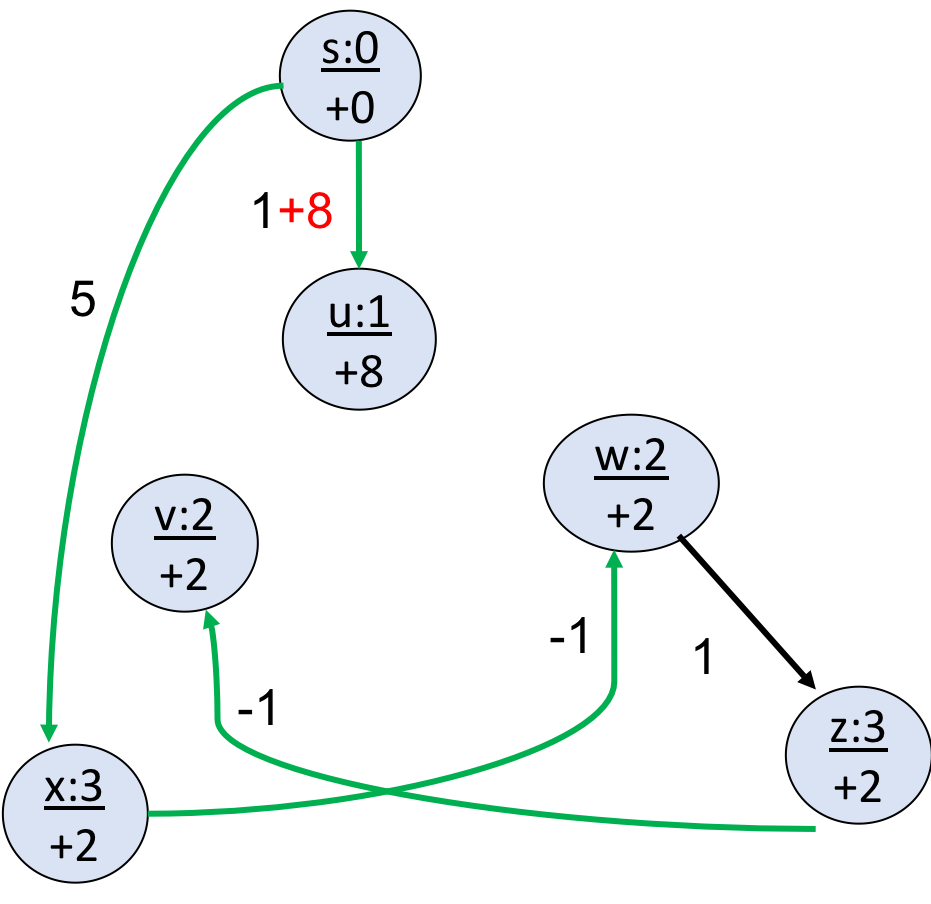}
& 
\includegraphics[width=.28\textwidth]{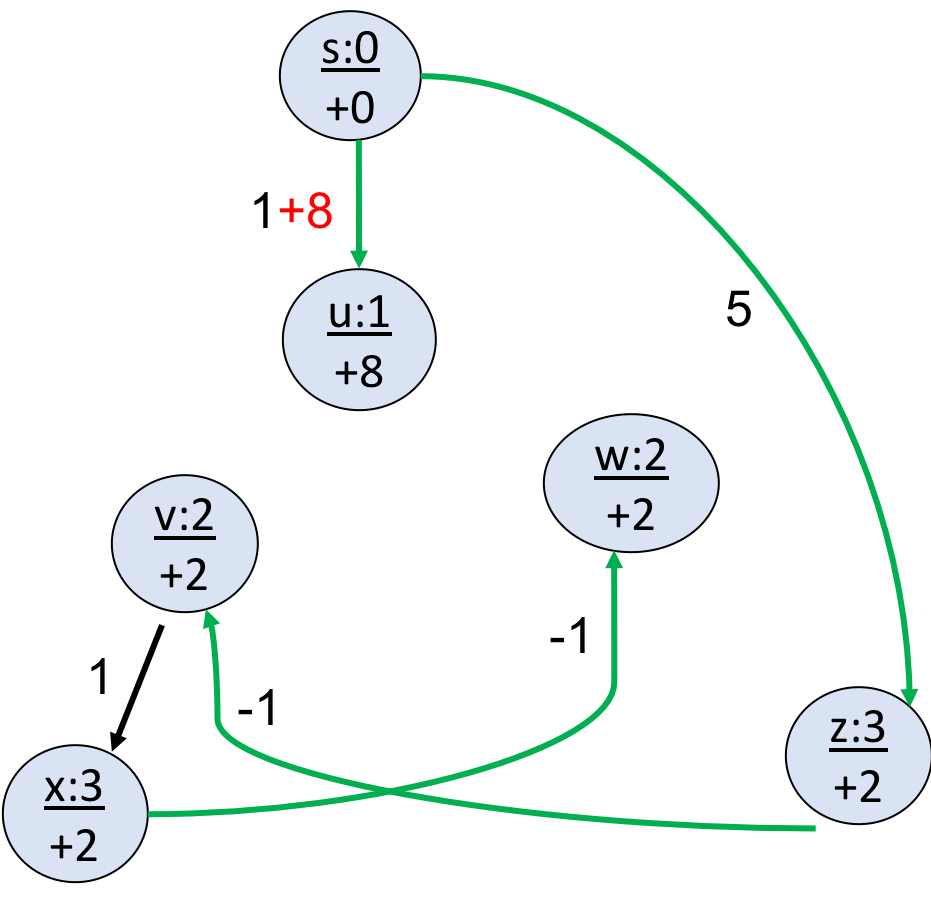}
\end{tabular}
\caption{A directed graph with 0-cycles and its SPT $T$ (left), its two SPTs $T'_1$ (centre) and $T'_2$ after edge weight increase, where we write, for example, $\frac{w:2}{+2}$, to denote that the original distance of vertex $w$ is 2 and its updated distance is increased by 2.}  \label{fig:0cycle_aG}
\end{figure}
\begin{remark}\label{remark}
In \cite{NarvaezST01}, Narvaez et al. introduced a related notion of `branch'. Our notions of `miniroot' and `root branch' are borrowed from there. Given an old SPT $T$ and an edge weight increase or decrease, they  define that two vertices are in the same branch `if they are are connected in both the old SPT and in \emph{any} new SPT by some edge that does not change weight'. If the graphs contain only edges with positive weights, then a branch in their sense corresponds to a $\widehat{T}$-branch for some \emph{optimal} SPT $\widehat{T}$ of $G'$, i.e., SPT with minimal edge changes to $T$. This, however, is not the case when the graph has 0-cycles. Consider the directed graph $G$ in Figure~\ref{fig:0cycle_aG} (left) as an example, where we have a 0-cycle $\la x,w,z,v,x\ra$.\footnote{A similar example of directed graph with non-negative weights is obtained by changing the weights of the four edges in the 0-cycle as 0.} According to \cite[Definition 1]{NarvaezST01}, $v,x$ are in the same branch (since $(v,x)$ is an edge in $T'_2$ (right)), so are $w,z$  (since $(w,z)$ is an edge in $T'_1$ (centre)). Thus there are four branches in this graph. It is not difficult to see that there is no SPT $\widehat{T}$ of $G'$ with minimal edge changes such that both $\{w,z\}$ and $\{v,x\}$ are $\widehat{T}$-branches.

Another problem with their definition is that, when negative weights present, these branches cannot always be obtained from their {\sf Ball-String} algorithm. Consider the graph shown in Example~\ref{example1} for such an example, where the SPT of $G'$ output by {\sf Ball-String} is shown in the centre, which has one more edge change than the SPT shown in the right.

Their proof of the correctness and minimality of the output SPT (i.e., \cite[Theorem~1]{NarvaezST01}) heavily depends on the assumption that these branches correspond to the $\widehat{T}$-branches of some optimal $\widehat{T}$. The above analysis shows that their proof cannot be extended to the case when directed graphs have arbitrary weights. In addition, their assertion in \cite[Lemma~3]{NarvaezST01} that ``After the first iteration, branches $\mathcal{K}_0$ and $\mathcal{K}_1$ are consolidated", which was used as the basis step of the proof of \cite[Theorem~1]{NarvaezST01}, is not true even for directed graphs with only positive weights. Indeed, in the decremental case and when $\distance_G(x_0)+\omega'(e_0) < \distance_G(y_0)$, the root branch is determined only in the last iteration, after all possible improvement have been made for all affected vertices.

\end{remark}


We now come back to our definition of $\widehat{T}$-branches. For any two vertices $x,y$ in $\widehat{T}$-branch $\mathcal{B}$, we show they have the same $\delta$ value.
\begin{lemma}\label{lem:constant}
Suppose $\widehat{T}$ is an SPT of $G'$ and $\mathcal{B}$ is a $\widehat{T}$-branch with miniroot $u_0 \not =s$. For any $x\in \mathcal{B}$, we have $\delta(x)\equiv \distance_{G'}(x) - \distance_G(x) = \distance_{G'}(u_0) - \distance_G(u_0) \equiv \delta(u_0)$.
\end{lemma}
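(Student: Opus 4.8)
The plan is to exploit that a $\widehat{T}$-branch $\mathcal{B}$ is simultaneously a subtree of $T$ and of $\widehat{T}$, with the miniroot $u_0$ sitting at its top in \emph{both} trees. First I would establish the structural fact: $u_0$ is an ancestor of every $x\in\mathcal{B}$ in both $T$ and $\widehat{T}$, and the unique tree path from $u_0$ to $x$ is edge-for-edge the same path, call it $\pi[u_0,x]$, in both. Indeed, $\mathcal{B}$ is a connected component of $\widehat{T}$ after deleting $e_0$ and every edge not in $T$, and $\widehat{T}$ is rooted at $s$; since $u_0\neq s$ is the miniroot, walking from $x$ toward the root of $\widehat{T}$ stays inside $\mathcal{B}$ until it first reaches $u_0$ (the only vertex of $\mathcal{B}$ whose $\widehat{T}$-parent edge lies outside $\mathcal{B}$). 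Hence the $\widehat{T}$-path from $u_0$ to $x$ lies entirely in $\mathcal{B}$ and therefore uses only edges of $T$; reading those same edges in $T$ yields a descending path from $u_0$ to $x$, so $u_0$ is an ancestor of $x$ in $T$ as well, and this path is precisely the tree path $\pi[u_0,x]$ of $T$ (here I use the observation, noted right after Definition~\ref{dfn:branches}, that a $\widehat{T}$-branch contains the whole $T$-path between any two of its vertices).

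Next, because $e_0$ was removed when forming $\mathcal{B}$, the edge $e_0$ does not occur on $\pi[u_0,x]$, so Lemma~\ref{lem:len_paths} gives $\length_{G'}(\pi[u_0,x]) = \length_{G}(\pi[u_0,x])$; write $L$ for this common value. Since $u_0$ is an ancestor of $x$ in $T$, the tree path from $s$ to $x$ in $T$ is $\pi_{s,u_0}+\pi[u_0,x]$, and as $T$ is an SPT of $G$ we get $\distance_{G}(x) = d_T(x) = d_T(u_0) + L = \distance_{G}(u_0) + L$. Likewise, $u_0$ is an ancestor of $x$ in $\widehat{T}$, so the $\widehat{T}$-path from $s$ to $x$ is (the $\widehat{T}$-path from $s$ to $u_0$) followed by $\pi[u_0,x]$, and as $\widehat{T}$ is an SPT of $G'$ we get $\distance_{G'}(x) = d_{\widehat{T}}(x) = d_{\widehat{T}}(u_0) + L = \distance_{G'}(u_0) + L$. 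Subtracting, $\delta(x) = \distance_{G'}(x)-\distance_{G}(x) = \distance_{G'}(u_0) - \distance_{G}(u_0) = \delta(u_0)$, as required (the case $x=u_0$ being trivial).

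The only delicate point — and hence the main obstacle — is the bookkeeping in the first paragraph: one must argue carefully that the $s$-to-$x$ shortest path encoded by each tree really does factor through $u_0$, and that the common tail $\pi[u_0,x]$ is literally the same sequence of edges, so that Lemma~\ref{lem:len_paths} can be applied to one path in both graphs simultaneously. Once that structural claim is in place, the conclusion is a single-line length computation, so I would spend the bulk of the write-up making the "walk up to the miniroot" argument precise and invoking the induced-subtree property of $\widehat{T}$-branches.
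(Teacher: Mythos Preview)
Your proposal is correct and follows essentially the same approach as the paper: establish that the $\widehat{T}$-path and the $T$-path from $u_0$ to $x$ coincide (and avoid $e_0$), then compute $\distance_{G'}(x)=\distance_{G'}(u_0)+L$ and $\distance_G(x)=\distance_G(u_0)+L$ with the common length $L$, and subtract. The paper compresses your first paragraph into a single sentence by citing the definition of $\widehat{T}$-branch and the remark following Definition~\ref{dfn:branches}, but the argument is otherwise identical.
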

\begin{proof} 
By the definition of $\widehat{T}$-branch, $u_0$ is on $\pi_{s,x}$, the path from $s$ to $x$ in $T$, and $\pi_{s,x}[u_0,x]$ is contained in $\mathcal{B}$ and thus also the tree path from $u_0$ to $x$ in $\widehat{T}$.  Because 
\begin{align*}
\distance_{G'}(x) &=\distance_{G'}(u_0) + \length_{G'}(\pi_{s,x}[u_0,x])\\ 
&= \distance_{G}(u_0) + \delta(u_0) + \length_{G}(\pi_{s,x}[u_0,x])\\
& = \distance_{G}(x) + \delta(u_0),
\end{align*}
we have $\delta(x)=\delta(u_0)$.
\end{proof}
By the above result, we also call $\delta(u_0)$ the $\delta$ value of  the $\widehat{T}$-branch $\mathcal{B}$.
\begin{proposition}
\label{lem:cbranches}
Let $\mathcal{B}_1$ and $\mathcal{B}_2$ be two linked $\widehat{T}$-branches with miniroots $x_1$ and $x_2$ respectively. 
Assume $\delta(x_1)=\delta(x_2)=\delta$ and $0<\delta <\theta$. Then $x_1$ is an ancestor of $x_2$ in $T$ or vice versa. Suppose $x_1$ is an ancestor of $x_2$ in $T$, and, in addition, assume that $x_2$ is an ancestor of $x_1$ in $\widehat{T}$. Then $G$ has a 0-cycle. 
\end{proposition}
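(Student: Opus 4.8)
The plan is to lean throughout on one structural fact about branches: since a $\widehat{T}$-branch $\mathcal{B}$ is, by construction, a connected piece of $\widehat{T}$ made up only of edges that $T$ and $\widehat{T}$ share (and with the same direction in both trees), $\mathcal{B}$ is a subtree of $T$ (and of $\widehat{T}$) whose root is precisely its miniroot, and every vertex $v\in\mathcal{B}$ other than the miniroot satisfies $\mathrm{parent}_T(v)\in\mathcal{B}$. I will record at the start that $\delta(x_1)=\delta(x_2)=\delta>0$ forces both $\mathcal{B}_1,\mathcal{B}_2$ to be non-root branches, so $x_1\neq s\neq x_2$, and distinctness of the components gives $x_1\neq x_2$. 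Write $a\preceq_T b$ for ``$a$ is an ancestor of $b$ in $T$'' and $a\prec_T b$ for the proper version.

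First I would settle comparability. A link edge $(u_1,u_2)\in T$ has $u_1=\mathrm{parent}_T(u_2)$ with one endpoint in each of $\mathcal{B}_1,\mathcal{B}_2$; by the symmetry of the claim in $\mathcal{B}_1,\mathcal{B}_2$ assume $u_1\in\mathcal{B}_1$ and $u_2\in\mathcal{B}_2$. Since $\mathrm{parent}_T(u_2)=u_1\notin\mathcal{B}_2$, the structural fact forces $u_2$ to be the miniroot of $\mathcal{B}_2$, i.e.\ $u_2=x_2$; and since $u_1\in\mathcal{B}_1$ we get $x_1\preceq_T u_1=\mathrm{parent}_T(x_2)\prec_T x_2$, hence $x_1\prec_T x_2$. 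Thus in every case one of $x_1,x_2$ is a proper ancestor of the other in $T$. This part is pure tree bookkeeping and should not be the obstacle.

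For the second assertion, assume $x_1\prec_T x_2$ and that $x_2$ is a proper ancestor of $x_1$ in $\widehat{T}$. Let $\tau$ be the $x_1$-to-$x_2$ path inside $T$, let $\sigma$ be the $x_2$-to-$x_1$ path inside $\widehat{T}$, and set $C=\tau+\sigma$; then $C$ is a cycle containing at least one edge. Because $T$ is an SPT of $G$ and $\pi_{s,x_1}$ is a prefix of $\pi_{s,x_2}$, $\length_G(\tau)=\distance_G(x_2)-\distance_G(x_1)$; dually, $\widehat{T}$ being an SPT of $G'$ with $\widehat{\pi}_{s,x_2}$ a prefix of $\widehat{\pi}_{s,x_1}$ gives $\length_{G'}(\sigma)=\distance_{G'}(x_1)-\distance_{G'}(x_2)$. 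The key local check is that $e_0\notin\tau$: from $\delta(x_1)=\delta>0$ together with Lemmas~\ref{lem3} and~\ref{lem:len_paths}, $x_1\in\ \downarrow\! y_0$, so $x_0=\mathrm{parent}_T(y_0)\prec_T y_0\preceq_T x_1\preceq_T x_2$, i.e.\ $x_0$ lies strictly above every vertex of $\tau$, and therefore $e_0=(x_0,y_0)$ is not an edge of $\tau$. Hence $\length_{G'}(\tau)=\length_G(\tau)$, and summing,
\[
\length_{G'}(C)=\length_{G'}(\tau)+\length_{G'}(\sigma)=\bigl(\distance_G(x_2)-\distance_G(x_1)\bigr)+\bigl(\distance_{G'}(x_1)-\distance_{G'}(x_2)\bigr)=\delta(x_1)-\delta(x_2)=0 .
\]

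It remains to pass from $G'$ back to $G$, which is the only place the standing hypothesis that $G$ admits an SPT is used. Since $e_0\notin\tau$, $e_0$ lies on $C$ iff it lies on $\sigma$; were it on $\sigma$, Lemma~\ref{lem:len_paths} applied to $\sigma$ would give $\length_G(C)=\length_{G'}(C)-\theta=-\theta<0$, a negative cycle of $G$, contradicting that $T$ is an SPT of $G$. Hence $e_0\notin C$, so $\length_G(C)=\length_{G'}(C)=0$, and $C$ is a $0$-cycle of $G$, as claimed. The mildly delicate points are the structural fact about miniroots that drives the first assertion and, in the second, keeping $e_0$ off both $\tau$ and $\sigma$; I would stress that $\delta>0$ is exactly what places $x_1$ in $\downarrow\! y_0$ (hence keeps $e_0$ off $\tau$), whereas the absence of negative cycles in $G$ is exactly what keeps $e_0$ off $\sigma$.
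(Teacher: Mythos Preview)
Your proof is correct and follows essentially the same route as the paper's: identify that the child end of the linking $T$-edge must be a miniroot, then form the cycle $C=\tau+\sigma$ from the $T$-path and the $\widehat{T}$-path and show it has length $0$ in $G$ by excluding $e_0$ from both legs. The only substantive difference is in how you exclude $e_0$ from $\sigma$: the paper argues that $e_0\in\sigma$ would force $\distance_{G'}(x_0)>\distance_G(x_0)$, contradicting $\delta(x_0)=0$, whereas you observe that $e_0\in\sigma$ would make $\length_G(C)=-\theta<0$, contradicting the absence of negative cycles in $G$; your variant is a clean shortcut that avoids invoking Proposition~\ref{prop:Vi}.
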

\begin{proof}
Without loss of generality, suppose $y_1\in \mathcal{B}_1$, $y_2 \in \mathcal{B}_2$ and $(y_1,y_2)$ is an edge in $T$. By definition of $\widehat{T}$-branches, $\mathcal{B}_1$ and $\mathcal{B}_2$ are disjoint subtrees of both $T$ and $\widehat{T}$. Since $x_1$ is an ancestor of both $y_1$ and $y_2$ in $T$ and $y_1$ is not in $\mathcal{B}_2$, we must have $x_2=y_2$. Thus $x_1$ is an ancestor of $x_2$ in $T$. Suppose in addition that $x_2$ is an ancestor of $x_1$ in $\widehat{T}$. Let $\pi_1=\pi_{s,x_2}[x_1,x_2]$ be the path from $x_1$ to $x_2$ in $T$, and $\pi_2$ the path from $x_2$ to $x_1$ in $\widehat{T}$. We have $\distance_{G}(x_2) = \distance_{G}(x_1) + \length_{G}(\pi_1)$ and $\distance_{G'}(x_1) = \distance_{G'}(x_2) + \length_{G'}(\pi_2)$. Since $\delta=\delta(x_1)=\delta(x_2)$, we have $\distance_{G'}(x_2)  = \distance_G(x_2) + \delta$ and $\distance_{G'}(x_1) = \distance_G(x_1) +\delta$. Moreover, since $0<\delta<\theta$ and  by Proposition~\ref{prop:Vi}, both $x_1$ and $x_2$ are in $\downarrow\! y_0$. Note that $e_0=(x_0,y_0)$ is not in $\pi_1$ because $x_1$ is an ancestor of $x_2$ in $T$ and $x_1\! \in\; \downarrow\! y_0$. Meanwhile, $e_0$ is not in $\pi_2$. This is because, otherwise, we shall have $\distance_{G'}(x_0)=\distance_{G'}(x_2)+\length_{G'}(\pi_2[x_2,x_0])=\distance_G(x_2)+\delta+\length_{G}(\pi_2[x_2,x_0])\geq \distance_G(x_0)+\delta>\distance_G(x_0)$, also a contradiction with  $\delta(x_0)=0$. By Lemma~\ref{lem:len_paths}, this implies $\length_{G'}(\pi_2)=\length_G(\pi_2)$. Combining the above equations, we have 
\begin{align*}
\distance_G(x_1) =\distance_{G'}(x_1) - \delta &= \distance_{G'}(x_2) + \length_{G'}(\pi_2) -\delta\\
& = \distance_{G}(x_2) + \length_{G'}(\pi_2) \\
&=  \distance_G(x_1) + \length_G(\pi_1)+ \length_G(\pi_2),
\end{align*} thus $\pi_1+\pi_2$ is a 0-cycle in $G$.
\end{proof}

It seems not too strict to require that the directed graph $G$ have no 0-cycles. If this is the case, we may merge two linked branches with the same $\delta$ value to get a better SPT for $G'$. Indeed, suppose $T'$ is an SPT of $G'$ and $\mathcal{B}_1$ and $\mathcal{B}_2$ are two linked $T'$-branches such that $u_i\in\mathcal{B}_i$ for $i=1,2$ and $(u_1,u_2)$ is an edge in $T$. As in the proof of Proposition~\ref{lem:cbranches}, we know $u_2$ is the miniroot of $\mathcal{B}_2$. By merging $\mathcal{B}_1$ and $\mathcal{B}_2$, we mean that we change the parent of $u_2$ in $T'$ back to $u_1$. In this way, we get an SPT of $G'$ with fewer edge changes than $T'$. Consider the directed graph $G$ in Figure~\ref{fig:graphA00} (left) again, where $\mathcal{B}_1=\{x\}$ is linked to $\mathcal{B}_3=\{v\}$ and $\delta(x)=\delta(v)=2$. After merging them, we get the branches of the optimal STP $T'_2$ (shown in the right of Figure~\ref{fig:graphA00}) of $G'$. In case $G$ has 0-cycles, this operation is sometimes not correct. See Figure~\ref{fig:0cycle_aG} (centre) for such an example, where $T'$-branch $\{v\}$ is linked with $T'$-branch $\{x\}$ by edge $(v,x)$ in $T$ and $\delta(v)=\delta(x)=2$, but we cannot merge them to get an even better STP for $G'$.

\begin{theorem} \label{thm:minchange}
Suppose $G$ has no 0-cycles. Let $T$ be the SPT of $G$ in the input of Algorithm~\ref{alg:+} and $T'$ any SPT of $G'$. Construct $T^*$ by merging any two linked $T'$-branches with the same $\delta$ value. Then $T^*$ is an SPT of $G'$ and has minimal edge changes to $T$.
\end{theorem}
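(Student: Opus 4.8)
The plan is to show two things: first that $T^*$ is a well-defined spanning tree of $G'$ whose tree paths are all shortest paths, and second that $T^*$ achieves the minimum number of edge changes to $T$ among all SPTs of $G'$. For the first part, I would argue that the merging operation described before the theorem is well-defined: for any two linked $T'$-branches $\mathcal{B}_1,\mathcal{B}_2$ with the same $\delta$ value, linked via an edge $(u_1,u_2)\in T$ with $u_i\in\mathcal{B}_i$, Proposition~\ref{lem:cbranches} (using the no-0-cycle hypothesis) forces one miniroot to be a $T$-ancestor of the other, say $x_1$ an ancestor of $x_2$, and then the proof of that proposition shows $x_2=u_2$ is the miniroot of $\mathcal{B}_2$; changing $Pa(u_2)$ from its $T'$-parent back to $u_1$ is then legal because $u_1\in\uparrow\! u_2$ in $T$ so no cycle is created in the resulting graph. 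I would check that the resulting tree path from $s$ to $u_2$ has length $\distance_{G'}(u_1)+\omega(u_1,u_2)$; since $\delta(u_1)=\delta(u_2)$ and $(u_1,u_2)\in T$ with $\distance_G(u_1)+\omega(u_1,u_2)=\distance_G(u_2)$ (as $(u_1,u_2)$ lies on a shortest path in $T$), Lemma~\ref{lem:constant} gives $\distance_{G'}(u_1)+\omega(u_1,u_2)=\distance_G(u_1)+\delta+\omega(u_1,u_2)=\distance_G(u_2)+\delta=\distance_{G'}(u_2)$, so the new tree path to $u_2$ is still shortest, and hence so are the paths to all descendants of $u_2$ within $\mathcal{B}_2$. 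Iterating the merge (each step strictly decreasing the edge-change count, hence terminating) yields the SPT $T^*$.

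For the minimality part, the key is to characterize when the merge process halts: $T^*$ is obtained when no two $T^*$-branches are both linked and share a $\delta$ value, i.e., every pair of linked $T^*$-branches has distinct $\delta$ values. I would then argue that any SPT $\widehat{T}$ of $G'$ must have at least as many $\widehat{T}$-branches as $T^*$ has $T^*$-branches. The number of edge changes of $\widehat{T}$ relative to $T$ equals (number of $\widehat{T}$-branches) $-1$ plus possibly one for $e_0$; so it suffices to lower-bound the branch count. The engine here is that the $\delta$ value is an \emph{invariant of the graph}, not of the tree: for every vertex $v$, $\delta(v)=\distance_{G'}(v)-\distance_G(v)$ is fixed. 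Two vertices in the same $\widehat{T}$-branch must have equal $\delta$ (Lemma~\ref{lem:constant}). Conversely, I want to show that the partition of $V$ into $T^*$-branches is the \emph{coarsest} partition into subtrees-of-$T$ on which $\delta$ is constant and which is "closed" in the right sense — more precisely, that any SPT's branch partition refines the $T^*$-branch partition. To do this I would show: if $u_1,u_2$ lie in the same $T^*$-branch $\mathcal{B}$ with $(u_1,u_2)\in T$, then in \emph{any} SPT $\widehat{T}$ of $G'$, $u_1$ and $u_2$ lie in the same $\widehat{T}$-branch. Since $T^*$-branches are connected subtrees of $T$, it is enough to handle adjacent pairs. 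The argument: because $\delta(u_1)=\delta(u_2)$ and $(u_1,u_2)$ is on a shortest path in both $G$ and $G'$ (the computation above), the edge $(u_1,u_2)$ is "tight" in $G'$; but I must rule out the possibility that $\widehat{T}$ reaches $u_2$ by a different unchanged-weight edge or by $e_0$, using the no-0-cycle hypothesis to forbid the kind of swap shown in Figure~\ref{fig:0cycle_aG}. This is the step I expect to be the main obstacle.

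Concretely, for that obstacle I would proceed as follows. Fix an arbitrary SPT $\widehat{T}$ of $G'$ and a $T^*$-branch $\mathcal{B}$ with miniroot $r$. I claim all of $\mathcal{B}$ lies in a single $\widehat{T}$-branch. Suppose not; then within $\mathcal{B}$ there is an edge $(u_1,u_2)\in T$ that is \emph{not} in $\widehat{T}$, i.e., in $\widehat{T}$ the vertex $u_2$ has a different parent. Consider the $\widehat{T}$-branch $\mathcal{B}'$ containing $u_2$ and its miniroot $x'$; then $x'$ is a proper $T$-ancestor of $u_2$ and lies in $\mathcal{B}$ (since $\mathcal{B}$ is a connected subtree of $T$ containing the whole $T$-path from $r$ down to $u_2$), so $\delta(x')=\delta(u_2)$ by Lemma~\ref{lem:constant} applied in $\mathcal{B}$. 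Now $\mathcal{B}'$ is linked to the $\widehat{T}$-branch containing $u_1$ (via the $T$-edge $(u_1,u_2)$), and Lemma~\ref{lem:constant} for $\widehat{T}$ gives both branches $\delta$-constant; I then invoke Proposition~\ref{lem:cbranches}: the two miniroots are $T$-comparable, and if the "wrong" orientation held we would get a 0-cycle in $G$, contradiction — so the orientation is forced, and tracing through, the $\widehat{T}$-branch of $u_1$ contains $x'$, i.e., $u_1$ and $u_2$ are in the same $\widehat{T}$-branch after all. This contradicts the choice of $(u_1,u_2)$. Hence every $T^*$-branch sits inside one $\widehat{T}$-branch, so $\widehat{T}$ has at least as many branches as $T^*$; together with the edge-change count formula and the fact that $e_0$ contributes the same way to both (it is removed in forming both branch decompositions precisely when it is a non-tree edge, which by Lemma~\ref{lem3} it is an edge of $T$ and of every SPT unless... — I would state cleanly that $e_0$ either is or is not in $T^*$, and the same analysis applies to $\widehat{T}$), we conclude $T^*$ has minimal edge changes. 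I would close by noting $T^*$ is genuinely an SPT by the first part, so it realizes the minimum, completing the proof.
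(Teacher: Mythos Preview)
Your first part (that merging preserves the SPT property) is essentially the paper's argument and is fine.

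The minimality argument, however, has the containment direction backwards, and the error is not just cosmetic. You try to prove that every $T^*$-branch lies inside a single $\widehat{T}$-branch for an arbitrary SPT $\widehat{T}$. This claim is \emph{false}: take $\widehat{T}=T'$ from Example~\ref{example1} (Figure~\ref{fig:graphA00}, centre). After merging, $T^*$ has the branch $\{v,x\}$, but in $T'$ the vertices $v$ and $x$ lie in distinct $T'$-branches $\{v\}$ and $\{x\}$. Moreover, even if your claim held, it would give the wrong inequality: if every $T^*$-branch sits inside a $\widehat{T}$-branch then the $\widehat{T}$-branches are unions of $T^*$-branches, so $\widehat{T}$ has \emph{at most} as many branches as $T^*$ --- the opposite of what you need. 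Your sentence ``so $\widehat{T}$ has at least as many branches as $T^*$'' does not follow from your premise. The detailed argument you sketch (locating the miniroot $x'$ of the $\widehat{T}$-branch of $u_2$ inside the $T^*$-branch $\mathcal{B}$) also breaks: there is no reason $x'$ must lie in $\mathcal{B}$, since $x'$ could be a $T$-ancestor of the miniroot $r$ of $\mathcal{B}$.

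The paper's route is the reverse containment, and it is almost immediate once $T^*$ is constructed. Take a minimal-change SPT $T^\dag$ and show each $T^\dag$-branch is contained in a single $T^*$-branch: if a $T^\dag$-branch $\mathcal{B}$ met two different $T^*$-branches, then (since $\mathcal{B}$ is a connected subtree of $T$) some $T$-edge $(u_1,u_2)$ in $\mathcal{B}$ would have its endpoints in different $T^*$-branches; but $\delta(u_1)=\delta(u_2)$ by Lemma~\ref{lem:constant} (applied in $T^\dag$), so those two $T^*$-branches are linked with equal $\delta$ and should already have been merged --- contradiction. Hence $T^\dag$ has at least as many branches as $T^*$; by minimality of $T^\dag$ equality holds, and $T^*$ is minimal. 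Note that this direction uses the ``no 0-cycle'' hypothesis only through the well-definedness of the merging that produces $T^*$, not in the counting step itself.
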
  
\begin{proof}
Without loss of generality, we assume that $T'$ has the same root branch as $T$ and at most one branch with $\delta$ value $\theta$.
We first show that $T^*$ remains an SPT of $G'$. Let $\mathcal{B}_1$ and $\mathcal{B}_2$ be two linked $T'$-branches with $u_1\in \mathcal{B}_1$, $u_2\in \mathcal{B}_2$, $\delta(u_1)=\delta(u_2)=\delta$, and $(u_1,u_2)\in T$. Then $u_2$ is the miniroot of $\mathcal{B}_2$. For any $u\in \mathcal{B}_2$, let $\pi_1$ be the path from $s$ to $u_1$ in $T'$ and let $\pi_2$ be the path from $u_2$ to $u$ in $T$. Since $u_1$ is the parent of $u_2$ and $u_2$ an ancestor of $u$ in $T$, we have $\distance_G(u)=\distance_G(u_1) + \omega(u_1,u_2)+ \length_{G}(\pi_2)$. By $e_0\not \in \pi_2$ and Lemma~\ref{lem:len_paths}, we also have $\length_{G}(\pi_2)=\length_{G'}(\pi_2)$. Moreover, by Lemma~\ref{lem:constant}, we have $\delta(u)=\distance_{G'}(u) - \distance_G(u) =\delta$. Therefore, we have $\distance_{G'}(u)=\distance_G(u)+\delta = \distance_G(u_1) + \omega(u_1,u_2)+ \length_{G}(\pi_2) + \delta = \distance_{G'}(u_1) + \omega(u_1,u_2) +  \length_{G}(\pi_2) = \length_{G'}(\pi_1+(u_1,u_2)+\pi_2)$. This implies that, after merging $\mathcal{B}_1$ and $\mathcal{B}_2$, the resulted spanning tree remains an SPT of $G'$. Continuing this process until there are no linked branches with the same $\delta$ value, the resulted spanning tree $T^*$ remains an SPT of $G'$.

We next prove that $T^*$ has minimal edge changes. Let $T^\dag$ be an SPT of $G'$ which has the minimal edge changes to $T$. Define a $T^\dag$-branch as in Definition~\ref{dfn:branches}. We prove that each $T^\dag$-branch is a $T^*$-branch and vice versa. 
Suppose this is not the case and there is a $T^\dag$-branch $\mathcal{B}$ that overlaps with at least two $T^*$-branches. Clearly, there exists an edge $(u_1,u_2)$ in $\mathcal{B}$ such that they belong to different $T^*$-branches. Since $u_1,u_2$ are in the same $T^\dag$-branch, by Lemma~\ref{lem:constant}, we have $\delta(u_1)=\delta(u_2)$. Therefore, the two linked $T^*$-branches should have been merged, a contradiction. Therefore,  $\mathcal{B}$ can overlap with only one $T^*$-branch. That is, each $T^\dag$-branch is contained in a unique $T^*$-branch. On the other hand, because $T^\dag$ has the minimal edge changes to $T$, it has no more branches than $T^*$ does. Consequently, the $T^\dag$-branches are the same as the $T^*$-branches and, thus, $T^*$ is an SPT which has the minimal edge changes to $T$.  
\end{proof}

The procedure described above can be achieved by adding several lines of pseudocode to Algorithm~\ref{alg:+}, see Algorithm~\ref{alg:+r}. We first introduce (lines 1-2) a value $\lambda$ to denote the $\delta$ value of a branch and a set $\Sigma$ to collect all branches which have the same $\delta$ value $\lambda$. In each iteration, whenever a vertex $\la y_Q, (x_Q, \Delta)\ra$ is extracted, we do not update the parent of $y_Q$ immediately. Instead, we record its original parent as $Pt(y_Q)$ and then check if $\Delta>\lambda$. If so, then, by $\delta_i \leq \delta_{i+1}$ for $0\leq i \leq k$ (see Proposition~\ref{prop:Vi}),  $y_Q$ leads to a new branch that has bigger $\delta$ value and the set $\Sigma$ is full now, i.e., it contains the miniroots of all branches which have $\delta$ value $\lambda$. For each $y\in \Sigma$, we then check if $Pt(y)$, the original parent of $y$, has the same $\delta$ value as $y$. To this end, we need only check if  $\distance_{G'}(Pt(y))=d_{\widehat{T}}(Pt(y)) $ is identical to $\distance_G(Pt(y)) + \lambda$, where $\widehat{T}$ is the current spanning tree of $G'$. If this is the case, then we restore the parent of $y$ and the two linked branches containing $y$ and, respectively, $Pa(y)$ are merged. After all vertices in $\Sigma$ have been checked and all linked branches are merged, we empty $\Sigma$. No matter if $\Delta>\lambda$ or not, we next put $\la y_Q, Pt(y_Q) \ra$ in $\Sigma$ and collect all miniroots which have the $\delta$ value $\Delta$. 
Here we maintain $\Sigma$ as a list and each miniroot will be put in $\Sigma$ once and examined once. This shows that the revised while loop only introduces $O(n_s)$ extra operations, where $n_s<n_0$ is the number of extractions in Algorithm~\ref{alg:+}.

 \begin{algorithm}
   \DontPrintSemicolon%
 	$\lambda\gets\ 0$  \tcp*{$\lambda$ is the $\delta$ value of an iteration}
	$\Sigma \gets \varnothing$   \tcp*{$\Sigma$ is the set of extracted vertices with the same $\delta$ value}
	\While{$Q\not=\varnothing$}{
      		$\la y_Q, (x_Q,\Delta) \ra \gets \textsf{EXTRACTMIN}(Q)$  \tcp*{line 12 of Algorithm~\ref{alg:+}} 
	        $Pt(y_Q) \gets Pa(y_Q)$ \tcp*{$Pt(y_Q)$ stores the original parent of $y_Q$ in $T$}
                 \If{ $ \Delta > \lambda$} {
			\ForEach{$y\in \Sigma$}{
			        \If{$D(Pt(y)) = \distance_G(Pt(y)) + \lambda$ }
			        {$Pa(y) \gets Pt(y)$ \tcp*{The parent of $y$ is restored and two branches are merged}}
			}
			$\Sigma \gets \varnothing$\;
			$\lambda \gets \Delta$  \tcp*{Start the new $\Sigma$ with the new $\lambda$}
		}
		 \textsf{ENQUEUE}$(\Sigma, \la y_Q, Pa(y_Q) \ra)$\;
		Go to lines~\ref{line:13+}-\ref{line:24+} of Algorithm~\ref{alg:+}
                 }	
    \caption{The revised while loop of the incremental algorithm} \label{alg:+r}
 \end{algorithm}

\section{The decremental case}

In this section we assume that $T$ is an SPT of $G$,  $e_0 = (x_0,y_0)$,  and $\omega(e_0)$ has been decreased to $\omega'(e_0)$. Write $\theta=\omega'(e_0) - \omega(e_0)$. Clearly, $\theta < 0$.  We show how the {\sf Ball-String} algorithm can be adapted to updating SPT in this situation. 

\par
\vspace*{2mm}

\noindent \emph{Notations:} We write $G'$ for the directed graph obtained by decreasing the weight of $e_0$ from $\omega(e_0)$ to $\omega'(e_0)$. For any $u\in V$, we write $\distance_G(u) $ and $\distance_{G'}(u)$ for, respectively, the shortest distance in $G$ and $G'$. Note that, if there is a negative cycle in $G'$, it is possible that $\distance_{G'}(u) = - \infty$. 

We first observe several simple facts.

\begin{lemma} \label{lem:path_length}
For any simple path $\pi$ from $u$ to $v$ in $G'$, if $e_0$ is in $\pi$, then $\length_{G'}(\pi)$ $= \length_{G}(\pi) + \theta$; otherwise, $\length_{G'}(\pi) = \length_{G}(\pi)$, where $\theta=\omega'(e_0)-\omega(e_0)$. 
\end{lemma}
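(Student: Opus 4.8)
The plan is to argue directly from the definition of path length, splitting into the two stated cases according to whether the modified edge $e_0=(x_0,y_0)$ occurs on the path $\pi$. The key observation is that $G$ and $G'$ share the same vertex set and the same edge set, and their weight functions $\omega$ and $\omega'$ agree on every edge except $e_0$, where $\omega'(e_0)=\omega(e_0)+\theta$.

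First, suppose $e_0$ does not appear on $\pi$. Writing $\pi=\langle u_0,u_1,\dots,u_r\rangle$, every edge $(u_{i-1},u_i)$ on $\pi$ is distinct from $e_0$, so $\omega'(u_{i-1},u_i)=\omega(u_{i-1},u_i)$ for each $i$; summing over $i$ gives $\length_{G'}(\pi)=\length_G(\pi)$ immediately. Second, suppose $e_0$ does appear on $\pi$. Because $\pi$ is \emph{simple}, it visits no vertex twice, hence the edge $(x_0,y_0)$ can occur on $\pi$ at most once; so $e_0$ occurs exactly once. Splitting the defining sum of $\length_{G'}(\pi)$ into the contribution of that single occurrence of $e_0$ plus the contributions of the remaining edges (none of which is $e_0$), the remaining edges contribute exactly $\length_G(\pi)-\omega(e_0)$, while the $e_0$ term contributes $\omega'(e_0)=\omega(e_0)+\theta$; adding these yields $\length_{G'}(\pi)=\length_G(\pi)+\theta$.

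I expect this lemma to be essentially routine, so there is no real obstacle; the only point that genuinely requires the hypothesis is the claim that $e_0$ occurs \emph{at most once} on $\pi$, which is precisely where the simplicity assumption is used (on a non-simple path one could traverse $e_0$ several times, and the correction term would be a multiple of $\theta$). Note that this is the exact analogue, in the decremental setting, of Lemma~\ref{lem:len_paths} in the incremental setting; the sign of $\theta$ plays no role in the argument, only the fact that $\omega'(e_0)-\omega(e_0)=\theta$. One could alternatively deduce it as a special case of Lemma~\ref{lem:len_paths} applied to the simple path $\pi$, but spelling out the two-case argument keeps the decremental section self-contained.
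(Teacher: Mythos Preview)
Your proof is correct. The paper actually states Lemma~\ref{lem:path_length} without proof (treating it, like Lemma~\ref{lem:len_paths}, as an immediate observation), so there is nothing to compare against; your two-case argument from the definition of path length, using simplicity to guarantee that $e_0$ occurs at most once, is exactly the routine verification the authors are implicitly relying on.
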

\begin{lemma}\label{lem7}
 If $\distance_G(x_0)+\omega'(e_0) \geq \distance_G(y_0)$, then  $T$ remains an SPT of $G'$.
\end{lemma}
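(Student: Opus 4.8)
The plan is to use the original distance function $\distance_G(\cdot)$ as a feasible potential for the updated graph $G'$ and then read off the statement by the standard telescoping argument. Since $T$ is an SPT of $G$, every edge $(u,v)\in E$ satisfies $\distance_G(u)+\omega(u,v)\ge\distance_G(v)$ (with equality on the tree edges). The only edge whose weight differs in $G'$ is $e_0=(x_0,y_0)$, and the hypothesis $\distance_G(x_0)+\omega'(e_0)\ge\distance_G(y_0)$ is precisely the assertion that this inequality is preserved for $e_0$ under the new weight. Hence, writing $\omega^*$ for the weight function of $G'$ (so $\omega^*(e_0)=\omega'(e_0)$ and $\omega^*=\omega$ elsewhere), we have $\distance_G(u)+\omega^*(u,v)\ge\distance_G(v)$ for \emph{every} edge of $G'$.

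From this inequality I would extract two consequences. First, summing it around any cycle $C=\la u_0,u_1,\ldots,u_t=u_0\ra$ of $G'$, the potential terms cancel and we obtain $\length_{G'}(C)\ge 0$; thus $G'$ has no negative cycle. Second, since every vertex $v$ is reachable from $s$, summing the inequality along any path $\pi=\la s=u_0,u_1,\ldots,u_\ell=v\ra$ of $G'$, the potential terms telescope to $\distance_G(s)-\distance_G(v)=-\distance_G(v)$, giving $\length_{G'}(\pi)\ge\distance_G(v)$, and therefore $\distance_{G'}(v)\ge\distance_G(v)$ (in particular $\distance_{G'}(v)$ is finite).

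For the matching upper bound, and to identify $T$ itself as an SPT of $G'$, I would first observe that under the hypothesis $e_0$ cannot be a tree edge of $T$: if it were, then $\distance_G(y_0)=\distance_G(x_0)+\omega(e_0)$, so the hypothesis would force $\omega'(e_0)\ge\omega(e_0)$, i.e.\ $\theta\ge 0$, contradicting $\theta<0$ in the decremental case. Consequently $e_0$ lies on no tree path $\pi_{s,v}$, so by Lemma~\ref{lem:path_length} we get $\length_{G'}(\pi_{s,v})=\length_{G}(\pi_{s,v})=\distance_G(v)$; combined with $\distance_{G'}(v)\ge\distance_G(v)$ this yields $\distance_{G'}(v)=\distance_G(v)=\length_{G'}(\pi_{s,v})$ for every $v$. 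Hence every root-to-vertex path of $T$ is a shortest path in $G'$, i.e.\ $T$ remains an SPT of $G'$.

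There is no real obstacle here; the argument is the routine feasible-potential/telescoping computation, and the only point that needs a moment's care is the observation that the hypothesis rules out $e_0$ being a tree edge (in the decremental setting), which is exactly what keeps the tree-path lengths unchanged. I would write out the two telescoping sums explicitly but keep the calculation brief.
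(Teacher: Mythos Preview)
Your argument is correct and is in fact cleaner than the paper's own proof, though the underlying ideas are close. The paper argues by case analysis on an arbitrary (simple) path $\pi$ from $s$ to $u$: if $e_0\notin\pi$ then $\length_{G'}(\pi)=\length_G(\pi)\ge\distance_G(u)$; if $e_0\in\pi$ then it splits $\pi$ at $x_0,y_0$ and chains the inequalities $\length_{G'}(\pi[s,x_0])\ge\distance_G(x_0)$, $\distance_G(x_0)+\omega'(e_0)\ge\distance_G(y_0)$, and $\distance_G(y_0)+\length_G(\pi[y_0,u])\ge\distance_G(u)$, with an extra inductive remark to handle non-simple paths in which $e_0$ occurs multiply. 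Your potential/telescoping argument replaces all of this with a single uniform computation: once you observe that $\distance_G(u)+\omega^*(u,v)\ge\distance_G(v)$ holds on \emph{every} edge of $G'$ (the hypothesis supplying exactly the missing inequality at $e_0$), the lower bound $\length_{G'}(\pi)\ge\distance_G(v)$ drops out with no case split and no separate treatment of non-simple paths, and the absence of negative cycles is a free by-product. Both proofs establish that $e_0$ is not a tree edge in the same way, and both finish by noting that the tree paths keep their $G$-lengths. What your approach buys is brevity and uniformity; what the paper's approach buys is that it stays closer to the concrete path-decomposition language used elsewhere in Section~4.
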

\begin{proof}
If $\distance_G(x_0)+\omega'(e_0) \geq \distance_G(y_0)$, then $e_0=(x_0,y_0)$ is not in $T$ as $\distance_G(x_0) + \omega(e_0)$ is larger than $\distance_G(y_0)$. Therefore, for every vertex $u$, the path from $s$ to $u$ in $T$ has length $\distance_G(u)$ in $G'$. We next show that every other path $\pi$ from $s$ to $u$ has length $\geq \distance_G(u)$. For any simple path $\pi$ from $s$ to $u$ in $G'$, if $e_0$ is in $\pi$, then, since $e_0\not \in \pi[y_0,u]$ and by Lemma~\ref{lem:path_length}, we have $\length_{G'}(\pi[y_0,u])=\length_{G}(\pi[y_0,u])$. Therefore, we have $\length_{G'}(\pi) \geq \distance_G(x_0) + \omega'(x_0,y_0) + \length_{G'}(\pi[y_0,u]) \geq  \distance_G(y_0) + \length_{G'}(\pi[y_0,u])  = \distance_G(y_0) + \length_{G}(\pi[y_0,u]) \geq \distance_G(u)$; if $e_0$ is not in $\pi$, then $\length_{G'}(\pi) = \length_{G}(\pi) \geq \distance_G(u)$. In case $e_0$ appears several times along a path from $s$ to $u$, we can prove inductively that $\length_{G'}(\pi) \geq \distance_G(u)$. Thus $T$ remains an SPT of $G'$. 
\end{proof}

In the following, we characterise when $G'$ has a negative cycle. 

\begin{lemma}
\label{lem:ncycle}
$G'$ has a negative cycle  if and only if there exists a simple path $\pi$ from $y_0$ to $x_0$ in $G$ such that $\length_G(\pi) + \omega'(x_0,y_0)<0$. 
\end{lemma}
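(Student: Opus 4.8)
The plan is to prove both directions by relating negative cycles in $G'$ to negative cycles in $G$ that must use the modified edge $e_0$. For the $(\Leftarrow)$ direction, suppose $\pi$ is a simple path from $y_0$ to $x_0$ in $G$ with $\length_G(\pi) + \omega'(x_0,y_0) < 0$. Since $e_0=(x_0,y_0)$ is not an edge of $\pi$ (as $\pi$ goes from $y_0$ to $x_0$ and is simple, so it cannot contain $x_0$ until its endpoint and cannot revisit $y_0$), Lemma~\ref{lem:path_length} gives $\length_{G'}(\pi) = \length_G(\pi)$. Then the cycle $\la y_0\ra + \pi + (x_0,y_0)$ — i.e.\ $\pi$ followed by the edge $e_0$ — is a cycle in $G'$ whose $G'$-length is $\length_{G'}(\pi) + \omega'(x_0,y_0) = \length_G(\pi) + \omega'(x_0,y_0) < 0$, so $G'$ has a negative cycle.

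For the $(\Rightarrow)$ direction, suppose $G'$ has a negative cycle. First observe it must use $e_0$: if a cycle $C$ in $G'$ avoids $e_0$, then by Lemma~\ref{lem:path_length} $\length_{G'}(C) = \length_G(C) \geq 0$ since $G$ has an SPT and hence no negative cycle. So every negative cycle of $G'$ contains $e_0$ at least once. Take a negative cycle $C$ in $G'$ and consider the portion of $C$ strictly between consecutive occurrences of $e_0$; I would argue one can extract from $C$ a negative \emph{closed walk} that uses $e_0$ exactly once, namely a closed walk of the form $(x_0,y_0)$ followed by a walk $P$ from $y_0$ back to $x_0$ that does not use $e_0$, with $\length_{G'}(P) + \omega'(x_0,y_0) = \length_G(P) + \omega'(x_0,y_0) < 0$. (If the original $C$ uses $e_0$ several times, break it at one occurrence and note that among the resulting sub-closed-walks obtained by a standard decomposition argument at least one is negative; iterate until $e_0$ is used once.) The remaining step is to pass from the walk $P$ to a \emph{simple path}: decompose $P$ into a simple path $\pi$ from $y_0$ to $x_0$ plus a collection of cycles, each of which is a cycle in $G$ not containing $e_0$ (since $P$ avoids $e_0$) and hence of nonnegative $G$-length; discarding them only decreases the length, so $\length_G(\pi) + \omega'(x_0,y_0) \leq \length_G(P) + \omega'(x_0,y_0) < 0$, which is exactly what we need.

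The main obstacle is the bookkeeping in the $(\Rightarrow)$ direction: handling a negative cycle that traverses $e_0$ multiple times, and then cleanly extracting a simple $y_0$-to-$x_0$ path while controlling the length. Both sub-steps rely on the same principle — that $G$ has no nonpositive-length \emph{gain} from inserting extra cycles, because $G$ is consistent (it has an SPT, so no negative cycle) — so the cycles we throw away each have $\length_G \geq 0$. I would state this cycle-decomposition fact explicitly as the workhorse. Everything else (the $(\Leftarrow)$ direction and the ``must use $e_0$'' observation) is a direct application of Lemma~\ref{lem:path_length} together with the standing assumption that $G$ itself has no negative cycle.
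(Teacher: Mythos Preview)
Your proposal is correct and follows essentially the same route as the paper: for $(\Leftarrow)$ you append $e_0$ to the simple $y_0$--$x_0$ path, and for $(\Rightarrow)$ you argue any negative cycle in $G'$ must contain $e_0$, reduce to one occurrence of $e_0$ by splitting into sub-cycles (at least one negative), and then strip out the remaining non-$e_0$ cycles (all nonnegative in $G$) to obtain a simple path. The paper's proof is terser but makes exactly the same two moves in the same order.
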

\begin{proof}
Suppose $\pi' = \la u_0, u_1, ..., u_{\ell}, u_{\ell+1}=u_0 \ra$ is a negative cycle in $G'$. By Lemma~\ref{lem:path_length} and the fact that $G$ has no negative cycle, $(x_0,y_0)$ is in $\pi'$. Without loss of generality, we assume that $u_0=u_{\ell+1}=x_0$ and $u_1=y_0$. If $(x_0,y_0)$ appears more than once, then, letting $(u_{i}=x_0, u_{i+1}=y_0)$ be the second occurrence, we have two cycles $\pi'[u_0,u_i]$ and $\pi'[u_i,u_{\ell+1}]$. Apparently, at least one cycle is negative. So we may assume that $(x_0,y_0)$ only appears once in $\pi'$.  If $\pi'$ contains other cycles that do not contain $(x_0,y_0)$, we may remove them from $\pi'$ as they have non-negative lengths.   Thus, $\pi=\pi'[u_1,u_{\ell+1}]$ is a simple path from $y_0$ to $x_0$ such that $\length (\pi) + \omega'(x_0,y_0)<0$. The other side is clear as $(x_0,y_0)+\pi$ is a negative cycle in $G'$.
\end{proof} 

As a corollary, we have 
\begin{lemma}\label{lem:negcyc}
If $\distance_G(x_0)+\omega'(x_0,y_0) <  \distance_G(y_0)$ and $y_0$ appears on $\pi_{s,x_0}$ (the path from $s$ to $x_0$ in $T$), then there is a negative cycle at $y_0$. 
\end{lemma}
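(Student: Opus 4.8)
The plan is to exhibit an explicit negative cycle through $y_0$ by combining two pieces of path information: the tree path $\pi_{s,x_0}$ in $T$, which by hypothesis passes through $y_0$, and the (new) edge $e_0=(x_0,y_0)$. First I would note that since $T$ is an SPT of $G$, the path $\pi_{s,x_0}$ is a shortest path in $G$, so $\length_G(\pi_{s,x_0}) = \distance_G(x_0)$, and moreover its subpath $\pi_{s,x_0}[s,y_0]$ is also a shortest path, giving $\length_G(\pi_{s,x_0}[s,y_0]) = \distance_G(y_0)$. Consequently the portion $\pi \equiv \pi_{s,x_0}[y_0,x_0]$ has length $\length_G(\pi) = \distance_G(x_0) - \distance_G(y_0)$ in $G$; note $\pi$ is a simple path from $y_0$ to $x_0$ because it is a subpath of a tree path, and it does not contain $e_0$ (an SPT path cannot contain $e_0=(x_0,y_0)$ as an edge when it ends at $x_0$ and passes through $y_0$ earlier, since that would force a repeated vertex).

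Next I would form the cycle $C = (x_0,y_0) + \pi$, i.e., take the edge $e_0$ from $x_0$ to $y_0$ and then follow $\pi$ back from $y_0$ to $x_0$; this is a cycle based at $y_0$ (equivalently at $x_0$). Its length in $G'$ is $\length_{G'}(C) = \omega'(x_0,y_0) + \length_{G'}(\pi)$, and since $e_0 \notin \pi$, Lemma~\ref{lem:path_length} gives $\length_{G'}(\pi) = \length_G(\pi) = \distance_G(x_0) - \distance_G(y_0)$. Therefore
\[
\length_{G'}(C) = \omega'(x_0,y_0) + \distance_G(x_0) - \distance_G(y_0) = \big(\distance_G(x_0) + \omega'(x_0,y_0)\big) - \distance_G(y_0),
\]
which is strictly negative precisely by the hypothesis $\distance_G(x_0)+\omega'(x_0,y_0) < \distance_G(y_0)$. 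Hence $C$ is a negative cycle, and it passes through $y_0$, as required.

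This is essentially a routine verification, so there is no serious obstacle; the one point that needs a little care is the claim that $\pi$, the subpath of $\pi_{s,x_0}$ from $y_0$ to $x_0$, does not use the edge $e_0=(x_0,y_0)$ — but this is immediate since $\pi$ is a subpath of the simple tree path $\pi_{s,x_0}$, so each vertex (in particular $x_0$) appears exactly once, ruling out the arc $(x_0,y_0)$ occurring inside it. Alternatively, one can bypass this by invoking Lemma~\ref{lem:ncycle} directly: the simple path $\pi$ from $y_0$ to $x_0$ satisfies $\length_G(\pi) + \omega'(x_0,y_0) = \distance_G(x_0) - \distance_G(y_0) + \omega'(x_0,y_0) < 0$, so $G'$ has a negative cycle; and since $(x_0,y_0)+\pi$ is that cycle and it contains $y_0$, the negative cycle is at $y_0$.
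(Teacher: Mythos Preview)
Your proof is correct and follows essentially the same approach as the paper: both extract the tree subpath $\pi_{s,x_0}[y_0,x_0]$, observe that its length in $G$ equals $\distance_G(x_0)-\distance_G(y_0)$ (since subpaths of shortest paths are shortest), and then append the edge $(x_0,y_0)$ with its new weight $\omega'(x_0,y_0)$ to obtain a cycle whose length is exactly $\distance_G(x_0)+\omega'(x_0,y_0)-\distance_G(y_0)<0$. Your version is slightly more explicit in verifying that $e_0$ does not lie on the tree subpath and in passing from $G$-length to $G'$-length via Lemma~\ref{lem:path_length}, but the argument is the same.
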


\begin{proof}
Let $\pi_{s,x_0} = \la u_0=s, u_1, ..., u_i=y_0, ..., u_\ell =x_0 \ra $ be the path from $s$ to $x_0$ in $T$. Since $\pi_{s,x_0}$ is a shortest path in $G$, we know $\pi_{s,x_0}[u_i,u_\ell]$ and $\pi_{s,x_0}[s,u_i=y_0]$ are also shortest paths in $G$. As $G$ has no negative cycle, $\pi_{s,x_0}[u_i,u_\ell]+ (x_0,y_0)$ is a non-negative cycle. Because
\begin{align*}
\distance_G(y_0) & >  \distance_G(x_0)+\omega'(x_0,y_0)\\
&= \length_{G}(\pi_{s,x_0}) + \omega'(x_0,y_0)\\
& = \length_{G}(\pi_{s,x_0}[u_0=s,u_i=y_0]) +  \length_{G}(\pi_{s,x_0}[u_i=y_0, u_\ell=x_0]) +\omega'(x_0,y_0) \\
& =  \distance_G(y_0) +  \length_{G}(\pi_{s,x_0}[u_i, u_\ell]) +\omega'(x_0,y_0),
\end{align*} 
we know $\length_{G}(\pi_{s,x_0}[u_i=y_0, u_\ell=x_0]) +\omega'(x_0,y_0) < 0$ and thus $\pi_{s,x_0}[u_i, u_\ell] + (x_0, y_0)$ is a negative path in $G'$.
\end{proof}
\subsection{Description of the algorithm} 

Suppose $\distance_G(x_0)+\omega'(x_0,y_0) < \distance_G(y_0)$. Starting from $e_0=(x_0,y_0)$, we set 
$$\Delta_T(e_0)=  \distance_G(x_0) + \omega'(x_0,y_0) - \distance_G(y_0)$$ as in Eq.~\eqref{eq:Delta(e)}, which measures the difference between the candidate shortest path $\pi_{s,x_0}+e_0$ in $G'$ and the baseline path $\pi_{s,y_0}$, where $\pi_{s,x_0}$ and $\pi_{s,y_0}$ are the paths from $s$ to $x$ and, respectively, $y_0$ in $T$. Note that $\Delta_T(e_0) < 0$ under our assumption. Furthermore, since $\distance_G(x_0)+\omega(x_0,y_0)  \geq \distance_G(y_0)$, we have $\Delta_T(e_0) \geq \theta $. Thus $\theta \leq \Delta_T(e_0) < 0$. As in Algorithm~\ref{alg:+}, we introduce a queue $Q$ of vertices. Each element of $Q$ has form $\la y, (p, \Delta)\ra$, where $y$ is a vertex, $p$ is a candidate parent of $y$, and $\Delta$ the gain of the candidate path against the baseline path.

 If $y_0$ is not an ancestor of $x_0$ in $T$, then we put $\la y_0, (x_0,\Delta_T(e_0))\ra$ in $Q$ (line~\ref{line:alg-9}) and write $T_1$ for the revised spanning tree.  In the first iteration,  if $e_0$ is not in $T$, then we move the subtree with root $y_0$ in $T$ directly under $x_0$, i.e., changing the parent of $y_0$ as $x_0$ (line~\ref{line:alg-12}). Moreover, we update (in line~\ref{line:alg-16})  the distance from $s$ to each vertex $v$ in $X_1 \equiv\ \downarrow\!y_0$ as $\distance_G(v)+\Delta_T(e_0)$. 

In general, suppose $T_i$ is the spanning tree constructed from $T$ after the $i$-th iteration. We extract from $Q$ the vertex with the smallest $\Delta$ value, written as $\la y_i, (x_i, \Delta)\ra$. By Lemma~\ref{lem:2trees}, we can show $\Delta=\Delta_{T_i}(x_i,y_i)$. We then change the parent of $y_i$ as $x_i$ (line~\ref{line:alg-12}) and obtain a revised spanning tree $T_{i+1}$. The descendants of $y_i$ in $T_i$ are now ready to consolidate (line~\ref{line:alg-16}). Meanwhile, we remove (line~\ref{line:alg-17}) all descendants of $y_i$ in $T_i$ from $Q$, as they have been consolidated and their distances from $s$ cannot be further improved. Write $X_{i+1}$ as the set of descendants of $y_i$ in tree $T_{i}$ and let $Y_{i+1} = Y_i \setminus X_{i+1}$. For any edge $e=(x,y)$ in $G$ with $x\in X_{i+1}$ and $y\in Y_{i+1}$, we define 
$$\Delta_{T_{i+1}}(e) = d_{T_{i+1}}(x) + \omega(x,y) - \distance_G(y)$$ as in Eq.~\eqref{eq:Delta(e)}. Here $d_{T_{i+1}}(x)$ represents the length of the path $\widehat{\pi}_{s,x}$ from $s$ to $x$ in $T_{i+1}$. Note that $x$ has been consolidated in this iteration, and we want to see if we can propagate this change to $y$ through edge $e$.  The value $\Delta_{T_{i+1}}(e)$ represents the \emph{gain} of the new candidate path $\widehat{\pi}_{s,x}+e$ over the baseline path $\pi_{s,y}$. The smaller $\Delta_{T_{i+1}}(e)$ is the better.  
In case $\Delta_{T_{i+1}}(e) \geq 0$, it is impossible to \emph{decrease} the distance of $y$ via $e$, and, thus, not necessary to consider $e$ any more. Otherwise, we put $\la y, (x,\Delta_{T_{i+1}}(e))\ra$ in $Q$ if $y$ is not there or replace the value of $y$ in $Q$ with $\Delta_{T_{i+1}}(e)$  if the latter is smaller (line~\ref{line:alg-25}).  

If $G'$ has no negative cycle, the algorithm will stop when $Q$ is empty.

Recall that a directed graph is \emph{inconsistent} if it has a negative cycle. The procedure of the decremental algorithm (Algorithm~\ref{alg:-}) is  similar to that of the incremental algorithm (Algorithm~\ref{alg:+}). However, it is possible that the edge decrement may result in inconsistency. Lemma~\ref{lem:negcyc} describes such a  simple situation. In general, we can decide if negative cycles exist by checking, as shall be guaranteed by Theorem~\ref{thm4}, if any extracted vertex is an ancestor of $x_0$ (see lines~\ref{line:y_0_inA} and \ref{line:inA}).

 \begin{algorithm}
   \DontPrintSemicolon%
   \SetKwInOut{Input}{Input}%
   \SetKwInOut{Output}{Output}%
   \SetKw{KwSt}{s.t.}%
     \Input{%
       A shortest-path tree $T$ of $G=(V,E)$ and an edge $e_0=(x_0,y_0)$ with a smaller weight $\omega'(e_0) < \omega(e_0)$%
     }%
     \Output{%
      Inconsistent or a shortest-path tree of $G'=(V,E')$, where $E'$ differs from $E$ only in that the weight of $e_0$ is increased from $\omega(e_0)$ to $\omega'(e_0)$.%
     }%
     \BlankLine%
    \textsf{newdist} $\gets \distance_G(x_0) + \omega'(e_0)$\;
     \lIf{\rm \textsf{newdist} $\geq \distance_G(y_0)$}{\Return $T$ \label{line:2-}}\Else{ 
  $A \gets \uparrow\! x_0$ \tcp*{The set of ancestors of $x_0$ in $T$, including $x_0$}    
 $Y \gets V$ \tcp*{The set of unsettled vertices} 
    $\Delta(y_0) \gets {\rm \textsf{newdist}} - \distance_G(y_0)$ \label{line:6-}\;
         \lIf{$y_0\in A$}{\Return \textsf{inconsitent} \label{line:y_0_inA}}\Else{ 
    	\textsf{ENQUEUE}$(Q, \la y_0,(x_0,\Delta(y_0))\ra)$} \label{line:alg-9} 	
\While{$Q\not=\varnothing$}{
      $\la y_Q, (x_Q,\Delta) \ra \gets \textsf{EXTRACTMIN}(Q)$\; 
	\tcp*{When there are various vertices with the same minimum $\Delta$, we select any one with the smallest distance from $s$.} 
         $Pa(y_Q) \gets x_Q$  \label{line:alg-12}    
         \tcp*{Change the parent of $y_Q$ to $x_Q$}
                $X\gets  \downarrow\! y_Q$            \tcp*{$X$ is the set of vertices to be settled in this iteration} 
        $Y \gets Y \setminus X$        \tcp*{$Y$ is the set of unsettled vertices}
        \ForEach{$x\in X$}{
         $D(x) \gets \distance_G(x) + \Delta$  \label{line:alg-16}     \tcp*{All vertices in $X$ are settled.}
                        \lIf{$x \in Q$}{\rm \textsf{REMOVE}$(x,Q)$} \label{line:alg-17} 
        }
	 \ForEach{\rm $(x,y)\in E$ with $x\in X$ and $y \in Y$}{ \label{line:foreach(xy)}
	 \textsf{newdist} $\gets D(x) + \omega(x,y)  $\;
	 \If{ \rm \textsf{newdist} $< \distance_G(y)$ \label{line:20-}} {
	 \lIf{$y\in A$}{\Return \rm \textsf{inconsistent}} \label{line:inA}
	 \Else{
	 $\Delta \gets  \textsf{newdist} - \distance_G(y)$\;
	 \textsf{ENQUEUE}$(Q, \la y, (x,\Delta) \ra)$ \label{line:alg-25}
	 }
	}
	 }
	 }
\Return $T$
}
     \caption{The decremental algorithm} \label{alg:-}
 \end{algorithm}

\subsection{Correctness of the algorithm}

Given an input graph $G=(V,E)$, its SPT $T$, and an edge $e_0=(x_0,y_0)$ with weight decreased from $\omega(e_0)$ to $\omega'(e_0)$, suppose a sequence of $k$ edges $e_i=(x_i,y_i)$ ($i=0,1,...,k-1$) are extracted before the algorithm stops. We note here that, unlike the incremental case, $e_0$ will be put in and extracted from $Q$ first if there is any change of $T$ at all. Write 
\begin{eqnarray} \label{eq:X1}
X_1 &=& \downarrow\!{y_0}\\
\label{eq:X{i+1}}
 X_{i+1} &=& \downarrow\!{y_{i}}\setminus \bigcup_{j=1}^{i} X_j \quad (1\leq i < k)\\ 
 \label{eq:X{k+1}}
 X_{k+1} &=& V\setminus \bigcup_{j=1}^{k} X_j.
 \end{eqnarray} 
 Here $X_{i}$ is understood as the set of vertices that are consolidated in the $i$-th iteration for $i=1,...,k$. For each $1\leq i\leq k-1$, we have $x_i \in \bigcup_{j=1}^i X_j$ and $y_i \in V \setminus \bigcup_{j=1}^i X_j$. Let $T_0=T$ and $T_{i+1}$  $(0 \leq i \leq k-1)$ be the spanning tree modified from $T_i$ after $e_{i}= (x_{i},y_{i})$ is extracted and the parent of $y_i$ has been changed as $x_i$. We define 
 \begin{align}\label{eq:delta{1}}
 \delta_1 &= \Delta_{T_0}(e_{0}) = d_{T_{0}}(x_i) + \omega'(x_0,y_0) -\distance_G(y_0) \\
 \label{eq:delta{i+1}}
 \delta_{i+1} &=  \Delta_{T_i}(e_{i}) = d_{T_{i}}(x_i) + \omega(x_i,y_i) -\distance_G(y_i) \quad\quad (1\leq i \leq k-1).
 \end{align}

\begin{lemma}
\label{lem:yi_path}
 For any $v\in X_{i+1}$, let $\pi$ be the path from $s$ to $v$ in $T_{i+1}$. Then  $d_{T_{i+1}}(v) \equiv \length_{G'}(\pi) = \distance_G(v) + \delta_{i+1} $, where $\delta_{i+1}$ is defined as in \eqref{eq:delta{1}}-\eqref{eq:delta{i+1}}.
\end{lemma}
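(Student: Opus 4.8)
The plan is to argue by induction on $i$ (from $0$ to $k-1$), with the bulk of the effort going into two structural facts about the trees $T_0=T,T_1,T_2,\dots$ produced by the algorithm — facts I would set up first, in the spirit of Proposition~\ref{prop:Vi} in the incremental case. First, each step changes the parent of only one vertex, namely $y_j$ in the step extracting $e_j$, so every vertex outside $\{y_0,\dots,y_{i-1}\}$ still has its $T$-parent in $T_i$; since $y_j\in X_{j+1}$ and the sets $X_1,\dots,X_{k+1}$ are pairwise disjoint, this forces, for each $v\in X_{i+1}$, the whole $T$-path $\pi_{s,v}[y_i,v]$ to stay inside $X_{i+1}$ and to coincide with the tree path from $y_i$ to $v$ in $T_i$. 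Second, $X_{i+1}$ is precisely the set of descendants of $y_i$ in $T_i$, so passing from $T_i$ to $T_{i+1}$ moves that subtree intact under $x_i$ and in particular does not disturb the $s$-to-$x_i$ path, because $x_i\in\bigcup_{j\le i}X_j$ is disjoint from $X_{i+1}$; for the base case $i=0$ one uses instead that the algorithm did not return \textsf{inconsistent} at line~\ref{line:y_0_inA}, i.e.\ $y_0\notin\ \uparrow\! x_0$, hence $x_0\notin\ \downarrow\! y_0$ in $T$.

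Granting these, I would fix $v\in X_{i+1}$ and split the path $\pi$ from $s$ to $v$ in $T_{i+1}$ as $\pi=\pi^{T_{i+1}}_{s,x_i}+(x_i,y_i)+\pi^{T_{i+1}}_{y_i,v}$. The prefix has length $d_{T_{i+1}}(x_i)=d_{T_i}(x_i)$ by the second fact and Lemma~\ref{lem:2trees}. The middle edge contributes $\omega^*(e_i)$, which is $\omega'(e_0)$ when $i=0$ and $\omega(x_i,y_i)$ when $i\ge 1$ — because whenever $T$ is modified at all, $e_0$ is the first edge extracted and $y_0$, once settled, is never re-enqueued, so $e_i\neq e_0$ for $i\ge1$. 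For the suffix, the first structural fact gives $\pi^{T_{i+1}}_{y_i,v}=\pi^{T_i}_{y_i,v}=\pi_{s,v}[y_i,v]$, a subpath of the shortest path encoded by $T$, whence $\length_G(\pi^{T_{i+1}}_{y_i,v})=\distance_G(v)-\distance_G(y_i)$; and $e_0$ does not lie on it, since for $i\ge1$ its head $y_0$ belongs to $X_1$ while every vertex on this path lies in $X_{i+1}$, and for $i=0$ the path emanates from $y_0$ so the edge into $y_0$ cannot be one of its edges. Lemma~\ref{lem:path_length} then gives $\length_{G'}(\pi^{T_{i+1}}_{y_i,v})=\distance_G(v)-\distance_G(y_i)$ as well.

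Summing the three contributions yields
\[
d_{T_{i+1}}(v)=\length_{G'}(\pi)=d_{T_i}(x_i)+\omega^*(e_i)+\bigl(\distance_G(v)-\distance_G(y_i)\bigr)=\distance_G(v)+\delta_{i+1},
\]
the last step being just the definition of $\delta_{i+1}$ in \eqref{eq:delta{1}}--\eqref{eq:delta{i+1}} (with $\omega^*(e_0)=\omega'(e_0)$ when $i=0$). The degenerate case $v=y_i$ needs no special treatment: the suffix is then empty and both sides reduce to $d_{T_{i+1}}(y_i)=\distance_G(y_i)+\delta_{i+1}$.

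The computation in the last two paragraphs is a one-line telescoping once the tree surgery is understood, so I expect the real obstacle to be the two structural facts: that a settled vertex never re-enters the subtree being moved, that $X_{i+1}$ genuinely equals the set of $T_i$-descendants of $y_i$, and that this subtree still carries its original $T$-structure on the unsettled part. These are the decremental analogues of the bookkeeping buried inside the proof of Proposition~\ref{prop:Vi}, and each is driven by the pairwise disjointness of the $X_j$'s together with the single-reparenting nature of every step.
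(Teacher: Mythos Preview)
Your proposal is correct and follows essentially the same route as the paper's proof: decompose the $T_{i+1}$-path from $s$ to $v$ as prefix-to-$x_i$, edge $(x_i,y_i)$, and suffix $\pi_{s,v}[y_i,v]$, then add the three lengths and invoke the definition of $\delta_{i+1}$. The paper compresses this into a few lines, asserting without argument that the $T_{i+1}$-path to $v$ is $\widehat{\pi}_{s,y_i}+\pi_{s,v}[y_i,v]$ and that the suffix is a shortest simple path in both $G$ and $G'$; you make explicit the two structural facts underpinning that assertion (unchanged $T$-parents outside $\{y_0,\dots,y_{i-1}\}$; pairwise disjointness of the $X_j$ forcing the suffix inside $X_{i+1}$ and hence avoiding $e_0$), which is a genuine improvement in rigour but not a different method.
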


\begin{proof}
By construction, the result clearly holds for any  $v\in X_1$. Suppose $i>0$. 
Since $d_{T_{i}}(x_i)$ is the length of the path from $v$ to $x_i$ in $T_{i}$ and $x_i$ is the parent of $y_i$ in $T_{i+1}$,   by definition of $\delta_{i+1}$, the path $\widehat{\pi}_{s,y_i}$ from $s$ to $y_i$ in $T_{i+1}$ has length $d_{T_{i}}(x_i) + \omega(x_i,y_i)  =\distance_G(y_{i}) + \delta_{i+1}$. For any $v\in X_{i+1} =\; \downarrow\!{y_i} \setminus \bigcup_{j=1}^i X_j$, the path $\pi$ from $s$ to $v$ in $T_{i+1}$ is the concatenation of $\widehat{\pi}_{s,y_i}$ and $\pi_{s,v}[y_i,v]$, where $\pi_{s,v}$ is the tree path from $s$ to $v$ in $T$. In particular, $\pi_{s,v}[y_i,v]$ is a shortest simple path in both $G$ and $G'$.
Thus the length of $\pi = \widehat{\pi}_{s,y_{i}}+\pi_{s,v}[y_{i},v]$ in $G'$ is $\distance_G(y_i) + \delta_{i+1} + \length_{G}( \pi_{s,v}[y_{i},v]) =  \distance_G(v) + \delta_{i+1}$, as $\distance_G(y_i) = \length_{G}(\pi_{s,v}[s,y_{i}])$ and $\distance_G(v) = \length_{G}( \pi_{s,v})$.
\end{proof}

Then we have the following result, which confirms that Algorithm~\ref{alg:-} always exploits the most profitable paths first.
\begin{lemma}
\label{lem:delta_decrease}
Suppose $\omega'(x_0,y_0) < \omega(x_0,y_0)$ and $e_i=( x_i,y_i )$ $(i=0,1,...,k-1)$ is the sequence of edges that are extracted before the algorithm stops. Then $\theta \leq \delta_1 \leq \delta_2 \leq ... \leq \delta_k < 0$, where $\theta = \omega'(x_0,y_0)-\omega(x_0,y_0)$.
\end{lemma}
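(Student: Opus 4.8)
The plan is to prove three separate claims: \textbf{(i)} $\delta_i<0$ for every $1\le i\le k$; \textbf{(ii)} $\delta_1\ge\theta$; and \textbf{(iii)} $\delta_i\le\delta_{i+1}$ for every $1\le i\le k-1$. Chaining these gives $\theta\le\delta_1\le\dots\le\delta_k<0$. Claims (i) and (ii) are short. For (ii) I would rewrite $\delta_1=d_{T_0}(x_0)+\omega'(e_0)-\distance_G(y_0)$, use $d_{T_0}(x_0)=\distance_G(x_0)$ since $T_0=T$ is an SPT of $G$, substitute $\omega'(e_0)=\omega(e_0)+\theta$, and conclude $\delta_1=\bigl(\distance_G(x_0)+\omega(e_0)-\distance_G(y_0)\bigr)+\theta\ge\theta$ from the triangle inequality $\distance_G(x_0)+\omega(e_0)\ge\distance_G(y_0)$; this inequality is available because $G$ has no negative cycle and every vertex is reachable from $s$, so $\distance_G$ is everywhere finite. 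For (i): $\delta_1=\textsf{newdist}-\distance_G(y_0)<0$ because the algorithm enters the main loop only when the guard at line~\ref{line:2-} fails, i.e.\ $\textsf{newdist}<\distance_G(y_0)$; and for $i\ge1$, the edge $e_i$ was enqueued at line~\ref{line:alg-25} before being extracted, which forces its stored key to be negative, and by Lemma~\ref{lem:2trees} that key equals $\Delta_{T_i}(e_i)=\delta_{i+1}$ (the path from $s$ to the already-consolidated vertex $x_i$ coincides in the tree at the time of the relevant enqueue and in $T_i$).

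The core of the argument is (iii), the decremental counterpart of the monotonicity chain proved inside Proposition~\ref{prop:Vi}. Fix $1\le i\le k-1$ and follow the key of $y_i$ in $Q$. First note $y_i\ne y_{i-1}$: $y_i$ is extracted in a later iteration, whereas once a vertex is extracted it is consolidated, removed from $Q$, and never re-enqueued (subsequent \textsf{ENQUEUE} operations at line~\ref{line:alg-25} target only unsettled vertices). Hence, at the instant $e_{i-1}$ is extracted, $y_i$, if it is in $Q$, carries a key $\ge\delta_i$ because \textsf{EXTRACTMIN} returns the minimum. During the iteration that extracts $e_{i-1}$, the key of $y_i$ can only decrease through an \textsf{ENQUEUE} at line~\ref{line:alg-25} triggered by an edge $(x,y_i)$ with $x\in X_i$; for such an $x$, line~\ref{line:alg-16} has just set $D(x)=\distance_G(x)+\delta_i$, so the key offered for $y_i$ is $D(x)+\omega(x,y_i)-\distance_G(y_i)=\bigl(\distance_G(x)+\omega(x,y_i)-\distance_G(y_i)\bigr)+\delta_i\ge\delta_i$, again by the triangle inequality in $G$. (If $y_i$ is not in $Q$ at all when $e_{i-1}$ is extracted, it must be freshly enqueued during that iteration, and the same estimate applies.) Since no further queue operations occur before $e_i$ is extracted in the next iteration, its extracted key $\delta_{i+1}$ is still $\ge\delta_i$, which is (iii).

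The one point that needs care, and which I expect to be the main difficulty to write cleanly, is the repeated identification (used in (i) and implicitly in (iii)) of the key with which a vertex leaves $Q$ with the quantity $\Delta_{T_i}(e_i)$ appearing in the definition \eqref{eq:delta{i+1}}. This rests on the observation that, once a vertex is consolidated, no parent along its tree path is ever changed again (in each iteration the sole parent update is that of the freshly extracted, previously unsettled vertex), so its distance is identical in $T_i$ and in every later tree; Lemma~\ref{lem:2trees} then does the rest. Everything else (the partition property of the $X_i$'s, the identity $D(x)=\distance_G(x)+\delta_i$ for $x\in X_i$ coming from Lemma~\ref{lem:yi_path}, and finiteness of $\distance_G$) is routine bookkeeping.
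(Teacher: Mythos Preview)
Your proposal is correct and follows essentially the same approach as the paper's proof: establish $\theta\le\delta_1$ via the triangle inequality $\distance_G(x_0)+\omega(e_0)\ge\distance_G(y_0)$, observe that every extracted key is negative by the enqueue guard, and prove monotonicity $\delta_i\le\delta_{i+1}$ by a two-case analysis (whether $e_i$'s key was last set during the $i$-th iteration or earlier), using $D(x)=\distance_G(x)+\delta_i$ for $x\in X_i$ together with the triangle inequality in $G$. Your explicit discussion of why the extracted key coincides with $\Delta_{T_i}(e_i)$ (via Lemma~\ref{lem:2trees} and the fact that consolidated vertices never have their tree path altered) is a point the paper leaves implicit, so your write-up is if anything slightly more careful there.
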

\begin{proof}
By definition, $\delta_1  = d_{T_{0}}(x_0) + \omega'(x_0,y_0) - \distance_G(y_0)$. Since $d_{T_0}(x_0)=\distance_G(x_0)$ and, by line 1-3, $\distance_G(x_0) + \omega'(x_0,y_0) < \distance_G(y_0)$, we have $\delta_1 < 0$. Moreover, by  $\omega'(x_0,y_0) =\omega(x_0,y_0) + \theta$, we have $\delta_1= \distance_G(x_0) + \omega(x_0,y_0) - \distance_G(y_0) + \theta $. Because 
$\distance_G(x_0) + \omega(x_0,y_0) \geq  \distance_G(y_0)$, we know $\delta_1 \geq \theta$.

Recall that an edge $e$ is extracted only if the corresponding $\Delta(e)$ is negative. This shows that $\delta_i < 0$ for each $1\leq i\leq k$. We next show that $\delta_j \leq \delta_{j+1}$ for any $1\leq j\leq k-1$. 
By definition, we need only show $\Delta_{T_{j-1}}(e_{j-1}) \leq   \Delta_{T_{j}}(e_{j})$, where $e_{j-1}=(x_{j-1},y_{j-1})$ and $e_j=(x_j,y_j)$ are respectively the $j$-th and $(j+1)$-th extracted edge. There are two subcases. First, if $e_j=(x_{j},y_{j})$ is put in $Q$ or updated in the $j$-th iteration (i.e., after $e_{j-1}$ is extracted), then we have
\begin{align*}
\Delta_{T_{j}}(e_{j}) &= d_{T_{j}}(x_{j}) + \omega(x_{j},y_{j}) - \distance_G(y_{j})\\
			& = \distance_G(x_{j}) + \Delta_{T_{j-1}}(e_{j-1}) + \omega(x_{j},y_{j}) - \distance_G(y_{j}) \\
			& \geq \Delta_{T_{j-1}}(e_{j-1}) 
\end{align*}
because $d_{T_{j}}(x_{j})=\distance_G(x_{j}) + \Delta_{T_{j-1}}(e_{j-1})$ (by Lemma~\ref{lem:yi_path}) and $\distance_G(x_{j}) + \omega(x_{j},y_{j}) \geq \distance_G(y_{j})$. 
Second, suppose $e_j$ is already in $Q$ and not updated in the $j$-th iteration. By the choice of $e_{j-1}$,  $\Delta_{T_{j-1}}(e_{j-1}) \leq   \Delta_{T_{j}}(e_{j})$ 
also holds. This proves $\delta_j \leq \delta_{j+1}$. 
\end{proof}

We next show that if $G'$ has a negative cycle, then Algorithm~\ref{alg:-} will detect this. To show this, we need one additional lemma.

\begin{lemma}
\label{lem:n_iterations}
Suppose no inconsistency is reported by Algorithm~\ref{alg:-}. Then the algorithm stops in at most $n-1$ iterations, where $n$ is the number of vertices in $G$. Assume that $x$
is a vertex which is put in $Q$ in the $\ell$-th iteration. Then either $x$ or one of its ancestor in $T$ is extracted from $Q$ in some later  $\ell'$-th iteration. Moreover,  $x$ will not be put in $Q$ again after the  $\ell'$-th iteration.
\end{lemma}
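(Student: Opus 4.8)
The plan is to treat the three assertions separately and to anchor everything on one elementary structural fact about Algorithm~\ref{alg:-}: the only line that ever modifies a parent pointer is line~\ref{line:alg-12}, and it changes the parent of exactly the vertex just extracted. Hence a vertex that has never been extracted still carries its parent from $T$; more importantly, if \emph{no} vertex of $\uparrow x$ (which, by the paper's convention, contains $x$ itself) has ever been extracted, then the chain of parent pointers from $x$ upward is still exactly the root‑path of $x$ in $T$, so at that moment the ancestors of $x$ in the current spanning tree are precisely the $T$‑ancestors of $x$.

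For the iteration bound I would observe that each pass of the while loop extracts a vertex $y_Q$ that is currently in $Q$. A vertex is removed from $Q$ the moment it is consolidated (line~\ref{line:alg-17}), and after consolidation it leaves $Y$ forever and is therefore never re‑enqueued (both enqueue sites, lines~\ref{line:alg-9} and~\ref{line:alg-25}, only ever insert vertices still in $Y$). So $y_Q$ has not been consolidated before, and lines~\ref{line:alg-16}--\ref{line:alg-17} consolidate it now: every iteration adds at least one new element to the monotonically growing set of consolidated vertices. It then suffices to rule out $s$ ever being consolidated. Since $Pa(s)$ is never assigned (that would require $s$ to be extracted), $s$ is the root of every current tree, so $s$ can be consolidated only by being extracted, hence only by first being enqueued; but enqueuing $s$ via line~\ref{line:alg-9} forces $y_0=s\in A$ and the \textsf{inconsistent} return at line~\ref{line:y_0_inA}, and enqueuing $s$ via line~\ref{line:alg-25} forces $\textsf{newdist}<\distance_G(s)=0$ with $s\in A$, hence the return at line~\ref{line:inA}---both excluded by hypothesis. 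Thus the consolidated set stays within $V\setminus\{s\}$, so the loop performs at most $n-1$ iterations.

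Now suppose $x$ is placed in $Q$ during some iteration $\ell$ (the initial enqueue of $y_0$ at line~\ref{line:alg-9} being the degenerate case $\ell=0$, treated analogously). Because the loop halts only when $Q$ is empty, $x$ is eventually removed from $Q$---either by being extracted or via the REMOVE at line~\ref{line:alg-17}. Let $j_0$ be the first iteration in which some vertex of $\uparrow x$ is extracted, and $a_0$ that vertex. I would first argue $j_0$ exists: if no vertex of $\uparrow x$ were ever extracted, then $x$ itself is never extracted and, by the structural fact, $x$ keeps its $T$‑position throughout, so $x$ could leave $Q$ only via line~\ref{line:alg-17}, which would put $x$ below an extracted $y^\ast$ in the current tree, forcing $y^\ast\in\uparrow x$---a contradiction with termination. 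Next, before iteration $j_0$ no $T$‑ancestor of $x$ has moved, so $x$ still lies in the subtree of $a_0$ and is therefore consolidated during iteration $j_0$, as $a_0$ itself if $a_0=x$ and otherwise as a proper descendant of $a_0$. A short inspection of the queue dynamics then gives $j_0>\ell$ (were $j_0\le\ell$, $x$ would already be consolidated and out of $Y$ by iteration $\ell$, so it could not be enqueued then) and that $x$ is not removed from $Q$ before iteration $j_0$ (an earlier removal would again be caused by extracting a vertex of $\uparrow x$, contradicting minimality of $j_0$). Hence $\ell'=j_0$ is exactly the iteration in which $x$ leaves $Q$, and in it either $x$ or its $T$‑ancestor $a_0$ is extracted. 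Finally, $x$ is consolidated by the end of iteration $\ell'$, so it is out of $Y$ from then on and is never enqueued again, giving the last claim.

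The remaining points are routine: confirming that the running structure stays a genuine rooted spanning tree (each re‑parenting at line~\ref{line:alg-12} is legitimate---for $y_0$ this is exactly $x_0\notin\downarrow y_0$, guaranteed by the guard $y_0\notin A$, and for a later $y_i$ by the fact that its new parent has already been consolidated and so is not in its current subtree; this is the same invariant already tacitly used in Lemma~\ref{lem:yi_path}) and verifying that $X_1,\dots,X_{k+1}$ partition $V$. The point that needs real care, and which I expect to be the crux, is precisely the coupling between the inconsistency guards built around $A=\uparrow x_0$ and the claim that $s$ (indeed every vertex of $A$) is never consolidated: this is what legitimises re‑parenting $y_0$ to $x_0$, what prevents $s$ from being extracted or swept up as a descendant of an extracted vertex, and hence what pushes the iteration count down from $n$ to $n-1$.
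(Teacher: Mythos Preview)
Your argument is correct and follows the paper's own strategy (the consolidated set grows strictly each iteration; an enqueued vertex must eventually leave $Q$), but is considerably more complete: the paper's proof of the second claim is essentially the single line ``$x$ will be either extracted \ldots\ or removed from $Q$ \ldots\ in some iteration $\ell'>\ell$,'' which never explains why the extraction that triggers the removal must be by a \emph{$T$-ancestor} of $x$ rather than merely a current-tree ancestor. Your structural observation---that if no member of $\uparrow x$ has yet been extracted then the $T$-root-path of $x$ is intact in the current tree, so current-tree ancestry of $x$ coincides with $T$-ancestry---together with taking $j_0$ to be the \emph{first} iteration in which some vertex of $\uparrow x$ is extracted, is exactly what closes this gap; and your use of the guard set $A=\uparrow x_0$ (so $s$ is never enqueued, hence never consolidated) makes the bound $n{-}1$ rather than $n$ explicit, which the paper simply asserts.
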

\begin{proof}
Since no inconsistency is reported in the whole procedure, the algorithm stops only when $Q$ becomes empty.  Assume that there are $k$ vertices being extracted. In each iteration,  after $y_Q$ is extracted,  all its descendants (including $y_Q$ itself) in the current spanning tree are consolidated and removed from $Y$ (line~\ref{line:alg-17}),  the set of unsettled vertices, and will not be put back in $Q$. Thus, $Y$ becomes strictly smaller after each extraction and the procedure stops in at most $n-1$ iterations with an empty $Q$. 

Suppose $x \not=x_0$ is put in $Q$ in iteration $\ell$. Since the algorithm stops with an empty $Q$ after at most $n-1$ iterations, $x$ will be either extracted (as in line~\ref{line:alg-25}) or removed from $Q$  (as in line~\ref{line:alg-17}) in some iteration $\ell'>\ell$. 
\end{proof}

\begin{proposition}\label{prop:2}
If $G'$ has a negative path, then the algorithm will return ``inconsistency" in at most $n-1$ iterations, where $n$  is the number of vertices in $G$.
\end{proposition}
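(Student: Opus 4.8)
The plan is to argue by contradiction: assume the algorithm runs to completion without ever returning \textsf{inconsistent}, and produce a vertex whose distance in the output tree is forced to be both equal to and strictly less than $\distance_G(x_0)$. First I would set things up. Since $G'$ has a negative cycle, Lemma~\ref{lem:ncycle} gives a \emph{simple} path $\pi=\la y_0=w_0,w_1,\dots,w_r=x_0\ra$ in $G$ with $\length_G(\pi)+\omega'(e_0)<0$; combined with $\distance_G(x_0)\le\distance_G(y_0)+\length_G(\pi)$ this yields $\distance_G(x_0)+\omega'(e_0)<\distance_G(y_0)$, so line~\ref{line:2-} is passed and we enter the \textbf{else} branch. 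If $y_0\in A=\ \uparrow\! x_0$, the algorithm returns \textsf{inconsistent} at line~\ref{line:y_0_inA} and there is nothing more to prove; otherwise $\la y_0,(x_0,\Delta(y_0))\ra$ with $\Delta(y_0)=\distance_G(x_0)+\omega'(e_0)-\distance_G(y_0)<0$ is the sole initial element of $Q$, hence extracted in the first iteration, after which $d_{T'}(y_0)$ is frozen at $\distance_G(x_0)+\omega'(e_0)$.

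Next I would assume, for contradiction, that no inconsistency is ever reported. By Lemma~\ref{lem:n_iterations} the algorithm then halts after $k\le n-1$ iterations with $Q=\varnothing$; call the output tree $T'$. I would record three facts about $T'$. \emph{(a)} For every vertex $v$ one has $d_{T'}(v)\le\distance_G(v)$, with equality exactly when $v$ is never consolidated: a consolidated $v$ lies in some $X_p$ and satisfies $d_{T'}(v)=\distance_G(v)+\delta_p$ with $\delta_p<0$ (Lemmas~\ref{lem:yi_path} and~\ref{lem:delta_decrease}), while an unconsolidated $v$ keeps its $T$-path, which does not contain $e_0$. \emph{(b)} No vertex of $A$ is ever put into $Q$ — the only initial element $y_0$ lies outside $A$, and line~\ref{line:inA} would have fired before any $y\in A$ could be enqueued — so no vertex of $A$, in particular no $T$-ancestor of $x_0$ and not $x_0$ itself, is ever consolidated; hence $d_{T'}(x_0)=\distance_G(x_0)$. \emph{(c)} Setting $\ell_j:=\distance_G(x_0)+\omega'(e_0)+\length_G(\pi[w_0,w_j])$, which is the $G'$-length of the walk $\pi_{s,x_0}+(x_0,y_0)+\pi[w_0,w_j]$ in $G'$, we have $\ell_0=d_{T'}(y_0)$, $\ell_{j+1}=\ell_j+\omega(w_j,w_{j+1})$, and $\ell_r=\distance_G(x_0)+\bigl(\length_G(\pi)+\omega'(e_0)\bigr)<\distance_G(x_0)$.

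The heart of the argument is an extremal-index step along $\pi$. Let $j^*$ be the largest index with $d_{T'}(w_i)\le\ell_i$ for all $i\le j^*$; it exists because $d_{T'}(w_0)=\ell_0$ by (c). If $j^*=r$ then $d_{T'}(x_0)=d_{T'}(w_r)\le\ell_r<\distance_G(x_0)=d_{T'}(x_0)$ by (b)--(c), a contradiction. Otherwise maximality of $j^*$ gives $d_{T'}(w_{j^*+1})>\ell_{j^*+1}\ge d_{T'}(w_{j^*})+\omega(w_{j^*},w_{j^*+1})$; since also $d_{T'}(w_{j^*+1})\le\distance_G(w_{j^*+1})\le\distance_G(w_{j^*})+\omega(w_{j^*},w_{j^*+1})$, fact (a) forces $w_{j^*}$ to be consolidated, say $w_{j^*}\in X_p$. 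Now split on the order in which $w_{j^*}$ and $w_{j^*+1}$ get settled. If $w_{j^*+1}$ is still unsettled at the iteration in which $w_{j^*}$ is freshly consolidated, the edge $(w_{j^*},w_{j^*+1})$ is scanned there with $\textsf{newdist}=d_{T'}(w_{j^*})+\omega(w_{j^*},w_{j^*+1})<d_{T'}(w_{j^*+1})\le\distance_G(w_{j^*+1})$; as $w_{j^*+1}\notin A$ it is enqueued, and by Lemma~\ref{lem:n_iterations} it (or a $T$-ancestor of it) is extracted later, freezing its distance at a value $\le\textsf{newdist}<d_{T'}(w_{j^*+1})$ --- impossible. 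If instead $w_{j^*+1}$ was settled in the same or an earlier iteration, it lies in some $X_{p'}$ with $p'\le p$, so $\delta_{p'}\le\delta_p$ (Lemma~\ref{lem:delta_decrease}) and $d_{T'}(w_{j^*+1})=\distance_G(w_{j^*+1})+\delta_{p'}\le\distance_G(w_{j^*})+\omega(w_{j^*},w_{j^*+1})+\delta_p=d_{T'}(w_{j^*})+\omega(w_{j^*},w_{j^*+1})$, contradicting the strict inequality above. So the completion assumption is untenable; the algorithm reports inconsistency, and since the unsettled set strictly shrinks at every iteration this happens within at most $n-1$ iterations.

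I expect the main obstacle to be precisely the ``improvement absorbed at an intermediate vertex'' effect: the quantities $d_{T'}(w_j)$ need not decrease monotonically from $w_0$ to $w_r$, so one cannot simply push the improvement hop-by-hop along $\pi$. The extremal index $j^*$ is the device that sidesteps this, turning the global inequality $\ell_r<\distance_G(x_0)$ into a purely local contradiction at the single link $(w_{j^*},w_{j^*+1})$. The other delicate point is the bookkeeping behind ``a vertex enqueued with value $\Delta$ is eventually settled with distance at most $\distance_G+\Delta$'': this needs Lemma~\ref{lem:n_iterations} (which guarantees the vertex, or one of its $T$-ancestors, is later extracted and that it is never re-enqueued afterwards) together with the monotone behaviour of the $\delta$-values across iterations from Lemma~\ref{lem:delta_decrease}.
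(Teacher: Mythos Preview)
Your proof is correct and takes a genuinely different route from the paper's.

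The paper argues by a \emph{forward induction} along the negative cycle $\pi=\la y_0=u_0,\dots,u_\ell=x_0\ra$: it maintains, for each $i$, the three invariants that (A1) some $T$-ancestor $\widehat u_i$ of $u_i$ is eventually extracted, (A2) at that moment $\Delta(u_i,u_{i+1})<0$, and (A3) a chained inequality $d_{T_{\ell_i}}(u_i)\le d_{T_{\ell_{i-1}}}(u_{i-1})+\omega(u_{i-1},u_i)$ holds. Pushing these from $i=0$ to $i=\ell-1$ forces the algorithm to examine an edge into an ancestor of $x_0$ with negative $\Delta$, firing line~\ref{line:inA}.

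You instead reason about the \emph{final} tree $T'$: you record the global facts (a)--(c), define the extremal index $j^*$, and localise the contradiction at the single edge $(w_{j^*},w_{j^*+1})$. Your case split (``$w_{j^*+1}$ settled after / not after $w_{j^*}$'') replaces the paper's running induction, and the combination of Lemma~\ref{lem:n_iterations} with the monotonicity of the $\delta$'s (Lemma~\ref{lem:delta_decrease}) does the work of the paper's invariant (A3). Fact~(b) deserves one extra sentence of justification---no vertex of $A$ is ever \emph{consolidated}, not merely never enqueued, because the $T$-path from $s$ to $x_0$ consists entirely of $A$-vertices, none of which ever has its parent changed---but the conclusion $d_{T'}(x_0)=\distance_G(x_0)$ is sound.

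What each approach buys: the paper's forward induction tracks the algorithm's dynamics step by step and mirrors its operational behaviour; your extremal-index argument is shorter and more static, reducing the whole path to one problematic link. Both rely on the same supporting lemmas (Lemmas~\ref{lem:ncycle}, \ref{lem:yi_path}, \ref{lem:delta_decrease}, \ref{lem:n_iterations}), so neither is more elementary in its prerequisites.
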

\begin{proof}
First of all, by Lemma~\ref{lem:n_iterations}, the algorithm stops in at most $n-1$ iterations. Suppose $G'$ has a negative path. Then by Lemma~\ref{lem:ncycle} there is a simple path  $\pi = \la u_0=y_0, u_1, ..., u_{\ell-1}, u_{\ell} = x_0 \ra$ in $G$ from $y_0$ to $x_0$ such that $\length_{G}(\pi) + \omega'(x_0,y_0) < 0$. 
Note that $\ell<n$ because $\pi$ is simple.

We prove by reduction to absurdity that the algorithm will return ``inconsistency". Suppose no inconsistency is detected when the algorithm stops.
This means that the algorithm keeps running $k$ iterations until $Q$ is empty for some $1\leq k<n$.  For $0\leq i\leq k-1$, let $e_i=(x_i,y_i)$ be the edge extracted in the $(i+1)$-th iteration. If an edge $(x,y)$ is examined in this iteration and $y$ is put in $Q$  (as in line~\ref{line:foreach(xy)}-\ref{line:alg-25}), then, by Lemma~\ref{lem:n_iterations}, either $y$ or one of its ancestor in $T$ will be extracted in a later iteration.

For each $0\leq i < k$, we assert that: 
\begin{itemize}
\item [A1] There exist $z_i$ and $\widehat{u}_i$ such that $\widehat{u}_i$ is an ancestor of $u_i$\footnote{Recall that we regard $u_i$ also as an ancestor of itself.} and $(z_i, \widehat{u}_i)$ is extracted by the algorithm in iteration, say $\ell_i$. 
\item [A2]  When $(z_i,\widehat{u}_i)$ is extracted in iteration $\ell_i$, we have $\Delta_{T_{ {\ell_i} }}(u_i,u_{i+1}) < 0$, where $T_{ {\ell_i} }$ is the spanning tree constructed in iteration ${ {\ell_i} }$. 
\item [A3] Furthermore, the following inequality holds for $1\leq i < k-1$
\begin{align}
\label{eq:d_t_ell_i}
d_{T_{\ell_{i}}}(u_{i}) \leq d_{T_{\ell_{i-1}}}(u_{i-1}) + \omega(u_{i-1},u_{i}).
\end{align} 
\end{itemize}

We first consider the special case when $i=0$. Let $z_0=x_0$. Then $(x_0,y_0) = (z_0,u_0)$ is extracted in iteration $\ell_0=1$. This is because, otherwise, we shall have $\distance_G(x_0)+ \omega'(x_0,y_0) \geq \distance(y_0)$ and thus $\distance_G(x_0)+ \omega'(x_0,y_0) + \length_{G}(\pi) \geq \distance (y_0) + \length_{G}(\pi) \geq \distance_G(x_0)$, which contradicts the assumption that $\omega'(x_0,y_0) + \length_{G}(\pi) <0$. 
 Let $T_1$ be the spanning tree modified from $T_0=T$ by replacing the parent of $y_0$ as $x_0$ (if $(x_0,y_0)$ is not an edge in $T$) and updating $\omega(x_0,y_0)$ with $\omega'(x_0,y_0)$. We prove that $\Delta_{T_{1}}(y_0,u_1)<0$, i.e., $d_{T_{1}}(y_0) + \omega(y_0,u_1) < \distance_G(u_1)$. Suppose otherwise and $d_{T_1}(y_0) + \omega(y_0,u_1) \geq \distance_G(u_1)$. Recall that $d_{T_1}(y_0) = \distance_G(x_0) + \omega'(x_0,y_0)$. From $\distance_G(x_0) + \omega'(x_0,y_0) + \omega(y_0,u_1) \geq \distance_G(u_1)$ and $\distance_G(u_1) + \length_{G}(\pi[u_1,u_k]) \geq \distance_G(u_{k})= \distance_G(x_0)$, we shall have $\distance_G(x_0) + \omega'(x_0,y_0) + \length_{G}(\pi) \geq  \distance_G(x_0)$, which contradicts the assumption $ \length_{G}(\pi) + \omega'(x_0,y_0) < 0$. This proves $\Delta_{T_1}(y_0,u_1)<0$. 
 
Suppose the above assertions A1-A3 hold for all $j<i$ for some $1\leq i<k-1$. Because $\Delta_{T_{\ell_{i-1}}} (u_{i-1},u_{i}) < 0$, $u_i$ will be put in $Q$ in iteration $\ell_{i-1}$ if not earlier. By Lemma~\ref{lem:n_iterations}, in later iterations, an edge $(z_i,\widehat{u}_i)$ will be extracted such that  $\widehat{u}_i$ is an ancestor of $u_i$ and $(z_i,\widehat{u}_i)$ has a smaller or equal $\Delta$ value as $(u_{i-1},u_i)$. This proves A1 for $i$.

Suppose $(z_i,\widehat{u}_i)$  is extracted  in iteration $\ell_i$ and the spanning tree in this iteration is $T_{\ell_i}$. Since  $\Delta_{T_{\ell_i-1}}(z_i,\widehat{u}_i) \leq \Delta_{T_{\ell_{i-1}}}(u_{i-1},u_i) <0$, we have $d_{T_{\ell_i-1}}(z_i)+\omega(z_i, \widehat{u}_i) \leq \distance_{G}(\widehat{u}_i) + \Delta_{T_{\ell_{i-1}}}(u_{i-1},u_i)$. Furthermore, by $\distance_G(u_i)=\distance_G(\widehat{u}_i)  + \length_G( \pi_{s,u_i}[\widehat{u}_i,u_i] )$ and $\Delta_{T_{\ell_{i-1}}}(u_{i-1},u_i) = d_{T_{\ell_{i-1}}}(u_{i-1}) + \omega(u_{i-1},u_i) - \distance_G(u_i)$, we have 
\begin{align*}
d_{T_{\ell_{i}}}(u_{i}) &= d_{T_{\ell_{i}-1}}(z_{i}) + \omega(z_i, \widehat{u}_i) + \length_G(\pi_{s,u_i}[\widehat{u}_i,u_i]) \\ 
	&\leq \distance_G(\widehat{u}_i) + \Delta_{T_{\ell_{i-1}}}(u_{i-1},u_i) + \length_G(\pi_{s,u_i}[\widehat{u}_i,u_i] ) \\
	& =  \distance_G(\widehat{u}_i)  + d_{T_{\ell_{i-1}}}(u_{i-1}) + \omega(u_{i-1},u_i) - \distance_G(u_i) + \length_G( \pi_{s,u_i}[\widehat{u}_i,u_i] ) \\
	& = (\distance_G(\widehat{u}_i)  + \length_G( \pi_{s,u_i}[\widehat{u}_i,u_i] ) - \distance_G(u_i)) + d_{T_{\ell_{i-1}}}(u_{i-1}) + \omega(u_{i-1},u_i) \\
	& = d_{T_{\ell_{i-1}}}(u_{i-1}) + \omega(u_{i-1},u_i).
\end{align*}
This shows that A3, viz. Eq.~\eqref{eq:d_t_ell_i}, also holds for $i$.

We next show $\Delta_{T_{\ell_i}}(u_i,u_{i+1}) < 0$, i.e., $d_{T_{\ell_i}}(u_i) + \omega(u_i,u_{i+1}) < \distance_G(u_{i+1})$. 
Because A3 holds for all $j< i$, $y_0=u_0$, and $d_{T_1}(y_0) = \distance_G(x_0) +  \omega'(x_0,y_0)$, we have
\begin{align*}
d_{T_{\ell_i}}(u_i) &\leq d_{T_1}(y_0) +\omega(y_0,u_1) + ... + \omega(u_{i-1},u_i) \\
&= \distance_G(x_0) + \omega'(x_0,y_0)+\omega(y_0,u_1) + ... + \omega(u_{i-1},u_i) \\
	& = \distance_G(x_0) + \omega'(x_0,y_0) + \length_{G}(\pi[u_0,u_i]).
\end{align*}
If  $\Delta_{T_{\ell_i}}(u_i,u_{i+1}) \geq 0$, then $\distance_G(u_{i+1}) \leq d_{T_{\ell_i}}(u_i) + \omega(u_i,u_{i+1})$.
Recall $x_0=u_k$ and $\distance_G(x_0)=\distance_G(u_k)\leq \distance_G(u_{i+1})+ \length_{G}(\pi[u_{i+1},u_k])$. We have 
\begin{align*}
\distance_G(x_0) & \leq d_{T_{\ell_i}}(u_i) + \omega(u_i,u_{i+1})+ \length_{G}(\pi[u_{i+1},u_k]) \\
	& \leq \distance_G(x_0) + \omega'(x_0,y_0)+ \length_{G}(\pi[u_0,u_{i+1}]) + \length_{G}(\pi[u_{i+1},u_k]) \\
	& = \distance_G(x_0) + \omega'(x_0,y_0) +\length_{G}(\pi).
\end{align*} 
This contradicts the assumption that $\omega'(x_0,y_0) +\length_{G}(\pi) <0$. Therefore, we must have  $\Delta_{T_{\ell_i}}(u_i,u_{i+1}) < 0$, i.e., A2 also holds for $i$.

In above we have proved that, for any $0\leq i<k$,  we have $\Delta_{T_{\ell_{i}}}(u_{i}, u_{i+1})<0$ when $(z_i,\widehat{u}_i)$ is extracted from $Q$ in the $\ell_{i}$-th iteration. In particular, when  $(z_{k-1}, \widehat{u}_{k-1})$ is extracted from $Q$, we have $\Delta_{T_{\ell_{k-1}}}(u_{k-1}, x_0)<0$, i.e., $d_{T_{\ell_{k-1}}}(u_{k-1}) + \omega(u_{k-1},x_0) < \distance_G(x_0)$.  By Line~\ref{line:inA}, the algorithm shall report ``inconsistency". This contradicts our assumption that the algorithm returns no inconsistency. Therefore, the algorithm does return ``inconsistency". 
\end{proof}

In above, we have shown that if $G'$ has a negative cycle then our algorithm will detect it in at most $n-1$ iterations. In the following we show that if the algorithm stops with a spanning tree $T_{k}$ in the $k$-th iteration (i.e., the last iteration), then $T_{k}$ is an SPT of $G'$. To this end, we need the following result.

\begin{lemma}
\label{lem:c2-q}
Suppose $G'$ has no negative cycle and $\distance_G(x_0) + \omega'(x_0,y_0) < \distance_G(y_0)$. Then $\delta_1=  \distance_{G'}(u) -  \distance_G(u)$ if $u\in\;\downarrow\!{y_0}$, and $\delta_1 \leq  \distance_{G'}(u) - \distance_G(u) \leq 0$ if $u \not\in\;\downarrow\!{y_0}$, where $\delta_1 =  \Delta_{T_1}(x_0,y_0) \equiv  \distance_G(x_0) + \omega'(x_0,y_0) - \distance_G(y_0)$ as in \eqref{eq:delta{i+1}}.
\end{lemma}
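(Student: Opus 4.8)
The plan is to prove the two–sided bound $\delta_1 \le \distance_{G'}(u) - \distance_G(u) \le 0$ for \emph{every} vertex $u\in V$, and then to sharpen the left inequality to an equality when $u\in\;\downarrow\!{y_0}$; the claim for $u\notin\;\downarrow\!{y_0}$ is then an immediate special case. Throughout we use that $\delta_1 = \distance_G(x_0)+\omega'(x_0,y_0)-\distance_G(y_0) < 0$, which is exactly the hypothesis (and $\theta\le\delta_1<0$, cf.\ Lemma~\ref{lem:delta_decrease}), that $G$ has no negative cycle, and that $G'$ has no negative cycle — the last point guaranteeing that $\distance_{G'}(u)$ is finite and equals the minimum of $\length_{G'}(\pi)$ over \emph{simple} paths $\pi$ from $s$ to $u$. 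The upper bound is easy: pick a simple shortest path $\pi$ from $s$ to $u$ in $G$ and apply Lemma~\ref{lem:path_length}; since $\theta<0$ we get $\length_{G'}(\pi)\le\length_G(\pi)=\distance_G(u)$, hence $\distance_{G'}(u)\le\distance_G(u)$.

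For the lower bound we take an arbitrary simple path $\pi$ from $s$ to $u$ in $G'$ and show $\length_{G'}(\pi)\ge\distance_G(u)+\delta_1$. If $e_0$ does not occur on $\pi$, Lemma~\ref{lem:path_length} gives $\length_{G'}(\pi)=\length_G(\pi)\ge\distance_G(u)\ge\distance_G(u)+\delta_1$, using $\delta_1<0$. If $e_0$ does occur on $\pi$, simplicity forces it to occur exactly once, so $\pi=\pi[s,x_0]+(x_0,y_0)+\pi[y_0,u]$, and neither $\pi[s,x_0]$ nor $\pi[y_0,u]$ contains $e_0$ (again by simplicity: $x_0$, resp.\ $y_0$, is an endpoint of those segments and occurs only once on $\pi$). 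Lemma~\ref{lem:path_length} then yields $\length_{G'}(\pi[s,x_0])=\length_G(\pi[s,x_0])\ge\distance_G(x_0)$ and $\length_{G'}(\pi[y_0,u])=\length_G(\pi[y_0,u])\ge\distance_G(y_0,u)$, so, using $\distance_G(x_0)+\omega'(x_0,y_0)=\distance_G(y_0)+\delta_1$ and the triangle inequality $\distance_G(y_0)+\distance_G(y_0,u)\ge\distance_G(u)$ in $G$, we obtain $\length_{G'}(\pi)\ge\distance_G(y_0)+\delta_1+\distance_G(y_0,u)\ge\distance_G(u)+\delta_1$. Minimising over simple $\pi$ gives $\distance_{G'}(u)\ge\distance_G(u)+\delta_1$, which together with the upper bound proves the two–sided bound for all $u$.

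It remains to show $\distance_{G'}(u)\le\distance_G(u)+\delta_1$ when $u\in\;\downarrow\!{y_0}$, for which we exhibit an explicit walk of $G'$-length exactly $\distance_G(u)+\delta_1$, namely $\rho=\pi_{s,x_0}+(x_0,y_0)+\pi_{s,u}[y_0,u]$, where $\pi_{s,x_0}$ is the tree path from $s$ to $x_0$ in $T$ and $\pi_{s,u}[y_0,u]$ the tree segment from $y_0$ to $u$ (legitimate since $y_0$ lies on $\pi_{s,u}$ because $u\in\;\downarrow\!{y_0}$). Neither tree (sub)path contains $e_0$ — $x_0$ is the last vertex of $\pi_{s,x_0}$ and $y_0$ the first of $\pi_{s,u}[y_0,u]$, and tree paths are simple — so Lemma~\ref{lem:path_length} gives $\length_{G'}(\rho)=\distance_G(x_0)+\omega'(x_0,y_0)+\big(\distance_G(u)-\distance_G(y_0)\big)=\distance_G(u)+\delta_1$; hence $\distance_{G'}(u)\le\distance_G(u)+\delta_1$, and equality follows from the lower bound. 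The argument is not deep; the only point requiring care is the bookkeeping of where $e_0$ can appear along $\pi$ and along $\rho$, which is settled uniformly by simplicity of the relevant paths, while the hypothesis ``$G'$ has no negative cycle'' enters only to make $\distance_{G'}(u)$ finite and realised by a simple path. (One could alternatively record, via Lemma~\ref{lem:negcyc}, that under these hypotheses $x_0\notin\;\downarrow\!{y_0}$ and hence $\pi_{s,x_0}$ lies outside $\downarrow\!{y_0}$, but this is not needed for the estimates above.)
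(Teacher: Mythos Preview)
Your proof is correct and follows essentially the same approach as the paper's: bound $\length_{G'}(\pi)$ from below by decomposing any simple path at the (unique) occurrence of $e_0$ and using that the two flanking segments are paths in $G$, and bound $\distance_{G'}(u)$ from above via the explicit walk $\pi_{s,x_0}+(x_0,y_0)+\pi_{s,u}[y_0,u]$. The only difference is organisational: the paper splits first on whether $e_0$ is a tree edge (where $\delta_1=\theta$) and then on $u\in\;\downarrow\!{y_0}$ versus $u\notin\;\downarrow\!{y_0}$, whereas you prove the two-sided bound $\delta_1\le\distance_{G'}(u)-\distance_G(u)\le 0$ uniformly for all $u$ and then sharpen the left inequality for $u\in\;\downarrow\!{y_0}$; your version is a bit more streamlined but uses the same ingredients.
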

\begin{proof}
As in the description of Algorithm~\ref{alg:-}, we write $T_1$ for the spanning tree obtained from $T_0=T$ by setting $x_0$ as the parent of $y_0$ and updating the weight of $(x_0,y_0)$ as $\omega'(x_0,y_0)$.  Let $\theta= \omega'(x_0,y_0) - \omega(x_0,y_0)$. If $e_0$ is in $T$, then $\length_{G'}(\pi_{s,u})=\distance_G(u) + \theta$ for any $u\in \;\downarrow\!{y_0}$ because $e_0$ is also in $\pi_{s,u}$. This shows $\distance_{G'}(u) \leq \distance_G(u) + \theta < \distance_G(u)$. On the other hand, for any $u\in \;\downarrow\!{y_0}$ and any path $\pi$ from $s$ to $u$ in $G'$, we have $\length_{G'}(\pi)=\length_G(\pi)$ or $\length_{G'}(\pi)=\length_G(\pi) + \theta$ by Lemma~\ref{lem:path_length}. Thus we have $\length_{G'}(\pi) \geq \distance_G(u) + \theta$ for any $\pi$ from $s$ to $u$ in $G'$ and, therefore, $\distance_{G'}(u)\geq \distance_G(u) + \theta$. This shows $\distance_{G'}(u) = \distance_G(u) + \theta$.

Now suppose $e_0$ is not in $T$ and $\delta_1 \equiv \distance_G(x_0) + \omega'(x_0,y_0) - \distance_G(y_0)  < 0$. (Note that,  if $e_0$ is in $T$, then $\omega(x_0,y_0)=\distance_G(y_0) - \distance_G(x_0)$ and thus $\delta_1=\theta < 0$.) Since $\pi_{s,x_0}$ and $\pi_{s,u}[y_0,u]$ do not contain $e_0$, by Lemma~\ref{lem:path_length}, we have
\begin{align*}
\length_{G'}(\pi_{s,x_0}) &= \length_{G}(\pi_{s,x_0}) = \distance_G(x_0), \ \mbox{and} \\
\length_{G'}(\pi_{s,u}[y_0,u]) &= \length_{G}(\pi_{s,u}[y_0,u])=\distance_G(u) -\distance_G(y_0).
\end{align*} 
Thus we have 
\begin{align*}
& \ \length_{G'}(\pi_{s,x_0}+(x_0,y_0) + \pi_{s,u}[y_0,u])\\ 
= &  \distance_G(x_0)+ \omega'(x_0,y_0) + \distance_G(u) -\distance_G(y_0)        =  \distance_G(u) + \delta_1.
\end{align*} 
This shows that $\distance_{G'}(u) \leq \length_{G'}(\pi_{s,x_0}+(x_0,y_0) + \pi_{s,u}[y_0,u])  = \distance_G(u) + \delta_1$. 

On the other hand, suppose $\pi$ is a shortest path from $s$ to $u$ in $G'$. Since $\length_{G'}(\pi)=\distance_{G'}(u) < \distance_G(u)$, we know by Lemma~\ref{lem:path_length} that  $e_0$ is in $\pi$. As $G'$ has no negative cycle, we may assume that $\pi$ is a simple path. Clearly, both $\pi[s,x_0]$ and $\pi[y_0,u]$ are also shortest paths in $G'$. Since $e_0$ is in neither $\pi[s,x_0]$ nor $\pi[y_0,u]$, by Lemma~\ref{lem:path_length}, we know $\length_{G'}(\pi[s,x_0]) = \length_{G}(\pi[s,x_0])$  and $ \length_{G'}(\pi[y_0,u])=  \length_{G}(\pi[y_0,u])$. Thus both subpaths are also shortest paths in $G$. This shows that   $\length_{G'}(\pi[s,x_0]) = \distance_G(x_0)$  and $ \length_{G'}(\pi[y_0,u])=  \length_{G}(\pi[y_0,u]) = \distance_G(u) - \distance_G(y_0).$ 
Therefore,
\begin{align*}
	     \length_{G'}(\pi) 
	& = \length_{G'}(\pi[s,x_0])  + \omega'(x_0,y_0) + \length_{G'}(\pi[y_0,u]) \\
	&= \length_{G}(\pi[s,x_0]) + \omega'(x_0,y_0) + \length_{G}(\pi[y_0,u]) \\
	&=  \distance_G(x_0)  + \omega'(x_0,y_0) + \distance_G(u) -\distance_G(y_0) \\
	&= \distance_G(u) + \delta_1.
\end{align*} 
This proves $\delta_1=   \distance_{G'}(u) - \distance_G(u)$.

If $u \not\in \;\downarrow\!{y_0}$, then the path $\pi_{s,u}$ in $T$ does not contain $e_0$ and thus has length $\distance_G(u)$ in $G'$. Thus $\distance_{G'}(u)\leq \distance_G(u)$. Suppose $\pi$ is a shortest path from $s$ to $u$ in $G'$. Since $G'$ has no negative cycle, again, we assume $\pi$ is simple. 
If $\pi$ does not contain $e_0$, then $\distance_{G'}(u)=\length_{G'}(\pi)=\length_{G}(\pi) \geq \distance_G(u)$. Suppose $\pi$ contains $e_0$. Then $\distance_{G'}(u)=\length_{G'}(\pi)=\length_{G}(\pi) + \theta$ and $\pi$ can be decomposed into three sections: $\pi[s,x_0]$, $(x_0,y_0)$, and $\pi[y_0,u]$. Apparently,  both $\pi[s,x_0]$ and $\pi[y_0,u]$ are shortest paths in $G'$ and $G$. Thus, we have  
\begin{align*}
\distance_{G'}(u)=	\length_{G'}(\pi) 
     &=\length_{G'}(\pi[s,x_0])+\omega'(x_0,y_0) + \length_{G'}(\pi[y_0,u])\\
    &=\length_{G}(\pi[s,x_0])+\omega'(x_0,y_0) + \length_{G}(\pi[y_0,u])\\
    &\geq \distance_G(x_0)+\omega'(x_0,y_0) + \distance_G(u)-\distance_G(y_0)\\
    &=\distance_G(u) + \delta_1,
\end{align*} 
where the inequality is because $\distance_G(x_0)=\length_{G}(\pi[s,x_0])$ and
$$\distance_G(u) \leq \length_{G}(\pi_{s,y_0}) + \length_{G}(\pi[y_0,u]) = \distance_G(y_0) + \length_{G}(\pi[y_0,u]).$$ Therefore, if $u \not\in \;\downarrow\!{y_0}$, then $\delta_1 \leq  \distance_{G'}(u) - \distance_G(u) \leq 0 $.
\end{proof}

\begin{proposition}
\label{prop:Vi-}
Suppose $G'$ has no negative cycle and $\distance_G(x_0) + \omega'(x_0,y_0) < \distance_G(y_0)$. 
Let $e_i=(x_i,y_i)$ $(0\leq i \leq  k-1)$ be the sequence of edges that are extracted by the algorithm before termination. 
For any $u\in X_k$, we have $\distance_{G'}(u) =\distance_G(u)$ and, for any $0\leq i \leq k-1$ and any $ u \in X_{i+1}$,  we have 
\begin{align}
\distance_{G'}(u) = d_{T_{i+1}}(u) = \distance_G(u) + \delta_{i+1},
\end{align} 
where 
$ \delta_1$  is as in \eqref{eq:delta{1}} and  $\delta_i$ is as in \eqref{eq:delta{i+1}} for $1\leq i < k$.
\end{proposition}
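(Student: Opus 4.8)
The plan is to induct on $i$, following closely the proof of Proposition~\ref{prop:Vi} for the incremental case. Since Lemma~\ref{lem:yi_path} already yields $d_{T_{i+1}}(u) = \distance_G(u) + \delta_{i+1}$ for every $u \in X_{i+1}$, it suffices to establish $\distance_{G'}(u) = \distance_G(u) + \delta_{i+1}$ for $u \in X_{i+1}$ ($0 \le i \le k-1$) and $\distance_{G'}(u) = \distance_G(u)$ for $u \in X_{k+1}$. The base case $X_1 = \downarrow\! y_0$ is exactly the first clause of Lemma~\ref{lem:c2-q}. For the inductive step I would fix $0 \le i \le k-2$, assume the claim for $X_1,\dots,X_{i+1}$, and argue for $X_{i+2}$: the edge $e_{i+1} = (x_{i+1},y_{i+1})$ extracted in iteration $i+2$ has $x_{i+1}$ consolidated in some earlier iteration, so by the induction hypothesis together with the fact that the path from $s$ to a vertex freezes once that vertex is consolidated (cf.\ the proof of Lemma~\ref{lem:yi_path} and Lemma~\ref{lem:2trees}) one gets $\distance_{G'}(x_{i+1}) = d_{T_{i+1}}(x_{i+1})$, hence $\delta_{i+2} = \Delta_{T_{i+1}}(e_{i+1}) = \distance_{G'}(x_{i+1}) + \omega(x_{i+1},y_{i+1}) - \distance_G(y_{i+1})$, which is negative because $e_{i+1}$ was extracted (Lemma~\ref{lem:delta_decrease}).

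I would then prove the two inequalities separately. For the upper bound, $\distance_{G'}(y_{i+1}) \le \distance_{G'}(x_{i+1}) + \omega(x_{i+1},y_{i+1}) = \distance_G(y_{i+1}) + \delta_{i+2}$, and for any $u \in X_{i+2} = \downarrow\! y_{i+1} \setminus \bigcup_{j\le i+1} X_j$ the tree path $\pi_{s,u}$ in $T$ runs through $y_{i+1}$, while $u \notin X_1 = \downarrow\! y_0$ keeps $e_0$ off the subpath $\pi_{s,u}[y_{i+1},u]$, so by Lemma~\ref{lem:path_length} this subpath has the same length $\distance_G(u) - \distance_G(y_{i+1})$ in $G'$ as in $G$, giving $\distance_{G'}(u) \le \distance_G(u) + \delta_{i+2}$. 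For the lower bound I would take a simple shortest path $\pi = \langle s = u_0,\dots,u_\ell = u\rangle$ in $G'$ (which exists as $G'$ has no negative cycle). If $\pi$ meets no vertex of $\bigcup_{j\le i+1} X_j$, then $y_0 \notin \pi$, hence $e_0 \notin \pi$ and $\length_{G'}(\pi) = \length_G(\pi) \ge \distance_G(u) > \distance_G(u) + \delta_{i+2}$. Otherwise let $u_{\ell^*}$ be the \emph{last} vertex of $\pi$ lying in $\bigcup_{j\le i+1} X_j$, say $u_{\ell^*} \in X_{j'+1}$ with $j' \le i$; then $u_{\ell^*+1},\dots,u_\ell$ all avoid $\bigcup_{j\le i+1} X_j$, so $e_0$ lies on neither $(u_{\ell^*},u_{\ell^*+1})$ nor $\pi[u_{\ell^*+1},u]$, whence $\length_{G'}(\pi[u_{\ell^*+1},u]) = \length_G(\pi[u_{\ell^*+1},u]) \ge \distance_G(u) - \distance_G(u_{\ell^*+1})$. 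The crux is the inequality $\Delta_{T_{j'+1}}(u_{\ell^*},u_{\ell^*+1}) \ge \delta_{i+2}$: when iteration $j'+1$ processes the edge $(u_{\ell^*},u_{\ell^*+1})$ — which it does, as $u_{\ell^*}\in X_{j'+1}$ is newly consolidated and $u_{\ell^*+1}\notin X_1\cup\dots\cup X_{i+1}$ is still unsettled — either the test on line~\ref{line:20-} fails and $\Delta_{T_{j'+1}}(u_{\ell^*},u_{\ell^*+1}) \ge 0 > \delta_{i+2}$, or $u_{\ell^*+1}$ is enqueued with key at most $\Delta_{T_{j'+1}}(u_{\ell^*},u_{\ell^*+1})$; and since $u_{\ell^*+1}$ is consolidated only in iteration $i+2$ or later and a vertex leaves $Q$ only upon consolidation, $u_{\ell^*+1}$ is still in $Q$ with that key bound when $\textsf{EXTRACTMIN}$ picks $e_{i+1}$, so $\delta_{i+2} \le \Delta_{T_{j'+1}}(u_{\ell^*},u_{\ell^*+1})$. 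Combined with the induction hypothesis $\distance_{G'}(u_{\ell^*}) = d_{T_{j'+1}}(u_{\ell^*})$ (valid as $j'\le i$), this yields $\distance_{G'}(u_{\ell^*}) + \omega(u_{\ell^*},u_{\ell^*+1}) \ge \distance_G(u_{\ell^*+1}) + \delta_{i+2}$, hence $\length_{G'}(\pi) \ge \distance_G(u) + \delta_{i+2}$. With the upper bound this gives $\distance_{G'}(u) = \distance_G(u) + \delta_{i+2} = d_{T_{i+2}}(u)$, closing the induction.

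The case $u \in X_{k+1}$ (vertices never consolidated) runs the same way with $\delta$ replaced by $0$: the tree path $\pi_{s,u}$ in $T$ avoids $e_0$ (as $u\notin\downarrow\! y_0$), giving $\distance_{G'}(u) \le \distance_G(u)$; and for a simple shortest $G'$-path $\pi$, either $e_0\notin\pi$ and $\length_{G'}(\pi) = \length_G(\pi) \ge \distance_G(u)$, or its last vertex $u_{\ell^*}$ in $\bigcup_{j\le k} X_j$ has $u_{\ell^*+1}\in X_{k+1}$, so $u_{\ell^*+1}$ is never enqueued — otherwise Lemma~\ref{lem:n_iterations} would force it to be extracted or removed, i.e.\ consolidated — whence $\Delta_{T_{j'+1}}(u_{\ell^*},u_{\ell^*+1}) \ge 0$, so $\distance_{G'}(u_{\ell^*}) + \omega(u_{\ell^*},u_{\ell^*+1}) \ge \distance_G(u_{\ell^*+1})$ and $\length_{G'}(\pi) \ge \distance_G(u)$. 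I expect the main obstacle to be precisely this lower-bound bookkeeping — pinning down that the witness vertex $u_{\ell^*+1}$ really does sit in $Q$ with a small enough key at the instant $e_{i+1}$ is extracted — which has to be assembled from the monotonicity $\delta_1 \le \dots \le \delta_k$ (Lemma~\ref{lem:delta_decrease}), the freeze-once-consolidated property, and the fact that a vertex leaves $Q$ only when consolidated; the degenerate case in which $\pi$ never touches an already-consolidated vertex (handled by showing $e_0$ cannot lie on $\pi$) is easy but should not be overlooked.
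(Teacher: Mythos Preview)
Your proposal is correct and follows essentially the same route as the paper's own proof: induction on $i$ with the base case supplied by Lemma~\ref{lem:c2-q}, the upper bound via Lemma~\ref{lem:yi_path}, and the lower bound by locating the last consolidated vertex $u_{\ell^*}$ on a shortest $G'$-path and using the \textsf{EXTRACTMIN} minimality to obtain $\Delta_{T_{j'+1}}(u_{\ell^*},u_{\ell^*+1}) \ge \delta_{i+2}$. The paper compresses your case split slightly (it observes directly that $\distance_{G'}(u)<\distance_G(u)$ forces $e_0\in\pi$, so your ``$\pi$ avoids all $X_j$'' case is vacuous), and it invokes Lemma~\ref{lem:delta_decrease} rather than tracking the key of $u_{\ell^*+1}$ through possible later decreases, but the substance is identical.
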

\begin{proof}
We prove this inductively. Lemma~\ref{lem:c2-q} has shown that $\distance_{G'}(u) = \distance_G(u) + \delta_1$ for every $u\in X_1$. Suppose the result holds for all $j$ smaller than some fixed $i <k$.   We show that it also holds for $j=i$. Given $u\in X_{i+1}$, by Lemma~\ref{lem:yi_path}, we already have a path in $G'$ with length $\distance_G(u) + \delta_{i+1}$. Thus $\distance_{G'}(u) \leq \distance_G(u) + \delta_{i+1}$. Suppose $\pi = \la u_0 =s, u_1, ..., u_\ell=u \ra$ is a shortest path in $G'$ from $s$ to $u$ which is simple and has the fewest number of edges. Since $\length_{G'}(\pi) = \distance_{G'}(u) < \distance_G(u)$, $e_0$ is in $\pi$. Assume that $u_{\ell'}=x_0$, $u_{\ell'+1}=y_0$, and $u_{\ell^*}$ is the last vertex such that $u_{\ell^*}$ is in $X_{j'+1}$ for some $j'<i$. Note that we have $\ell^* \geq \ell'+1$ since $u_{\ell'+1}=y_0 \in X_1$. Moreover, we have $\distance_{G'}(u_{\ell^*}) = \length_{{G'}}(\pi[s,u_{\ell^*}])$ because $\pi$ is a shortest path in $G'$. Now, since $u_{{\ell^*}+1}  \not\in \bigcup_{j=1}^{i} X_j$, it can only be consolidated after $(x_i,y_i)$ is extracted. By Lemma~\ref{lem:delta_decrease}, we have $\delta_{i+1} \leq \delta_{p}$ for any $p>i+1$ and hence $\delta_{i+1}$ is not larger than the value of any vertex in $Q$ when and after $x_{i}$ is extracted. This implies 
\begin{align*}
d_{T_{j'+1}}( u_{\ell^*}) + \omega(u_{\ell^*},u_{{\ell^*}+1}) - \distance_G(u_{{\ell^*}+1}) = \Delta_{T_{j'+1}}(u_{\ell^*}, u_{{\ell^*}+1})  \geq \Delta_{T_{i}}(e_i) = \delta_{i+1}.
\end{align*}
 Because $j'<i$ and, by induction hypothesis, $\distance_{G'}(u_{\ell^*}) = d_{T_{j'+1}}( u_{\ell^*})$, we have 
\begin{align*}
\distance_{G'}(u_{\ell^*}) + \omega(u_{\ell^*}, u_{{\ell^*}+1}) \geq \distance_G(u_{{\ell^*}+1}) + \delta_{i+1}.
\end{align*}
Furthermore, we have 
\begin{align*}
\distance_{G'}(u) &= \length_{G'}(\pi)\\
 &= \distance_{G'}(u_{\ell^*}) + \omega(u_{\ell^*}, u_{{\ell^*}+1}) + \length_{G}(\pi[u_{{\ell^*}+1},u_\ell=u]) \\
& \geq \distance_G(u_{{\ell^*}+1})  + \length_{G}(\pi[u_{{\ell^*}+1},u_\ell=u])  + \delta_{i+1} \\
& \geq \distance_G(u) + \delta_{i+1}.
\end{align*}
Therefore, $\distance_{G'}(u)=\distance_G(u) + \delta_{i+1}$ whenever $u\in X_{i+1}$.

We next show that $\distance_{G'}(u) = \distance_G(u)$ for any $u\in X_{k+1}= V \setminus  \bigcup_{j=1}^k X_{j}$. Since $(x_{k-1},y_{k-1})$ is the last edge that is extracted, for any $x\in \bigcup_{j=1}^k X_{j}$, if $(x,u)$ is an edge in $G'$, then $\Delta_{T_k}(x,u)\geq 0$ holds. This shows that, for any path $\pi$ from $s$ to $u$ in $G'$, we have $\length_{G'}(\pi) \geq \distance_G(u)$. Furthermore, for the path $\pi_{s,u}$ in $T$, we have  $\length_{G'}(\pi_{s,u}) = \distance_G(u)$. This shows $\distance_{G'}(u) = \distance_G(u)$ for any  $u\in X_{k+1}$.
\end{proof}

The above result asserts that if $G'$ has no negative cycle then the algorithm outputs a correct SPT of $G'$. We now show the algorithm is sound and complete.
\begin{theorem}\label{thm4}
Algorithm~\ref{alg:-} is sound and complete. Precisely,  $G'$ has a negative cycle if and only if the algorithm reports `inconsistent'; and, if the algorithm outputs a spanning tree $T'$, then $T'$ is an SPT of $G'$. 
\end{theorem}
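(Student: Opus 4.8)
The plan is to deduce Theorem~\ref{thm4} from the results already established, organised by which terminating branch of Algorithm~\ref{alg:-} is reached. By Lemma~\ref{lem:n_iterations} the algorithm always halts in at most $n-1$ iterations, so exactly one of the following occurs: it returns $T$ at line~\ref{line:2-}; it reports inconsistency at line~\ref{line:y_0_inA} or at line~\ref{line:inA}; or it empties $Q$ and returns the final spanning tree $T_k$. If it returns at line~\ref{line:2-}, then $\distance_G(x_0)+\omega'(e_0)\ge\distance_G(y_0)$, so Lemma~\ref{lem7} says $T$ is an SPT of $G'$; since then every vertex has a finite distance in $G'$ and every vertex is reachable from $s$, $G'$ has no negative cycle, consistent with the algorithm not reporting inconsistency in this branch. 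If it reports inconsistency at line~\ref{line:y_0_inA}, then $y_0\in\ \uparrow\! x_0$ and $\distance_G(x_0)+\omega'(e_0)<\distance_G(y_0)$, and Lemma~\ref{lem:negcyc} produces a negative cycle in $G'$. For the remaining two branches I will use Proposition~\ref{prop:2}: if $G'$ has a negative cycle, inconsistency is reported; its contrapositive says that if the algorithm empties $Q$ without reporting inconsistency then $G'$ has no negative cycle, and then Proposition~\ref{prop:Vi-} gives the correct $G'$-distances, and the freezing argument from the proof of Theorem~\ref{thm:sound+} (each vertex's tree path is fixed once it is settled, cf.\ Lemmas~\ref{lem:yi_path} and~\ref{lem:2trees}) upgrades this to ``$T_k$ is an SPT of $G'$''. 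This accounts for everything except the soundness of the report at line~\ref{line:inA}.

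That last point is the main obstacle. Suppose line~\ref{line:inA} fires: an edge $(x,y)$ is examined with $x$ freshly settled into some $X_{i+1}$, $y$ still unsettled, $y\in A=\ \uparrow\! x_0$, and $D(x)+\omega(x,y)<\distance_G(y)$. I would argue by contradiction, assuming $G'$ has no negative cycle, so that every $G'$-distance is finite and realised by a simple path. By Lemma~\ref{lem:yi_path}, $D(x)=d_{T_{i+1}}(x)$ equals the length in $G'$ of the $s$-to-$x$ path of the current tree $T_{i+1}$, so $D(x)+\omega(x,y)$ is the length of an $s$-to-$y$ walk in $G'$ and hence $\distance_{G'}(y)\le D(x)+\omega(x,y)<\distance_G(y)$. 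Since $y$ is an ancestor of $x_0$ in $T$, the tree path $\pi_{s,x_0}$ runs through $y$, so $\rho:=\pi_{s,x_0}[y,x_0]$ has length $\distance_G(x_0)-\distance_G(y)$ in $G$; as $e_0=(x_0,y_0)$ has tail $x_0$, which is the last (hence unrepeated) vertex of $\rho$, the edge $e_0$ does not lie on $\rho$, so by Lemma~\ref{lem:path_length} $\length_{G'}(\rho)=\distance_G(x_0)-\distance_G(y)$ as well. Concatenating a shortest $s$-to-$y$ path of $G'$ with $\rho$ gives $\distance_{G'}(x_0)\le\distance_{G'}(y)+\distance_G(x_0)-\distance_G(y)<\distance_G(x_0)$. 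On the other hand, under our assumption $\distance_{G'}(x_0)$ is attained by a simple $s$-to-$x_0$ path, which cannot traverse $e_0$ (again because its tail $x_0$ is the path's endpoint), so by Lemma~\ref{lem:path_length} that path has equal length in $G$ and $G'$, giving $\distance_{G'}(x_0)\ge\distance_G(x_0)$ --- a contradiction. Hence $G'$ has a negative cycle, which is the soundness of line~\ref{line:inA}. (The same argument with $x=x_0$ and the examined edge equal to $e_0$ reproves Lemma~\ref{lem:negcyc}, so the two inconsistency branches are really instances of one phenomenon.)

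The steps I expect to need the most care are, first, the two uses of ``a simple $s$-to-$x_0$ path, and the subpath $\rho$ of a tree path ending at $x_0$, cannot contain $e_0$ because its tail $x_0$ is an endpoint'' --- elementary but easy to phrase loosely; and second, invoking Lemma~\ref{lem:yi_path} rather than Proposition~\ref{prop:Vi-} to know that $D(x)$ is the length of a genuine $G'$-path from $s$ to $x$, since at that moment we have not yet excluded negative cycles (Lemma~\ref{lem:yi_path} was proved without that hypothesis, which is precisely what lets it be used here). The surrounding bookkeeping --- termination, that $T_k$ is spanning, and that each settled vertex's tree path is a shortest path of $G'$ --- is routine given Lemma~\ref{lem:n_iterations}, Proposition~\ref{prop:Vi-}, and the freezing argument already used for Theorem~\ref{thm:sound+}.
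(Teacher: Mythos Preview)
Your proof is correct and its overall organisation (case split on the terminating branch, with Lemma~\ref{lem7}, Lemma~\ref{lem:negcyc}, Proposition~\ref{prop:2}, and Proposition~\ref{prop:Vi-} handling the respective cases) matches the paper's. The one genuine difference is how you establish soundness of the report at line~\ref{line:inA}.

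The paper argues constructively: from the examined edge $(x,y)$ with $y\in A$ and $D(x)+\omega(x,y)<\distance_G(y)$, it traces the tree path $\pi$ from $s$ to $x$ in the current tree $T_i$, observes that no ancestor of $x_0$ can have been enqueued so far (any such enqueue would itself have triggered line~\ref{line:inA}), hence $\pi[s,x_0]=\pi_{s,x_0}$ is still the original $T$-path, and then exhibits the explicit negative cycle $\pi_{s,x_0}[y,x_0]+(x_0,y_0)+\pi[y_0,x]+(x,y)$ in $G'$.

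Your argument is instead non-constructive: assuming $G'$ has no negative cycle, you use Lemma~\ref{lem:yi_path} to get a genuine $G'$-walk of length $D(x)$ to $x$, deduce $\distance_{G'}(y)<\distance_G(y)$, push this along the $T$-subpath $\rho=\pi_{s,x_0}[y,x_0]$ (which avoids $e_0$ because its tail $x_0$ is the terminal vertex) to get $\distance_{G'}(x_0)<\distance_G(x_0)$, and then contradict this via the observation that any \emph{simple} shortest $s$--$x_0$ path in $G'$ also avoids $e_0$ and hence has the same length in $G$. This is cleaner in that it never needs to know whether the tree path to $x_0$ has been modified, and it makes transparent that line~\ref{line:y_0_inA} is just the special case $(x,y)=(x_0,y_0)$ of the same phenomenon. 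What you give up is the explicit cycle; the paper's version actually names the offending cycle, which is sometimes desirable in applications.

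One small wording point: your opening sentence cites Lemma~\ref{lem:n_iterations} for termination, but that lemma is stated under the hypothesis that no inconsistency is reported. The conclusion you want still follows (an inconsistency report halts immediately, and otherwise Lemma~\ref{lem:n_iterations} bounds the iterations), but you should phrase the appeal accordingly.
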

\begin{proof}
If $\distance_G(x_0)+\omega'(x_0,y_0) \geq \distance_G(y_0)$, then, by Lemma~\ref{lem7}, $T$ is also an SPT of $G'$ and the algorithm correctly outputs $T$ as specified in line~\ref{line:2-}.

Suppose $\distance_G(x_0)+\omega'(x_0,y_0) < \distance_G(y_0)$. If $G'$ has a negative cycle, then by 
Proposition~\ref{prop:2} the algorithm will return `inconsistent' in at most $n-1$ iterations, where $n$ is the number of vertices in $G$. 

On the other hand, if the algorithm returns `inconsistent' in line~\ref{line:y_0_inA}. Then, by Lemma~\ref{lem:negcyc}, there is a negative cycle in $G'$. 
Similarly, if the algorithm returns `inconsistent' in line~\ref{line:inA}, we assert that there is also a negative cycle in $G'$. Indeed, suppose the algorithm reports this in the $i$-th iteration (i.e., immediately after $e_{i-1}=(x_{i-1},y_{i-1})$ is extracted) for some $0<i \leq k$ and when it examines the edge $(x,y)$ with $x$ settled in the same iteration and $y$ remains unsettled.  Let $\pi$ be the path from $s$ to $x$ in $T_{i}$, the spanning tree obtained by replacing the parent of $y_{i-1}$ with $x_{i-1}$. By $d_{T_{i}}(x)+\omega(x,y)<\distance_{G}(y)$ (line~\ref{line:20-}) and $d_{T_{i}}(x)=\length_{G'}(\pi)$, we know $(x_0,y_0)$ is in $\pi$.  Because $\pi=\pi[s,x_0]+(x_0,y_0)+\pi[y_0,x]$ is a tree path, both $\pi[s,x_0]$ and $\pi[y_0,x]$ are simple paths that do not contain $e_0$. This implies  $\length_{G'}(\pi[s,x_0]) = \length_G(\pi[s,x_0])$ and $\length_{G'}(\pi[y_0,x])=\length_{G}(\pi[y_0,x])$. Similarly, let $\pi_{s,x_0}$ be the tree path from $s$ to $x_0$ in $T$. Then  $\pi_{s,x_0}[y,x_0]$ is also a simple path that does not contain $e_0$. Note that,  before $(x,y)$ is examined, no ancestor of $x_0$ (including $x_0$ itself) is detected to have the property as stated in line~\ref{line:20-}. This suggests that the tree path $\pi_{s,x_0}$ in $T$ is not changed before $(x,y)$ is examined. In particular, we have  $\pi[s,x_0] = \pi_{s,x_0}$.  Now, because $\length_{G'}(\pi)+\omega(x,y)<\distance_{G}(y)=\length_G(\pi_{s,x_0}[s,y])$ and
\begin{align*} 
& \length_{G'}(\pi)+\omega(x,y) \\
					=\ & \length_{G'}(\pi[s,x_0])+ \omega'(x_0,y_0) + \length_{G'}(\pi[y_0,x]) + \omega(x,y)\\
					=\ & \length_G(\pi[s,x_0])+ \omega'(x_0,y_0) + \length_{G}(\pi[y_0,x]) + \omega(x,y) \\
					=\ & \length_G(\pi_{s,x_0})+ \omega'(x_0,y_0) + \length_{G}(\pi[y_0,x]) + \omega(x,y) \\
					=\ & \length_G(\pi_{s,x_0}[s,y])+\length_G(\pi_{s,x_0}[y,x_0])+ \omega'(x_0,y_0) + \length_{G}(\pi[y_0,x]) + \omega(x,y), 
\end{align*}  
we have $\length_G(\pi_{s,x_0}[y,x_0])+ \omega'(x_0,y_0) + \length_{G}(\pi[y_0,x]) + \omega(x,y)  < 0$ and, thus, $\pi_{s,x_0}[y,x_0] +(x_0,y_0)+ \pi[y_0,x] +(x,y)$ is a negative cycle. This shows that if the algorithm reports inconsistency in line~\ref{line:inA} then $G'$ does have a negative cycle.

Lastly, if $G'$ has no negative cycle, then by Proposition~\ref{prop:Vi-} the output spanning tree is indeed an SPT of $G'$.
\end{proof}

We next analyse the computational complexity of the algorithm in terms of the output changes. As in the incremental case, we say a vertex $v$ is \emph{affected} if either its distances from $s$ or its parents in $T$ and $T'$ are different, and say $v$ is \emph{strongly affected} if the parents of $v$ in $T$ and $T'$ are different or when $v=y_0$ is enqueued in $Q$.  In the decremental case, if there are $k$ edges extracted, then the set of affected vertices is $V\setminus X_{k+1} = \bigcup_{i=1}^k X_i$.   Note that every strongly affected vertex has a smaller distance from $s$ in $T'$ than in $T$. This suggests that the set of affected vertices is, unlike the incremental case, independent of the input SPT $T$.

\begin{theorem} \label{thm:comp-}
The time complexity of Algorithm~\ref{alg:-} is $O(m_0+n_0 \log n_0)$ if $Q$ is implemented as a relaxed heap, where $n_0$ is the number of affected vertices, $m_0$ is the number of edges in $G$ whose tails are affected vertices. 
\end{theorem}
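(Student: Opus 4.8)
\noindent\emph{Proof proposal.}
The plan is to reproduce the bookkeeping of the proof of Theorem~\ref{thm:comp+} almost verbatim, with the set $\downarrow y_0$ used there replaced by the affected set $V\setminus X_{k+1}=\bigcup_{i=1}^{k}X_i$ of Proposition~\ref{prop:Vi-}; the one genuinely new item is the inconsistency test against $\uparrow x_0$, which I expect to be the main obstacle. For the edge work: the only edges ever inspected are those reached by the loop on line~\ref{line:foreach(xy)}, whose tail $x$ has just been settled, so $x\in X_i$ for some $1\le i\le k$ and hence $x$ is affected; since the $X_i$ are pairwise disjoint and each vertex is settled in exactly one iteration, each such edge is inspected at most once, giving at most $m_0$ inspections. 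The per-vertex work over the settled vertices --- the updates $D(x)$ on line~\ref{line:alg-16}, the \textsf{REMOVE}s on line~\ref{line:alg-17}, and one re-parenting on line~\ref{line:alg-12} per iteration --- totals $O(\sum_{i=1}^{k}|X_i|)=O(n_0)$, and the initialisation on lines~1--\ref{line:alg-9} is $O(1)$.

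Next I would do the heap accounting exactly as in Theorem~\ref{thm:comp+}. Let $n_s\le n_0$ be the number of \textsf{EXTRACTMIN}s, equivalently the number of strongly affected vertices ($y_0$ included); the while loop runs $n_s$ iterations, $Q$ only ever holds not-yet-settled affected vertices so $|Q|\le n_0$, \textsf{ENQUEUE} is called at most $m_0+1$ times (once per improving inspected edge on line~\ref{line:alg-25}, plus the single initial enqueue of $y_0$ on line~\ref{line:alg-9}), and \textsf{REMOVE} is called at most $n_0-n_s$ times. With $Q$ a relaxed heap \cite{DriscollGST88}, insertions and key-decreases cost $O(1)$ while extract-mins and removes cost $O(\log n_0)$, so the queue contributes $O(m_0+n_s\log n_0+(n_0-n_s)\log n_0)=O(m_0+n_0\log n_0)$; combined with the edge work this is the claimed bound.

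The delicate point --- and the step I expect to take the most care --- is the cost of the tests ``$y_0\in\uparrow x_0$'' and ``$y\in\uparrow x_0$'' on lines~\ref{line:y_0_inA} and~\ref{line:inA}: materialising $\uparrow x_0$ costs $O(|\uparrow x_0|)$, which is not within the budget, since the ancestors of $x_0$ are normally unaffected and $\uparrow x_0$ can have $\Theta(n)$ vertices even when $n_0=m_0=O(1)$. The plan is to avoid building $\uparrow x_0$ and instead to detect inconsistency during the subtree traversals the algorithm already performs: in the iteration consolidating the current subtree $\downarrow y_Q$ (the set $X_i$ traversed on lines~\ref{line:alg-16}--\ref{line:alg-17}), report \textsf{inconsistent} iff $x_0$ occurs in $\downarrow y_Q$. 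I would then show, using the stability of the path $\pi_{s,x_0}$ until an inconsistency is reported (established inside the proof of Theorem~\ref{thm4}), that this rule fires precisely when some current ancestor of $x_0$, or $x_0$ itself, would be re-parented with a strictly smaller distance, which by the argument of Theorem~\ref{thm4} (cf.\ Lemma~\ref{lem:negcyc}) is equivalent to $G'$ having a negative cycle; hence the reorganised test is sound and complete like the original and costs only $O(\sum_i|X_i|)=O(n_0)$ extra. Matching ``$x_0\in\downarrow y_Q$'' against the precise ancestor conditions of lines~\ref{line:y_0_inA}--\ref{line:inA} under the re-parentings carried out so far is the part that will need to be done carefully; in the inconsistent case, where $n_0$ and $m_0$ are undefined, the same analysis bounds the work done before detection by the number of vertices and edges actually touched, which is $O(m+n\log n)$.
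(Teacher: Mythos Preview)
Your heap and edge-inspection bookkeeping is exactly what the paper intends: its own proof consists entirely of the sentence ``The proof is analogous to that for Algorithm~\ref{alg:+}.'' On that part you and the paper coincide.

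Where you go beyond the paper is the $\uparrow x_0$ issue, and you are right to flag it. Line~4 of Algorithm~\ref{alg:-} materialises $A=\uparrow x_0$, which costs $\Theta(\lvert\uparrow x_0\rvert)$, and $\lvert\uparrow x_0\rvert$ can be $\Theta(n)$ while $n_0$ and $m_0$ are bounded (take $T$ a long path ending at $x_0$ with $y_0$ a leaf). The paper's one-line proof simply overlooks this; as literally written, the algorithm does not meet the stated bound, and you have correctly spotted that.

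Your proposed remedy --- drop the precomputation of $A$ and instead, during the subtree traversal already performed on lines~\ref{line:alg-16}--\ref{line:alg-17}, report \textsf{inconsistent} when $x_0$ appears in $\downarrow y_Q$ --- is the natural fix and is sound. Two points worth tightening when you write it out: (i) perform the check before the re-parenting on line~\ref{line:alg-12}, since re-parenting an ancestor of $x_0$ under a descendant of $y_0$ makes the parent pointers cyclic and ``the current tree'' is no longer a tree; your traversal would still encounter $x_0$, but it is cleaner to avoid the degenerate structure. (ii) The equivalence ``$x_0\in\downarrow y_Q$ iff $y_Q\in A$'' uses the invariant that $\pi_{s,x_0}$ has not yet been altered, which you correctly cite from the proof of Theorem~\ref{thm4}; for completeness in the consistent case, observe that any simple $s$--$x_0$ path avoids $e_0$, so $\distance_{G'}(x_0)=\distance_G(x_0)$, hence $x_0\in X_{k+1}$ and the new check never misfires. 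With these, your detection is equivalent to lines~\ref{line:y_0_inA} and~\ref{line:inA} up to timing (it fires at extraction rather than at enqueue), and its cost is absorbed into the $O(n_0)$ traversal already charged. Strictly speaking you are then proving the bound for a slightly amended algorithm, which is a fair reading of the theorem's intent and more than the paper itself supplies.
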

\begin{proof}
The proof is analogous to that for Algorithm~\ref{alg:+}. 
\end{proof}

Similar to the incremental case, we can show that the output spanning tree $T'$ can also be modified in $O(n_s)$ time into an SPT of $G'$ which has minimal edge changes, where $n_s$ is the number of strongly affected vertices.
\begin{theorem} \label{thm:minchange-}
Suppose that $G$ has no 0-cycles and $G'$ has no negative cycles. Let $T$ be the SPT of $G$ in the input of Algorithm~\ref{alg:-} and $T'$ any SPT of $G'$. Constructing  $T^*$ by merging linked $T'$-branches, then $T^*$ has minimal edge changes to $T$.
\end{theorem}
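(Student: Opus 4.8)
The plan is to mirror the proof of Theorem~\ref{thm:minchange} line by line, re-proving its two auxiliary ingredients in the decremental setting. The notions of $\widehat{T}$-branch, miniroot, root $\widehat{T}$-branch and linked branches are taken over verbatim from Definition~\ref{dfn:branches} (remove $e_0$ and every edge not in $T$ from $\widehat{T}$, and take connected components); that definition does not depend on the sign of the weight change. Write $\delta(x)=\distance_{G'}(x)-\distance_G(x)$. The role played in the incremental case by $0=\delta_0\le\delta_1\le\cdots\le\delta_k<\delta_{k+1}=\theta$ is now played by $\theta\le\delta_1\le\cdots\le\delta_k<0=\delta(s)$, which we have from Lemma~\ref{lem:delta_decrease} and Proposition~\ref{prop:Vi-}; since $G'$ has no negative cycle, all distances $\distance_{G'}(\cdot)$ are finite.

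First I would record the decremental analogue of Lemma~\ref{lem:constant}: if $\widehat{T}$ is an SPT of $G'$ and $\mathcal{B}$ is a $\widehat{T}$-branch with miniroot $u_0$, then $\delta(x)=\delta(u_0)$ for every $x\in\mathcal{B}$. The proof is exactly the one given for Lemma~\ref{lem:constant}: the tree path $\pi_{s,x}[u_0,x]$ lies inside $\mathcal{B}$, hence avoids $e_0$, so by Lemma~\ref{lem:path_length} it has the same length in $G$ and in $G'$; combined with $\distance_{G'}(u_0)=\distance_G(u_0)+\delta(u_0)$ and the fact that subpaths of a shortest path of $G$ are shortest, this gives $\distance_{G'}(x)=\distance_G(x)+\delta(u_0)$. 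Next I would establish the decremental analogue of Proposition~\ref{lem:cbranches}: if $\mathcal{B}_1,\mathcal{B}_2$ are linked $\widehat{T}$-branches with miniroots $x_1,x_2$, common $\delta$ value $\delta$ with $\theta<\delta<0$, and $(y_1,y_2)\in T$ with $y_i\in\mathcal{B}_i$, then $x_2=y_2$ and $x_1$ is a $T$-ancestor of $x_2$ (this part is word for word the incremental one and uses no sign information); and if, in addition, $x_2$ is an ancestor of $x_1$ in $\widehat{T}$, then $G$ has a $0$-cycle. For the last claim one argues that $e_0$ lies on neither the $T$-path $\pi_1$ from $x_1$ to $x_2$ nor the $\widehat{T}$-path $\pi_2$ from $x_2$ to $x_1$, so that $\length_{G'}(\pi_j)=\length_G(\pi_j)$ by Lemma~\ref{lem:path_length}, and then the identities $\distance_G(x_2)=\distance_G(x_1)+\length_G(\pi_1)$, $\distance_{G'}(x_1)=\distance_{G'}(x_2)+\length_{G'}(\pi_2)$ and $\distance_{G'}(x_i)=\distance_G(x_i)+\delta$ force $\length_G(\pi_1)+\length_G(\pi_2)=0$, i.e.\ $\pi_1+\pi_2$ is a $0$-cycle of $G$. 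For $e_0\notin\pi_1$: if $e_0\in T$ then any $s$--$x_0$ path through $e_0$ revisits $x_0$, so $\delta(x_0)=0$ as $G'$ has no negative cycle, whence $x_1$, which has $\delta(x_1)<0$, cannot be a $T$-ancestor of $x_0$; if $e_0\notin T$ this is trivial. The claim $e_0\notin\pi_2$ has to be argued by a dedicated case analysis using $\delta(x_0)=0$, $\delta(y_0)=\delta_1$ and $\distance_{G'}(y_0)=\distance_G(y_0)+\delta_1$, distinguishing $e_0\in T$ from $e_0\notin T$; this replaces the one-line incremental argument, which relied on $\delta>0$.

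Granting these two ingredients, the proof of Theorem~\ref{thm:minchange-} copies that of Theorem~\ref{thm:minchange}. One normalises $T'$ (without loss of generality) so that it has the same root branch as $T$ and at most one branch with $\delta$ value $0$. For linked $T'$-branches $\mathcal{B}_1\ni u_1$, $\mathcal{B}_2\ni u_2$ with $(u_1,u_2)\in T$ and $\delta(u_1)=\delta(u_2)=\delta$, one gets that $u_2$ is the miniroot of $\mathcal{B}_2$, and then, for every $u\in\mathcal{B}_2$, evaluates in $G'$ the candidate path $\pi_1+(u_1,u_2)+\pi_2$ — with $\pi_1$ the $T'$-path from $s$ to $u_1$ and $\pi_2$ the $T$-path from $u_2$ to $u$, which avoids $e_0$ — obtaining length $\distance_{G'}(u_1)+\omega(u_1,u_2)+\length_G(\pi_2)=(\distance_G(u_1)+\delta)+\omega(u_1,u_2)+(\distance_G(u)-\distance_G(u_2))=\distance_G(u)+\delta=\distance_{G'}(u)$, using $\distance_G(u_2)=\distance_G(u_1)+\omega(u_1,u_2)$ because $(u_1,u_2)\in T$, and $\delta(u)=\delta$ by the analogue of Lemma~\ref{lem:constant}. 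So re-attaching $u_2$ under $u_1$ yields again an SPT of $G'$; it is acyclic because a cycle would force $x_2$ to be a $T'$-ancestor of $x_1$, which the analogue of Proposition~\ref{lem:cbranches} excludes since $G$ has no $0$-cycle. Iterating produces $T^*$, an SPT of $G'$. For minimality one takes an SPT $T^\dagger$ of $G'$ with the fewest edge changes to $T$: if a $T^\dagger$-branch met two $T^*$-branches, an internal edge of it would join two vertices of equal $\delta$ value (again by the analogue of Lemma~\ref{lem:constant}) lying in distinct $T^*$-branches, i.e.\ two linked $T^*$-branches that should have been merged, a contradiction; hence each $T^\dagger$-branch lies in a single $T^*$-branch, and since $T^\dagger$ has no more branches than $T^*$, the branch partitions coincide and $T^*$ realises the minimum.

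I expect the main obstacle to be precisely the decremental version of Proposition~\ref{lem:cbranches}, and within it the verification that $e_0$ lies on neither $\pi_1$ nor $\pi_2$: in the incremental case this dropped out of $\delta>0$ versus $\delta(x_0)=0$, but with $\delta<0$ the relevant inequalities reverse, so one must track carefully whether $e_0$ belongs to $T$, to $\widehat{T}$, or to neither, and lean on the exact value $\delta_1$ of $\delta(y_0)$. Everything else is a mechanical transcription of the positive-increment arguments.
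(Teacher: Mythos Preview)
Your proposal is correct and matches the paper's approach exactly: the paper's own proof of Theorem~\ref{thm:minchange-} is the single sentence ``The proof is analogous to that for Algorithm~\ref{alg:+}'', so mirroring the proof of Theorem~\ref{thm:minchange} (together with decremental analogues of Lemma~\ref{lem:constant} and Proposition~\ref{lem:cbranches}) is precisely what is intended. You in fact supply considerably more detail than the paper does, and you correctly isolate the one genuinely new point---showing $e_0\notin\pi_2$ in the analogue of Proposition~\ref{lem:cbranches}, where the incremental argument via $\delta>0$ no longer applies---as the place requiring extra care.
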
  
\begin{proof}
The proof is analogous to that for Algorithm~\ref{alg:+}. 
\end{proof}
As the incremental case, the revised SPT $T^*$ can be obtained by an analogous auxiliary simple procedure as described in Algorithm~\ref{alg:+r}, which introduces only $O(n_s)$ extra operations. Here we need the result that $\delta_i\leq \delta_{i+1}$ for $1\leq i\leq k$, obtained in Lemma~\ref{lem:delta_decrease}, which guarantees that the `gain' in searching shorter paths becomes smaller and smaller.

\section{Conclusion}
In this paper we showed that how the well-known {\sf Ball-String} algorithm can be adapted to solve the dynamical SPT problem for directed graphs with arbitrary real weights. While the adapted incremental algorithm is almost the same as the original one in \cite{NarvaezST01}, the adapted decremental algorithm introduces procedures for checking if negative cycles exist. We rigorously proved the correctness of the adapted algorithms and analysed their computational complexities in terms of the output changes. Unlike the algorithms of Ramalingam and Reps \cite{RamaR96} and Frigioni et al. \cite{FrigioniMN03}, the semi-dynamic {\sf Ball-String} algorithms discussed in this paper can deal with directed graphs with zero-length cycles without extra efforts and are more \emph{efficient} because they extract fewer vertices from the priority queue and always consolidate the whole branch when a vertex is extracted. Furthermore, we showed by an example that the output SPT of the {\sf Ball-String} algorithm does not necessarily have minimal edge changes and then proposed a very general method for transforming any SPT of the revised graph into one with minimal edge changes in time linear in the number of extractions.

Note that we only consider the case when only one edge is revised in this paper. When a small number of edges are revised simultaneously, we may compute the SPT of the revised graph step by step, where in each step we consider only one edge update.  It is worth noting that,  when multiple edges are decreased simultaneously, the {\sf Ball-String} algorithm does not always output the correct SPT with minimal edge changes even when only positive weights present \cite{NarvaezST01}. If this is the case, we can transform the output SPT into an SPT with minimal edge changes using the method described in our Algorithm~\ref{alg:+r}.

\bibliographystyle{plain}

\bibliography{sssp}

\end{document}